\documentclass[twopages,11pt]{article}
\usepackage{amsmath}
\usepackage{amsthm}
\usepackage{amssymb}
\usepackage{amsfonts}
\usepackage{mathrsfs}
\usepackage{amscd}
\usepackage[british]{babel}
\usepackage{graphicx,subfigure}
\usepackage{times}
\usepackage{url}
\usepackage{psfrag}
\usepackage{color}

\setlength{\voffset}{-2cm}
\oddsidemargin=-6mm   
\evensidemargin=-6mm  
\textwidth=180mm \textheight=235mm

\def\sform{\mbox{\boldmath $\omega$}}
\newcommand{\Dif}{\mathrm{D}}
\newcommand{\Lie}[1]{\mathrm{L}_{#1}}
\newcommand{\ep}{\varepsilon}

\newcommand{\RR}{\mathbb R}

\newcommand{\TT}{\mathbb T}
\newcommand{\cI}{\mathcal I}
\newcommand{\cJ}{\mathcal J}
\newcommand{\cT}{\mathcal T}

\newcommand{\cD}{\mathcal D}
\newcommand{\cF}{\mathcal F}
\newcommand{\cP}{\mathcal P}
\newcommand{\cL}{\mathcal L}

\newcommand{\cO}{\mathcal O}
\newcommand{\cR}{\mathcal R}
\newcommand{\cZ}{\mathcal Z}
\newcommand{\ZZ}{\mathbb Z}

\newcommand{\rme}{\mathrm{e}}

\newcommand{\srm}{\mathrm{s}}
\newcommand{\urm}{\mathrm{u}}
\DeclareMathOperator\curl{curl}

\theoremstyle{plain}
\newtheorem{definition}{Definition}[section]
\newtheorem{theorem}[definition]{Theorem}
\newtheorem{lemma}[definition]{Lemma}

\newtheorem{proposition}[definition]{Proposition}
\newtheorem{corollary}[definition]{Corollary}

\newtheorem{remark}[definition]{Remark}

\title{Arnold diffusion of charged particles in ABC magnetic fields}

\author{
Alejandro Luque\thanks{luque@icmat.es}\\ 
Instituto de Ciencias Matem\'aticas\\
Consejo Superior de Investigaciones Cient\'{\i}ficas \\
28049 Madrid (Spain).
\and 
Daniel Peralta-Salas\thanks{dperalta@icmat.es}\\ 
Instituto de Ciencias Matem\'aticas\\
Consejo Superior de Investigaciones Cient\'{\i}ficas \\
28049 Madrid (Spain).
}
\pagestyle{myheadings}

\begin{document}
\maketitle

\thispagestyle{empty}

\begin{abstract}
We prove the existence of diffusing solutions in the motion of a charged particle in the
presence of an ABC magnetic field. The equations of motion are modeled by a 3DOF Hamiltonian
system depending on two parameters. For small values of these parameters, we obtain
a normally hyperbolic invariant manifold and we apply the so-called geometric
methods for a priori unstable systems developed by A.~Delshams, R.~de la Llave, and T.M.~Seara.
We characterize explicitly sufficient conditions for the existence of a transition
chain of invariant tori having heteroclinic connections, thus obtaining global instability (Arnold diffusion). We also check the obtained conditions in a computer assisted proof. ABC magnetic fields are the simplest force-free type solutions of the magnetohydrodynamics equations with periodic boundary conditions, so our results are of potential interest in the study of the motion of plasma charged particles in a tokamak. 

\bigskip 


\noindent\emph{Keywords}: Motion of charges in magnetic fields,
Hamiltonian dynamical systems, Arnold diffusion, global instability,
heteroclinic connections.

\end{abstract}

\newpage


\markboth{}{A. Luque and D. Peralta-Salas} 
\section{Introduction}\label{S:intro}

The study of the motion of a charged particle in a magnetic field
is a classical subject
in several areas of physics, such as condensed matter
theory, accelerator physics, magnetobiology and plasma physics. 
The equation of motion of a (non-relativistic) unit-mass, unit-charge particle at the position $q\in \RR^3$ in the presence of a magnetic field $\bf B$ is given by the Newton-Lorentz law
\begin{equation}\label{firsteq}
\ddot{q} = \dot{q} \times {\bf B}(q)\,,
\end{equation}
where the dot over $q$ denotes, as usual, the time derivative, and $\times$ stands for
the standard vector product in $\RR^3$.

An important observation
is that Eq.~\eqref{firsteq} can be written equivalently in a Hamiltonian way whenever there is a globally defined vector potential $\bf A$ such that ${\bf B}= \curl\, {\bf A}$. If this is the case, the Hamiltonian function is
\begin{equation*}
H(q,p)=\frac{1}{2} (p-{\bf A}(q))^2\,.
\end{equation*}

In this paper we are interested in the motion of charges in ABC
magnetic fields. These fields arise in the theory of magnetic dynamos
(see~\cite{Gilbert} and references therein) and were
introduced independently by Arnold~\cite{Ar65} and
Childress~\cite{Ch70} in the 1960's. The well-known family of ABC magnetic fields 
depends on three real parameters, $A$, $B$ and $C$, and reads in Cartesian coordinates $q=(x,y,z)$ as
\begin{equation}\label{eq:mag:B:abc}
{\bf B}_{ABC} = (A \sin z + C \cos y, B \sin x + A \cos z, C \sin y + B \cos x)\,.
\end{equation}

ABC magnetic fields are stationary solutions of the magnetohydrodynamics
equations of force-free type, 
thus implying that the field exerts no force
on the current distribution
generating it. Indeed, it is straightforward to check that ${\bf B}_{ABC}$ is divergence-free
and force-free because $\curl {\bf B}_{ABC}={\bf B}_{ABC}$, and so the ABC field
admits the globally defined vector potential ${\bf A}_{ABC} = 
{\bf B}_{ABC}$. ABC magnetic fields are minimizers of the energy functional $\int \bf B^2$ acting on the space of divergence-free fields of fixed helicity.
 
Since the dependence of the ABC magnetic field and its vector
potential with the variables $x,y,z$ is $2\pi$-periodic, it is
customary to consider that these fields are defined in the $3$-torus
$\TT^3=\RR^3 /(2\pi \ZZ)^3$ so that $(x,y,z)\in\TT^3$.
By rescaling and reordering the space variables and the time,
all the non-trivial cases can be reduced to $A=1 \geq B \geq C \geq 0$, so we shall assume it in what follows.
The Newton-Lorentz equation of motion~\eqref{firsteq} for the ABC magnetic field can be described as a 3DOF Hamiltonian system defined in the phase space $\TT^3\times \RR^3 \ni (x,y,z,p_x,p_y,p_z)$ by the Hamiltonian function:
\begin{equation}\label{eq:habc}
H=\frac12(p_x-C\cos y-\sin z)^2+\frac12(p_y-B\sin x-\cos z)^2+\frac12(p_z-B\cos x-C\sin y)^2\, .
\end{equation}

Force free fields are very important in
applications and model diverse physical systems, as stellar
atmospheres~\cite{ChWo58}, the solar corona~\cite{Fl04} and relaxed
states of toroidal 
plasmas~\cite{Ta86}.
Moreover, the motion of a charge in an ABC field can be interpreted as
a model for the motion of plasma charged particles in a tokamak.
A wide examination of system~\eqref{eq:habc} was recently presented in~\cite{LP13},
proving the existence of confinement regions of charges near some magnetic lines
and also that the problem gives rise to non-integrability and chaotic motions.
In this study we go one step further and obtain global instability, i.e. Arnold diffusion.

Characterizing global instabilities in Hamiltonian systems
is a relevant problem that has called the attention of mathematicians,
physicists and engineers. For example, in the context of beam physics,
designers of accelerators or plasma confinement devices
are interested in the characterization of these instabilities in
order to avoid them as much as possible (e.g. in the confinement of hot plasmas for fusion power generation, diffusion is a very relevant phenomenon because of the harmful plasma-wall interaction).  Global instability deals
with the question of whether Hamiltonian perturbations of a regular integrable
system accumulate over time, giving rise to a long term effect,
or whether they average out. This problem was first formulated
by Arnold. Indeed, in the celebrated paper~\cite{Arnold64},
Arnold constructed a concrete example, suitably and cleverly chosen,
such that some trajectories can jump around KAM tori thus obtaining
diffusion (after~\cite{Arnold64} this problem is known as \emph{Arnold diffusion}).
These diffusing orbits were constructed using a mechanism of
transition chains. It consists in obtaining heteroclinic intersections between
the stable and unstable manifolds of a sequence of whiskered invariant tori.

In the last decades there has been a significant advance in the understanding of diffusion
and, following~\cite{CG94}, the studies are classified in two different groups:
the \emph{a priori unstable} case and the \emph{a priori stable case}.
Arnold diffusion in a priori unstable systems (where the unperturbed system has hyperbolic properties
of some kind) has been approached
using geometric methods in~\cite{DLS06, DLS13, DH09, GT08}, the separatrix map in~\cite{Tre04,Tre12},
topological methods in~\cite{GL06,GR13} and variational methods in~\cite{Bernard08,CY04}.
A combination of topological and geometric methods has been 
recently
presented in~\cite{GLS14}. The more difficult case of
a priori stable systems (where the unperturbed system is foliated
by Lagrangian invariant tori) is less understood, but significant
advances have been presented 
along the last few years
in~\cite{BKZ11,
Che13,
KS12,
KZ12,
KZ14a,
KZ14b,
Mat03,
Zha11}.

Our aim in this paper is to prove the existence of Arnold diffusion in the dynamics of a charged particle in an ABC magnetic field, which is modeled by the Hamiltonian system~\eqref{eq:habc}.
If $B=C=0$, we obtain an integrable Hamiltonian system $H_0$ having a
normally hyperbolic invariant manifold (NHIM) $\Lambda_0$ foliated by
whiskered invariant tori (see
details in Section~\ref{ssec:geom:unper}).
Then, the problem considered in this paper falls into the a priori unstable setting.
It is worth mentioning that one of the main difficulties in the study of a priori
unstable systems was the so-called \emph{large gap problem} (see~\cite{Moeckel96}).
This problem arises because a generic perturbation of size $\ep$ creates gaps at most of size $\sqrt{\ep}$
between the persisting primary KAM tori and, in principle,
only orbits separated an amount $\ep$ could be connected by heteroclinic connections between
invariant tori. 
This issue was solved in the previously mentioned
references, using different tools for the study of Arnold diffusion.
We observe that recent mechanisms of diffusion have
been proposed in order to avoid big gaps using very little
information of the dynamics restricted to the NHIM (see~\cite{CapinskiGL,DKRS}).
Here, we follow the geometric methods developed in~\cite{DLS06,DLS13} in order to
prove the existence of Arnold diffusion in the Hamiltonian~\eqref{eq:habc} for small
values of $B$ and $C$.
Concretely, we prove the following theorem, which establishes sufficient conditions
for the existence of a transition chain between 
whiskered
invariant tori, thus producing
large unstable motions in the perturbed system:
\\

\noindent {\bf Main Theorem (informal statement).} 
\emph{Let us consider the Hamiltonian~\eqref{eq:habc}
with $B=\ep \hat B \neq 0$ and $C=\ep \hat C \neq 0$,
and a non-empty set $\cI=[a_1,b_1] \times [a_2, b_2]$ for given
(positive) values of $a_i,b_i$.
Then, under some explicit non-degeneracy and transversality conditions,
if $|\ep|$ is small enough, the ABC system exhibits Arnold diffusion in $\cI$, i.e.
there exists a trajectory of~\eqref{eq:habc}
connecting two arbitrary values of $(p_x,p_y)$ in 
the interior of $\cI$.}
\\

A precise statement of this theorem is given in
Theorem~\ref{teo:diffusion:ABC} (Section~\ref{sec:setting}), after
a detailed discussion of the unperturbed ABC system.
Moreover, we implement
the  non-degeneracy and transversality conditions
included in the Main Theorem
in a computer assisted proof (CAP) in Section~\ref{sec:explicit}.
As a consequence, we obtain an open set of initial conditions
in phase space where we can construct a transition chain. 
For example,
we obtain the following result which serves as an illustration:

\begin{corollary}\label{cor:diffusion:ABC:informal}
Let us consider 
Hamiltonian~\eqref{eq:habc}
with 
$\hat B= 10$ and $\hat C= 0.1$. 
Then, the non-degeneracy and transversality conditions of the Main Theorem
hold in the set $\cI = [0.1, 0.9]\times [0.5, 0.9]$.
Therefore, for $|\ep|$ small enough,
there exists a trajectory of~\eqref{eq:habc}
connecting two arbitrary values of $(p_x,p_y)$ in
$(0.1, 0.9)\times (0.5, 0.9)$.
\end{corollary}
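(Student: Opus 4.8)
The plan is to verify Corollary~\ref{cor:diffusion:ABC:informal} as a direct application of the Main Theorem (Theorem~\ref{teo:diffusion:ABC}) to the specific parameter choice $\hat B = 10$, $\hat C = 0.1$, and the box $\cI = [0.1,0.9]\times[0.5,0.9]$. Since the Main Theorem already reduces the existence of Arnold diffusion to a finite list of explicit non-degeneracy and transversality conditions expressed in terms of $\hat B$, $\hat C$ and $(p_x,p_y)\in\cI$, the corollary follows once these conditions are checked on $\cI$. The bulk of the work is therefore computational rather than conceptual, and the natural tool is a computer-assisted proof using interval arithmetic.

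Concretely, I would proceed as follows. First, recall from Section~\ref{sec:setting} the precise analytic expressions of the quantities entering the hypotheses of Theorem~\ref{teo:diffusion:ABC}: the reduced Melnikov-type potential (the Poincar\'e--Melnikov function associated with the splitting of the whiskers of $\Lambda_0$), its partial derivatives with respect to the ``phase'' variables, and the Hessian-type determinant controlling the non-degeneracy of its critical points, together with whatever twist/isoenergetic condition on the inner dynamics of the NHIM is required. Second, I would substitute $\hat B = 10$, $\hat C = 0.1$ and, keeping $(p_x,p_y)$ as interval variables ranging over $\cI$, evaluate each of these expressions in interval arithmetic (the integrals defining the Melnikov function being either known in closed form — typically in terms of hyperbolic secants via residues — or rigorously enclosed), thereby producing guaranteed enclosures that exclude zero where non-degeneracy is demanded and confirm the sign/transversality conditions uniformly over the box. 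Third, since $\cI$ is a fixed compact rectangle, a routine adaptive subdivision (bisecting along $p_x$ and $p_y$) makes the enclosures as tight as needed, so the verification terminates after finitely many subboxes. Finally, invoking Theorem~\ref{teo:diffusion:ABC} with these verified hypotheses yields, for $|\ep|$ sufficiently small, a transition chain of whiskered tori spanning the interior of $\cI$ and hence a trajectory of~\eqref{eq:habc} connecting any two prescribed values of $(p_x,p_y)$ in $(0.1,0.9)\times(0.5,0.9)$.

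The main obstacle I anticipate is not mathematical depth but the interplay between the sharpness of the interval enclosures and the size of the parameter $\hat B = 10$: a large value of $\hat B$ tends to make the perturbation terms (and hence the Melnikov function and its derivatives) comparatively large and oscillatory, so naive interval evaluation may overestimate and fail to certify strict inequalities near the boundary of $\cI$ without aggressive subdivision. Controlling this overestimation — choosing an efficient representation of the Melnikov integrals, handling the wrapping effect in the composed expressions, and keeping the number of subboxes manageable — is where the real care lies; these technical points are carried out in the computer-assisted proof of Section~\ref{sec:explicit}, to which we refer for the implementation details.
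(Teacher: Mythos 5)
Your proposal matches the paper's strategy exactly: the corollary is proved by checking hypotheses $\mathbf{A}_1$--$\mathbf{A}_3$ of Theorem~\ref{teo:diffusion:ABC} over $\cI=[0.1,0.9]\times[0.5,0.9]$ via a computer-assisted proof with interval arithmetic (tail bounds as in Lemma~\ref{lem:control:M}, Simpson's rule for the finite Melnikov integrals, a bisection-type enclosure of $\tau^*$ and its derivatives, and a uniform $10^{-4}\times10^{-4}$ subdivision of $\cI$), which is precisely the implementation laid out in Section~\ref{sec:explicit}. One minor correction to your anticipated obstacle: the paper observes that the verification cost grows when $\hat B$ and $\hat C$ are made \emph{closer}, so the widely separated choice $\hat B=10$, $\hat C=0.1$ is actually the computationally easy regime, not the hard one.
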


We remark that the choice $\hat B= 10$ and $\hat C=0.1$
is arbitrary. Analogous results can be obtained for any
other choice of parameters. The computational cost to verify the
hypotheses for a fixed set $\cI$ increases when the difference
between $\hat B$ and $\hat C$ is reduced. It is worth mentioning
that if we take ``narrow'' sets of the form
$\cI=[a_1,a_1+\delta] \times [a_2,b_2]$
or
$\cI=[a_1,b_1] \times [a_2,a_2+\delta]$,
with $\delta$ small, then the computational cost of the CAP
is reduced significantly. In this case, it is also possible to check
the conditions for open sets of parameters $\hat B$ and $\hat C$.
We have produced analogous results
to Corollary~\ref{cor:diffusion:ABC:informal} and we have
not found obstructions to diffusion in any case.

To the best of our knowledge, the Main Theorem
and Corollary~\ref{cor:diffusion:ABC:informal}
are the first rigorous results on the existence of diffusing orbits
in the motion of charges in magnetic fields, even though physicists
have been aware of this phenomenon for a long time 
(cf.~\cite{WBM92,ZZSUC86}) and the effect is sometimes known as drift motion
in the physics literature.
Of course, we want to mention other significant problems
where Arnold diffusion have been characterized. In particular, we can find
remarkable contributions in the context
of celestial mechanics:
diffusion along 
mean motion resonances
in the restricted planar three-body problem~\cite{FGKR13};
instability mechanism in a special configuration of the 5-body problem~\cite{Moeckel96,Zhe10};
transition chains of invariant tori around the point $L_2$ in the elliptic three body problem
as a perturbation of the circular problem~\cite{CZ11}, improved recently in~\cite{CapinskiGL};
instability around the point $L_1$ in the circular spatial restricted three-body problem,
focusing on
homoclinic trajectories~\cite{DGR};
instability in the elliptic restricted problem close to
the parabolic orbits of the Kepler problem between the comet and the Sun~\cite{DKRS}.
We observe that some parts of the arguments in~\cite{CapinskiGL,CZ11,DGR,FGKR13}
are non-rigorous, but are strongly backed by convincing numerical computations.
It is also worth mentioning the example discussed in~\cite{DH11},
where the geometric mechanism for diffusion introduced in~\cite{DH09}
is illustrated in a representative model. 
The model simplifies some of the hypotheses,
thus saving a significant amount of computations, so
they can present the geometric mechanism of diffusion in a clear understandable way.
In the system~\eqref{eq:habc} studied in this paper,
some of these simplifications cannot be
used and we must 
perform some \emph{ad hoc} analysis and specific computations.
The reader interested in numerical studies is referred
to~\cite{GuzzoLF09}.

The mechanisms governing Arnold diffusion are very complex and there 
are still many questions to answer and many aspects to understand. As is
posed in~\cite{GR13},
it is  relevant  to  detect,  combine,  and  compare  different  mechanisms
of  diffusion  displayed  by  concrete  systems.
In this way,
Hamiltonian~\eqref{eq:habc}
can be an ideal framework to apply and compare different approaches
and methods in the literature (e.g. topological methods, variational techniques,
use of multiple scattering maps, etc). On the one hand, the ABC system is complicated
enough to contain all the difficulties
that are present in a general a priori unstable problem. 
On the other hand, the ABC system is explicit and simple enough to perform
analytic computations. Moreover, it is a problem that appears in a natural
way in physics.

The proof of the Main Theorem consists
in combining the internal dynamics on the NHIM with its outer (asymptotic)
dynamics, which is modeled by the \emph{scattering map}~\cite{DLS08}.
The procedure is divided in the following steps:

\begin{description}
\item [Characterization of the NHIM:]
The first step is to characterize the
perturbed NHIM $\Lambda_\ep$ and its stable
and unstable manifolds
(we summarize some basic concepts in Section~\ref{ssec:inner:approx}).
We pay special attention to describe explicitly
the geometric procedure that allows us to parameterize the NHIM
in a natural way, thus obtaining a suitable symplectic structure
on the NHIM (see Section~\ref{ssec:NHIM:symp}). 
The construction presented has special interest since
we give explicit formulas to use
the deformation theory introduced in~\cite{DLS08}.
To this end, we have to compute perturbatively a symplectic frame
associated to the manifold.
Explicit computations for the ABC system are detailed in Section~\ref{ssec:NHIM:ABC}.

\item [Invariant tori on the NHIM:]
To study the inner dynamics on the NHIM, where the so-called \emph{big gaps}
are present, we perform averaging theory of the vector field restricted
to the manifold. After choosing a suitable parameterization
in the previous step, we follow~\cite{DLS06,DLS13} \emph{mutatis mutandis}
in Section~\ref{sssec:aver}.
Explicit computations for the ABC system are detailed in Section~\ref{ssec:aver:ABC}.
In Proposition~\ref{prop:level:sets} we obtain an approximation of the level sets that characterize
the invariant objects inside the NHIM. 
In particular, we find a set of whiskered invariant tori (primary and secondary)
covering $\Lambda_\ep$ except for a set of measure $\cO(\ep^{3/2})$.

\item [Scattering map:] In Section~\ref{sec:outer} we describe the outer dynamics associated to our
problem. For the sake of completeness, in Section~\ref{ssec:poincare}
we summarize the construction of the Melnikov
potential that characterizes the intersections of the stable and unstable manifolds associated to the NHIM
(cf.~\cite{Tre02}).
In Section~\ref{ssec:scattering} we compute the \emph{scattering map} for the ABC
system.

\item [Combination of inner and outer dynamics:] The combination of both dynamics, obtaining explicit
transversality conditions for the existence of diffusion, is performed
in Section~\ref{sec:chains}. We remark that, since the unperturbed scattering map 
has a so-called phase shift, there is an additional term in the transversality conditions that is not present
in~\cite{DLS06,DLS13}.
In the domain where the conditions are satisfied, we construct a sequence
$\{\cT_i\}_{i=1}^\infty$ of whiskered tori  satisfying
$W^{\urm}_{\cT_i} \pitchfork W^{\srm}_{\cT_{i+1}}$,
that is, we construct a transition chain along $\Lambda_\ep$.
\end{description}

We remark again that the hypotheses in
the Main Theorem are explicit
and involve a series of standard,
but cumbersome, computations. 
First, we evaluate some integrals that depend on $(p_x,p_y)$
as parameters.
We solve
a one-dimensional non-linear equation that depends on
these integrals. 
We approximate the
derivatives with respect to parameters of the previous solution.
Finally, we evaluate several complicated formulas that depend
on the previous objects.
In Section~\ref{sec:explicit} we rigorously perform
these computations with the help of a computer.

\section{Setting of the problem and statement of the main theorem}\label{sec:setting}

In this Section we present a detailed description of the 
geometry of our problem and state a precise version of
the Main Theorem. More precisely, in Section~\ref{ssec:geom:unper} we
fully describe the motion of the unperturbed Hamiltonian system (Eq.~\eqref{eq:habc} with $B=C=0$),
thus characterizing a normally hyperbolic invariant manifold
with coincident stable and unstable invariant manifolds.
Then, in Section~\ref{ssec:mainresult} we provide explicit
sufficient conditions for the existence of Arnold
diffusion in the perturbed problem (Eq.~\eqref{eq:habc} with $B=\ep\hat B$, $C=\ep \hat C$).

\subsection{Geometric features of the unperturbed
problem}\label{ssec:geom:unper}

For $B=C=0$, the ABC magnetic field has the simple expression
$${\bf B}_{ABC}=(\sin z,\cos z,0)\,,$$
which implies that the field is linear on each toroidal surface
${z=z_0}$, periodic or quasi-periodic depending on the value of $\tan
z_0$. Concerning the equations of motion, the Hamiltonian function in
Eq.~\eqref{eq:habc} is given by
\begin{equation}\label{eq:sys:H0}
H_0=\frac12(p_x-\sin z)^2+\frac12(p_y-\cos z)^2+\frac12 p_z^2\,.
\end{equation}
The system of ODEs associated to~\eqref{eq:sys:H0} is
\[
\begin{array}{lll}
\dot x =p_x-\sin z\,, &\qquad & \dot p_x =0\,,\\
\dot y =p_y-\cos z\,, &\qquad & \dot p_y =0\,,\\
\dot z =p_z\,,        &\qquad & \dot p_z = p_x \cos z - p_y \sin z\,,
\end{array}
\]
so $p_x$ and $p_y$ are constants of the motion. There is no loss of generality in taking positive values of $p_x$ and $p_y$, so we shall assume it throughout the paper. In addition, we observe that the system $(z,p_z)$ is pendulum-like and has an effective potential
\[
V(z):=-p_x \sin z - p_y \cos z\,.
\]
Notice that this system
has a hyperbolic equilibrium at the point
\[
z^*:=\arctan \frac{p_x}{p_y}+\pi, \qquad p_z^*=0\,,
\]
and, since $p_x>0$ and $p_y>0$, we have the identities
\[
\sin z^* = \frac{-p_x}{\sqrt{p_x^2+p_y^2}}\,, \qquad 
\cos z^* = \frac{-p_y}{\sqrt{p_x^2+p_y^2}}\,.
\]
We denote the positive eigenvalue of the linearized equation at the hyperbolic equilibrium as
\begin{equation}\label{eq:lambda}
\lambda:=(p_x^2+p_y^2)^{1/4}\,,
\end{equation}
which allows us to write the constants of the motion as 
$p_x=\lambda^2 \sin \alpha$,
and
$p_y = \lambda^2 \cos \alpha$,
with $\alpha=\arctan (p_x/p_y)\in (0,\pi/2)$. With this notation, the pendulum-like equation in the variables $(z,p_z)$ reads as
\[
\ddot z= p_x \cos z - p_y \sin z = \lambda^2 \sin (\alpha-z)\,,
\]
thus obtaining that there is a homoclinic orbit connecting the equilibrium point given by
\begin{equation}\label{eqsep}
z^0(t)=4 \arctan \rme^{\lambda t} + z^*\,, \qquad
p^0_z(t)= \frac{2\lambda}{\cosh (\lambda t)}\,.
\end{equation}
It is straightforward to check that $z^0(t) \rightarrow z^*$ and $p_z^0(t)\rightarrow p_z^*=0$, exponentially with exponent $\lambda$, as $t\to\pm\infty$. There is a second homoclinic trajectory connecting the equilibrium point given by $\bar z^0(t)=-z^0(t)+2\alpha$ and $\bar p_z^0(t)=-p^0_z(t)$, but it will not be used in what follows.

The previous computations show that the Hamiltonian system $H_0$ has a $4$-dimensional normally hyperbolic invariant manifold
\[
\Lambda_0 := \{ (q,p) \in \TT^3 \times \RR^3\,:\, z=z^*, p_z=p_z^*\}\,, 
\]
which is foliated by $2$-dimensional invariant tori $\cT_{p_x,p_y}$ obtained
by fixing $p_x$ and $p_y$, i.e.
$\Lambda_0 = \bigcup_{p_x,p_y} \cT_{p_x,p_y}$. A direct computation shows that the dynamics on each invariant torus $\cT_{p_x,p_y}$ is linear with frequency vector $\omega=(\omega_1,\omega_2)$ given by
\begin{align}
\omega_1 := {} & p_x -\sin(z^*) = p_x(1+(p_x^2+p_y^2)^{-1/2}) \,, \label{eq:omega1}\\
\omega_2 := {} & p_y -\cos(z^*) = p_y(1+(p_x^2+p_y^2)^{-1/2}) \,. \label{eq:omega2}
\end{align}
The stable and unstable manifolds of $\Lambda_0$
are $5$-dimensional invariant sets defined by
\[
W^\srm(\Lambda_0)= W^\urm(\Lambda_0)=\{ (q,p) \in \TT^3 \times \RR^3\,:\,
z=z^0(\tau),
p_z=p_z^0(\tau), \tau\in \RR\}\,,
\]
so 
the set $W^\srm(\Lambda_0)$ (or $W^\urm(\Lambda_0)$) is the union of the stable (unstable) manifolds of the invariant tori $\cT_{p_x,p_y}$, i.e.
\[
W^\srm(\Lambda_0) = \bigcup_{p_x,p_y} W^\srm(\cT_{p_x,p_y})= W^\urm(\Lambda_0) = \bigcup_{p_x,p_y} W^\urm(\cT_{p_x,p_y})\,.
\]

In order to work with the invariant torus $\cT_{p_x,p_y}$ and its
whiskers $W^\srm(\cT_{p_x,p_y})=W^\urm(\cT_{p_x,p_y})$, we introduce
appropriate parameterizations. Indeed,
$\cT_{p_x,p_y}\subset \Lambda_0$ can be parameterized as
\[
u^*\equiv u^*(x,y)=(x,y,z^*,p_x,p_y,p_z^*)\,,
\]
where $p_x$ and $p_y$ are fixed and $(x,y)\in \TT^2$.
Moreover, the stable manifold $W^\srm(\cT_{p_x,p_y})$ 
is given by the set of points of the form
\begin{equation}\label{eq:param:u0}
u^0\equiv u^0(\tau,x,y)=(x+F_1(\tau),y+F_2(\tau),z^0(\tau),p_x,p_y,p_z^0(\tau))\,,
\end{equation}
where $\tau\in\RR$, $(x,y)\in\TT^2$, the functions $z^0$ and $p_z^0$ are defined in Eq.~\eqref{eqsep}, and
\[
F_1(\tau) := \sin(z^*) \tau - \int_0^\tau \sin(z^0(\sigma))
d\sigma\,, \qquad
F_2(\tau) := \cos(z^*) \tau - \int_0^\tau \cos(z^0(\sigma))
d\sigma\,.
\]
Finally, we introduce some notation that will be useful in
Section~\ref{ssec:poincare}. If $\phi_t^0$ is the flow of the
Hamiltonian system $H_0$ and we consider points
$u^*\in \Lambda_0$ and $u^0 \in W^\srm (\Lambda_0)=W^\urm (\Lambda_0)$, 
then
\begin{align}
\phi_t^0 (u^*) = {} & (x+\omega_1 t,y+\omega_2 t,z^*,p_x,p_y,p_z^*), \label{eq:flow:H0:inner}\\
\phi_t^0 (u^0) = {} & (x+F_1(\tau+t)+\omega_1 t,y+F_2(\tau+t)+\omega_2 t,z^0(\tau+t),p_x,p_y,p_z^0(\tau+t)), \label{eq:flow:H0:outer}
\end{align}

We observe that the functions $F_1$ and $F_2$ depend on the constants
$p_x$ and $p_y$ through $z^*$ and $z^0$, but we omit this dependence in order to avoid cumbersome notation. After straightforward
computations we obtain the following explicit formulas
\begin{equation}\label{eq:F1:F2:ABC}
\begin{split}
F_1(\tau) & = \bigg( \frac{2(\tanh (\lambda \tau) -1)}{\lambda} +
\frac{2}{\lambda} \bigg) \sin z^* - \bigg(\frac{2 \mathrm{sech} (\lambda
\tau)}{\lambda}-\frac{2}{\lambda} \bigg) \cos z^*\,, \\
F_2(\tau) & = \bigg( \frac{2(\tanh (\lambda \tau) -1)}{\lambda} +
\frac{2}{\lambda} \bigg) \cos z^* + \bigg(\frac{2 \mathrm{sech} (\lambda
\tau)}{\lambda}-\frac{2}{\lambda} \bigg) \sin z^*\,,
\end{split}
\end{equation}
where the constant $\lambda$ is defined in Eq.~\eqref{eq:lambda}. These functions allow us to compute the \emph{phase shift} of any trajectory when traveling along
$W^\srm(\cT_{p_x,p_y})$.
Indeed, the phase-shift is defined by the limits
\[
\begin{array}{ll}
x_+:= \lim_{t\rightarrow \infty} F_1(\tau+t)\,, \quad &  
x_-:= \lim_{t\rightarrow -\infty} F_1(\tau+t)\,,\\
y_+:= \lim_{t\rightarrow \infty} F_2(\tau+t)\,, \quad& 
y_-:= \lim_{t\rightarrow -\infty} F_2(\tau+t)\,,\\
\end{array}
\]
which can be explicitly computed and do not depend on $\tau$, that is
\begin{equation}\label{eq:phase-shift}
x_\pm = 2 \frac{(\mp p_x-p_y)}{(p_x^2 + p_y^2)^{3/4}}\,, \qquad 
y_\pm = 2 \frac{(p_x\mp p_y)}{(p_x^2 + p_y^2)^{3/4}}\,.
\end{equation}

Observe that the limits $x_+$ and $x_-$ are different, which means that
any point in the homoclinic orbit approaches different points 
of the same invariant torus if we consider the limit in the future and in the past.
This is the reason why the terminology phase-shift is used for this phenomenon, see e.g.~\cite{BT99,DLS00,FGKR13}.
As we will show in Section~\ref{sec:chains},
this phase-shift contributes to the
expression involved in the transversality conditions used to
obtain diffusion.

\begin{remark}\label{R:magsurf}
It is interesting to note that the invariant tori $\cT_{p_x,p_y}$
project onto the toroidal magnetic surfaces $z=z^*$ of the unperturbed ABC magnetic field in the
configuration space $\TT^3$. Moreover, the magnetic field on each
surface is linear, i.e. ${\bf B}_{ABC}|_{z=z^*}=(\sin z^*,\cos
z^*,0)$, and the trajectories follow the magnetic lines.
Let us observe that
the slope of the magnetic lines
$\tan z^*$ coincides with the ratio of the frequencies
$\omega_1/\omega_2$, cf. Eqs.~\eqref{eq:omega1} and~\eqref{eq:omega2}.
\end{remark}

\subsection{Main Theorem: diffusion along a NHIM}\label{ssec:mainresult}

Let us consider the following Hamiltonian for the ABC system
\begin{equation}\label{eq:scaled:ham}
H  = \frac{1}{2}(p_x-\sin z - \ep \hat C \cos y)^2 + \frac{1}{2}(p_y
-\cos z - \ep \hat B \sin x)^2 + \frac{1}{2}(p_z - \ep \hat C \sin y - \ep \hat B \cos x)^2,
\end{equation}
where we have introduced a scaling $B=\ep \hat B$ and $C = \ep \hat C$.
The following result states sufficient conditions for the
existence of diffusing orbits:

\begin{theorem}\label{teo:diffusion:ABC}
Consider the Hamiltonian~\eqref{eq:scaled:ham} of the ABC system with $\hat B\geq \hat C \neq 0$.
Assume that the following hypotheses hold:
\begin{itemize}
\item [$\mathbf{A}_1$] Considering the notation introduced in Section~\ref{ssec:geom:unper},
we define the functions
$M_i^0\equiv M_i^0(p_x,p_y)$ as
\begin{align*}
M_1^0 := {} & \hat B \int_{-\infty}^\infty \bigg( 
(p_y - \cos z^*) \sin (x_{\pm}+\omega_1 \sigma) 
-(p_y-\cos z^0) \sin (F_1 + \omega_1 \sigma) 
- p_z^0 \cos(F_1 + \omega_1 \sigma) 
\bigg) d\sigma, \\
M_2^0 := {} & \hat C \int_{-\infty}^\infty \bigg( 
(p_x - \sin z^*) \cos (y_{\pm}+\omega_2 \sigma) 
-(p_x-\sin z^0) \cos (F_2 + \omega_2 \sigma) 
- p_z^0 \sin(F_2 + \omega_2 \sigma) 
\bigg) d\sigma, \\
M_3^0 := {} & \hat B \int_{-\infty}^\infty \bigg( 
(p_y - \cos z^*) \cos (x_{\pm}+\omega_1 \sigma) 
-(p_y-\cos z^0) \cos (F_1 + \omega_1 \sigma) 
+ p_z^0 \sin(F_1 + \omega_1 \sigma) 
\bigg) d\sigma, \\
M_4^0 := {} &  \hat C \int_{-\infty}^\infty \bigg( 
(p_x-\sin z^0) \sin (F_2 + \omega_2 \sigma) 
- (p_x - \sin z^*) \sin (y_{\pm}+\omega_2 \sigma) 
- p_z^0 \cos(F_2 + \omega_2 \sigma) 
\bigg) d\sigma, 
\end{align*}
with $F_1\equiv F_1(\sigma)$ and $F_2\equiv F_2(\sigma)$,
and where the notation $x_{\pm}$ (resp. $y_{\pm}$) means that we take $x_-$ (resp. $y_-$) when we integrate in the interval $(-\infty,0)$, and $x_+$ (resp. $y_+$) when we integrate in the interval $(0,\infty)$. We assume that there exists a non-empty set
$\cI=[a_1,b_1] \times [a_2,b_2]$, 
for positive values of $a_i,b_i$, such that $M_1^0$ and $M_3^0$ do not vanish simultaneously, and the same for $M_2^0$ and $M_4^0$, provided that $(p_x,p_y)\in\cI$.

\item [$\mathbf{A}_2$]
Assume that for any value $(p_x,p_y) \in \cI$ there exists
a non-empty domain $\cJ_{p_x,p_y} \subset \TT^2$ with the property that
\[
\cD:= \bigcup_{(p_x,p_y) \in \cI} \cJ_{p_x,p_y} \times \{(p_x,p_y)\} \subset \TT^2 \times \cI
\]
is a domain, and when $(x,y,p_x,p_y)\in \cD$ there is a unique critical point
$\tau^*\equiv\tau^*(x,y,p_x,p_y)$ of the map
\[
\tau \mapsto M_1^0 \cos(x-\omega_1 \tau) + M_2^0 \cos(y-\omega_2 \tau) + M_3^0 \sin(x-\omega_1 \tau) + M_4^0 \sin (y-\omega_2\tau)\,,
\]
which defines a smooth function on $\cD$.

\item [$\mathbf{A}_3$] 
Assume that we can chose a constant $L>0$ such that for every $(x,y,p_x,p_y)\in \cD$
we have
\begin{equation}\label{eq:cond:trans:teo}
\left\{
\begin{array}{ll}
\Delta_1 \Delta_3 - \Delta_2^2 \neq 0, & \mbox{if $|p_x-p_y|\geq L$}\,, \\
\hat \Delta_1 \hat \Delta_4 -\hat \Delta_2 \hat \Delta_3 \neq 0, & \mbox{if $|p_x-p_y|\leq L$}\,,
\end{array}
\right.
\end{equation}
where $\{\Delta_i\}_{i=1,2,3}$ and $\{\hat \Delta_i\}_{i=1,2,3,4}$ are certain explicit functions depending on $(x,y,p_x,p_y)$ that are defined in Section~\ref{sec:chains}, cf. Eqs.~\eqref{eq:Delta:1}--\eqref{eq:Delta:3} and~\eqref{eq:Delta:4}--\eqref{eq:Delta:7}.
\end{itemize}

Then, given two pairs
$(p_x^{0},p_y^{0}) \in \mathring{\cI}$ and
$(p_x^{1},p_y^{1}) \in \mathring{\cI}$
and given $\delta>0$,
there exists $\ep^*=\ep^*(\delta,\cI)$
such that if $0<|\ep|< \ep^*$ then
there is a trajectory $(x(t),y(t),z(t),p_x(t),p_y(t),p_z(t))$
of the system~\eqref{eq:scaled:ham} satisfying
\begin{align*}
& \mathrm{dist}\left((p_x^{0},p_y^{0}),(p_x(0),p_y(0))\right) \leq \delta, \\
& \mathrm{dist}\left((p_x^{1},p_y^{1}),(p_x(T),p_y(T))\right) \leq \delta.
\end{align*}
for some $T>0$.
\end{theorem}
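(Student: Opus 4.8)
The plan is to realize the ABC Hamiltonian~\eqref{eq:scaled:ham} as an \emph{a priori unstable} perturbation of~\eqref{eq:sys:H0} and apply the geometric mechanism of~\cite{DLS06,DLS13} verbatim, once the three hypotheses $\mathbf{A}_1$--$\mathbf{A}_3$ have been translated into the language of that mechanism. First I would invoke the normally hyperbolic invariant manifold theory: since $\Lambda_0$ (with its coincident whiskers $W^\srm(\Lambda_0)=W^\urm(\Lambda_0)$) is compact up to the $(p_x,p_y)$-direction, restricting attention to a neighborhood of $\cI$ gives, for $|\ep|$ small, a perturbed NHIM $\Lambda_\ep$ that is $\cO(\ep)$-close to $\Lambda_0$ and still $4$-dimensional, together with its $5$-dimensional stable and unstable manifolds. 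One then fixes a symplectic parameterization of $\Lambda_\ep$ (the explicit deformation construction carried out in Sections~\ref{ssec:NHIM:symp}--\ref{ssec:NHIM:ABC}) so that the restricted flow is a genuine Hamiltonian system on a $4$-dimensional symplectic manifold, depending on $\ep$.

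The second step is the \emph{inner dynamics}. On $\Lambda_\ep$ one performs one step of averaging for the restricted vector field (Section~\ref{ssec:aver:ABC}); the averaged Hamiltonian is, to leading order, $\LO$ plus an $\cO(\ep)$ term whose resonant harmonics produce the familiar pendulum-like structures near resonances. Hypothesis $\mathbf{A}_1$ (non-vanishing of the Melnikov coefficients $M_i^0$ on $\cI$) guarantees that these resonant terms are genuinely present along the relevant resonances, so that by Proposition~\ref{prop:level:sets} one obtains a foliation of $\Lambda_\ep$, outside a set of measure $\cO(\ep^{3/2})$, by primary and secondary whiskered tori, parameterized by the value of the averaged Hamiltonian and by the resonant angle; crucially the gaps created by first-order resonances are only $\cO(\sqrt\ep)$ and are bridged by secondary tori. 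The third step is the \emph{outer dynamics}: using the Poincar\'e--Melnikov potential of Section~\ref{ssec:poincare} (the reduced function $\tau\mapsto M_1^0\cos(x-\omega_1\tau)+M_2^0\cos(y-\omega_2\tau)+M_3^0\sin(x-\omega_1\tau)+M_4^0\sin(y-\omega_2\tau)$ is precisely what appears here), hypothesis $\mathbf{A}_2$ — existence of a unique, non-degenerate critical point $\tau^*$ on $\cD$ — gives, by the implicit function theorem, a well-defined scattering map $S$ on the open set of $\Lambda_\ep$ corresponding to $\cD$, $\cO(\ep)$-close to the unperturbed scattering map (which, note, already carries the nontrivial phase shift~\eqref{eq:phase-shift}).

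The fourth and decisive step is to combine the two dynamics. One shows that the inner motion, plus finitely many applications of the scattering map, connects $(p_x^0,p_y^0)$ to $(p_x^1,p_y^1)$: the scattering map moves the averaged-Hamiltonian value across the resonant gaps while the inner flow moves it within each region of regular tori. The transversality condition $\mathbf{A}_3$ — the alternative $\Delta_1\Delta_3-\Delta_2^2\neq0$ or $\hat\Delta_1\hat\Delta_4-\hat\Delta_2\hat\Delta_3\neq0$ according to $|p_x-p_y|$ — is exactly the statement that the image under $S$ of a foliation of tori is transverse to the foliation itself on $\Lambda_\ep$, equivalently that $W^\urm_{\cT_i}\pitchfork W^\srm_{\cT_{i+1}}$ along a suitable sequence; the $|p_x-p_y|$ dichotomy is an \emph{ad hoc} device handling the locus where the ``generic'' transversality condition degenerates because of the phase shift, where one must instead use a second-order (two-jet) transversality expressed by the $\hat\Delta_i$. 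Having such a sequence $\{\cT_i\}$, the standard Lambda-lemma / obstruction argument of~\cite{DLS06} produces an actual orbit $\epsilon$-shadowing the heteroclinic chain, and since $\cI$ is compact and $\mathbf{A}_1$--$\mathbf{A}_3$ hold uniformly on it, the number of links and hence $\ep^*$ depend only on $\delta$ and $\cI$; tracking the $(p_x,p_y)$-components of this orbit and translating back from the symplectic coordinates on $\Lambda_\ep$ to the original variables yields the claimed trajectory with $\mathrm{dist}((p_x^0,p_y^0),(p_x(0),p_y(0)))\le\delta$ and $\mathrm{dist}((p_x^1,p_y^1),(p_x(T),p_y(T)))\le\delta$.

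The main obstacle, and the place where this proof genuinely departs from~\cite{DLS06,DLS13}, is the treatment of the phase shift and the resulting $|p_x-p_y|$ dichotomy in $\mathbf{A}_3$. In the standard a priori unstable template the unperturbed scattering map is the identity, so the leading transversality between $S$ and the inner foliation is first-order and encoded by a single determinant; here the nonzero limits $x_\pm,y_\pm$ make the unperturbed scattering map a nontrivial translation, which both shifts the critical-point equation for $\tau^*$ and, more seriously, can force the first-order transversality determinant to vanish identically on a codimension-one locus (morally $|p_x-p_y|$ small). One must then isolate that locus, compute the next-order term of the scattering map there, and verify that the modified (second-order) condition $\hat\Delta_1\hat\Delta_4-\hat\Delta_2\hat\Delta_3\neq0$ still guarantees a transverse heteroclinic intersection — this is the content of Section~\ref{sec:chains} and is the analytically delicate part; everything else is the established machinery applied \emph{mutatis mutandis}.
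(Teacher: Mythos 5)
Your outline follows the paper's architecture (persistent NHIM with symplectic parameterization $\to$ averaging on the NHIM $\to$ Melnikov/scattering map $\to$ combination and transition chain $\to$ shadowing), and the invocations of Proposition~\ref{prop:level:sets} and of Lemma~10.4 of~\cite{DLS06} are exactly what the paper uses. However, two of the three hypotheses are given the wrong job, and this propagates into a substantially incorrect picture of the combination step.

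First, $\mathbf{A}_1$ is not about the inner resonances. The non-vanishing of $M_1^0,M_3^0$ (and $M_2^0,M_4^0$) has nothing to do with whether the averaged inner Hamiltonian has nontrivial resonant harmonics; those harmonics are $A_{23}$ and $A_{27}$ in $r_2$, coefficients coming from the perturbative expansion of the restricted vector field (Section~\ref{ssec:NHIM:ABC}), entirely unrelated to the $M_i^0$. Hypothesis $\mathbf{A}_1$ is needed so that the reduced Melnikov potential $\cL(x,y)=M_1^0\cos x+M_2^0\cos y+M_3^0\sin x+M_4^0\sin y$ has \emph{nondegenerate} critical points (see Eq.~\eqref{eq:crit:cL} and the discussion right after), which is what makes $\mathbf{A}_2$ and the scattering map geometrically meaningful. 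Also, in this problem there are no first-order resonances at all (since $p_x,p_y>0$ forces $\omega_1,\omega_2\neq0$); the unique relevant resonance $\omega_1=\omega_2$ appears only in the second-order averaged Hamiltonian, and the resulting pendulum islands are $\cO(\ep)$, not $\cO(\sqrt\ep)$. The ``large gap'' is bridged by the secondary tori coming from this $r_2$-pendulum, independently of $\mathbf{A}_1$.

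Second, and more seriously, the $|p_x-p_y|$ dichotomy in $\mathbf{A}_3$ is \emph{not} a phase-shift-induced degeneracy of the first-order transversality determinant, nor are the $\hat\Delta_i$ ``second-order/two-jet'' conditions requiring a next-order computation of the scattering map. Both conditions in~\eqref{eq:cond:trans:teo} are first order in $\ep$ and use the same scattering map $S_\ep=S_0+\ep S_1+\cO(\ep^2)$. The dichotomy is the standard resonant vs. non-resonant split of the inner dynamics: away from $p_x=p_y$ the surviving tori are level sets of $(F_{1,0},F_{2,0})=(p_x,p_y)$, while near $p_x=p_y$ they are level sets of $F_{1,0}=p_x+p_y$ and of the pendulum-like Hamiltonian $F_{2,0}=\tfrac{e_1^2}{4}+\tfrac{e_1}{\sqrt2}+(1+\sqrt2 e_1^{-1})(\tfrac{e_1}{2}-p_y)^2$; the conditions $\Delta_1\Delta_3-\Delta_2^2\neq0$ and $\hat\Delta_1\hat\Delta_4-\hat\Delta_2\hat\Delta_3\neq0$ are the transversality of $s_\ep(\mathrm{level~set})$ with the corresponding foliation in each zone, derived from the same expansion~\eqref{eq:F:S}. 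What the phase shift actually does, compared to~\cite{DLS06,DLS13}, is make $S_0$ non-constant, so the terms $\{F_{1,1},S_0\}$ and $\{F_{2,1},S_0\}$ no longer vanish; these extra Poisson-bracket terms appear in both the $\Delta_i$ and $\hat\Delta_i$. The zero of $\hat\Delta_1\hat\Delta_4-\hat\Delta_2\hat\Delta_3$ on the resonance $p_x=p_y$ (noticed in Section~\ref{ssec:cap:delta}) is an artifact of $F_{2,0}$ being quadratic there (so $\hat\Delta_3,\hat\Delta_4\propto p_x-p_y$), not of the phase shift, and it is handled by dividing out the factor $p_x-p_y$ before checking nonvanishing. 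So the ``second-order correction to fix a degeneracy'' step you describe does not occur in the paper, and as stated it is not a correct replacement; the combination step must be carried out as in Section~\ref{sec:chains}.
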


We would like to emphasize that the above hypotheses are given in a very explicit way. To evaluate all the functions involved in the statement of Theorem~\ref{teo:diffusion:ABC}, we only need to compute
the coefficients $\{M_i^0\}_{i=1,2,3,4}$ in Hypothesis $\mathbf{A}_1$, together with the 
partial derivatives
$\tau^*_{x}$,
$\tau^*_{y}$,
$\tau^*_{xx}$,
$\tau^*_{xy}$
and $\tau^*_{yy}$, of the critical point in Hypothesis $\mathbf{A}_2$.
As was sketched in the introduction,
the proof of Theorem~\ref{teo:diffusion:ABC} consists
in combining the internal dynamics on the NHIM with its outer (asymptotic)
dynamics. Details are presented and discussed in Sections~\ref{sec:inner},
\ref{sec:outer}, and \ref{sec:chains}.
Finally, in Section~\ref{sec:explicit} we show that the hypotheses of the theorem can be rigorously checked in a computer assisted proof.

\begin{remark}
Since the invariant tori $\cT_{p_x,p_y}$ correspond to the toroidal magnetic surfaces of the unperturbed ABC magnetic field, c.f. Remark~\ref{R:magsurf}, Theorem~\ref{teo:diffusion:ABC} implies the existence of drift motions connecting any two magnetic surfaces (compatible with the set $\cI$) for the perturbed ABC system. This diffusion of charged particles is a very harmful phenomenon for the confinement of hot plasmas for fusion power generation, as explained in the introduction.  
\end{remark}

\section{Inner dynamics of the normally hyperbolic invariant manifold}
\label{sec:inner}

The study of normally hyperbolic invariant manifolds is a very
classical (and important) topic and it has been extensively
considered in the literature. Most of the results that we use
in this section are standard and can be found in~\cite{Fenichel,HPS}. Our purpose
here is to present a basic overview, notation and
perturbative formulas that we require to study the perturbation
of the normally hyperbolic invariant manifold introduced in
Section~\ref{ssec:geom:unper}.

We recall that our goal is to study the Hamiltonian~\eqref{eq:scaled:ham}
for small values of $\ep$. Hence, we
write $H=H_\ep$
perturbatively as follows
\begin{equation}\label{eq:Ham:exp}
H_\ep= H_0 + \ep H_1 + \ep^2 H_2,
\end{equation}
where
\begin{align}
H_0 = {} & \frac{1}{2} (p_x - \sin z)^2 + \frac{1}{2}(p_y - \cos z)^2 + \frac{1}{2} p_z^2,
\label{eq:H0:two} \\
H_1 = {} & - \hat C \cos y(p_x - \sin z) - \hat B \sin x (p_y - \cos z) - p_z(\hat C \sin y
+ \hat B \cos x),
\label{eq:H1} \\
H_2 = {} & \frac{\hat C^2}{2}+\frac{\hat B^2}{2}+\hat B \hat C \cos x \sin y.
\label{eq:H2}
\end{align}

The unperturbed Hamiltonian $H_0$ was studied in Section~\ref{ssec:geom:unper},
where we characterized the corresponding NHIM $\Lambda_0$. 
Now, we are interested in characterizing the perturbed invariant manifold $\Lambda_\ep$ together
with the restricted dynamics on it (mainly the existence
and approximation of invariant tori). To this end, we will follow closely the methodology introduced in the
papers~\cite{DLS06,DLS08,DLS13}.

Let us remark that the Hamiltonian~\eqref{eq:Ham:exp} is real-analytic. This will
imply that all the objects obtained in this section will be of class $C^r$, with
arbitrarily large $r$ (this follows from Fenichel rate conditions) so that
we can omit all the discussions concerning regularity. This will simplify
many technical issues, for example when applying averaging and KAM theory.
The interested reader is referred to~\cite{DLS06,DLS13} for details on regularity.

\subsection{Normally hyperbolic invariant manifolds and perturbative setting}\label{ssec:inner:approx}

Let $M$ be a smooth finite dimensional manifold and let us
consider a flow $\phi_t$, of class $C^r$ with $r\geq1$, acting 
on $M$.

\begin{definition}
Let $\Lambda\subset M$ be a submanifold invariant under
the flow, i.e., $\phi_t (\Lambda)=\Lambda$. We say that $\Lambda$
is a normally hyperbolic invariant manifold (NHIM), if there
exist a constant $c>0$, expansion rates
$0<\mu<\lambda$, and 
a splitting for every $x\in \Lambda$
\begin{equation}\label{eq:splitting}
T_x M = E^\srm_x \oplus E^\urm_x \oplus T_x\Lambda,
\end{equation}
characterized as follows
\begin{equation}\label{eq:NHIM:prop}
\begin{split}
v \in E^\srm_x  &\Longleftrightarrow |\Dif \phi_t(x)v| \leq c\, \rme^{-\lambda t\hphantom{||}}|v|, \qquad t\geq 0, \\
v \in E^\urm_x  &\Longleftrightarrow |\Dif \phi_t(x)v| \leq c\, \rme^{-\lambda |t|}|v|,  \qquad t\leq 0, \\
v \in T_x\Lambda&  \Longleftrightarrow |\Dif \phi_t(x)v| \leq c\, \rme^{\mu|t|}|v|,  \qquad t\in \RR.
\end{split}
\end{equation}
\end{definition}

The classical theory of NHIMs guarantees that if $\Lambda$ is normally hyperbolic, then it
is persistent under small perturbations. Moreover, if the
system depends smoothly on parameters, the manifolds |they may not be
unique| can be chosen to depend smoothly on parameters.
NHIMs are robust under perturbations, so we do not
require a symplectic structure on $M$ and $\phi_t$. Nevertheless,
the problem considered in this paper is endowed with a symplectic structure
and hence we will be interested in characterizing a symplectic structure on the perturbed NHIM.

In order to apply the geometric mechanism for a priori unstable systems
(c.f.~\cite{DLS06,DLS13}) we must compute explicitly some expansions in $\ep$ of
the NHIM associated to the Hamiltonian~\eqref{eq:Ham:exp}.
Notice
that in our case we can model the NHIM by means of the canonical manifold
$N=\TT^2 \times \RR^2$ (see Section~\ref{ssec:geom:unper}), that is,
we look for a parameterization $P_\ep: N \rightarrow M$,
with $P_\ep(N)=\Lambda_\ep$, characterized by the invariance
equation
\begin{equation}\label{eq:NHIM:inv}
X_{\ep} \circ P_\ep = \Dif P_\ep R_\ep
\end{equation}
where $R_\ep$ is a vector field on $N$ and $X_\ep$ is the
Hamiltonian vector field associated to $H_\ep$.
Using the expansions
\begin{align*}
X_\ep = {} & X_0 + \ep X_1 + \ep^2 X_2 + \ldots, \\
P_\ep = {} & P_0 + \ep P_1 + \ep^2 P_2 + \ldots, \\
R_\ep = {} & R_0 + \ep R_1 + \ep^2 R_2 + \ldots,
\end{align*}
and equating terms in the expansion of $\ep$ of the invariance equation~\eqref{eq:NHIM:inv},
we obtain (this approach was used in~\cite{DLS06})
\begin{align}
\mbox{0th order:} & \quad X_{0} \circ P_0 = \Dif P_0 R_0,
\label{eq:param0}\\
\mbox{1st order:} & \quad (\Dif X_{0} \circ P_0) P_1 + X_{1} \circ P_0 =
\Dif P_0 R_1 +\Dif P_1 R_0, \label{eq:param1}\\
\mbox{2nd order:} & \quad (\Dif X_{0} \circ P_0) P_2 + \frac{1}{2}
(\Dif^2X_{0} \circ P_0) P_1^{2\otimes} + (\Dif X_{1} \circ P_0) P_1 + X_{2} \circ
P_0  \label{eq:param2} \\
& \quad \quad = \Dif P_0 R_2 + \Dif P_1 R_1 + \Dif P_2 R_0, \nonumber \\
\mbox{$n$th order:} & \quad (\Dif X_{0}\circ P_0) P_n - \Dif P_n R_0 - \Dif P_0
R_n = -X_{n} \circ P_0 + S_n,\label{eq:paramn}
\end{align}
where $S_n$ is a polynomial in $X_{0}, \ldots, X_{n-1}$, their
derivatives, $P_0,\ldots,P_{n-1}$, their derivatives, and $R_0,\ldots,R_{n-1}$.

Clearly (see the discussion in Section~\ref{ssec:geom:unper})
Eq.~\eqref{eq:param0} has the solution
\begin{align*}
P_0(x,y,p_x,p_y) & =(x,y,z^*,p_x,p_y,p_z^*)\\
R_0(x,y,p_x,p_y) & = \omega_1(p_x,p_y) \partial_x + \omega_2(p_x,p_y) \partial_y
\end{align*}
where $\omega_1$ and $\omega_2$ are given by~\eqref{eq:omega1}
and~\eqref{eq:omega2}, respectively. 
In this case, since the unperturbed internal field $R_0$ does
not depend on the angular variables $(x,y)$, the equations of the form~\eqref{eq:paramn}
lead to simple cohomological equations in a suitable frame. Hence,
these equations can be
solved explicitly using Fourier expansions.
It is worth mentioning that there are more general theories that allow us to solve
equations of the form~\eqref{eq:paramn} even if the motion on the base
is not quasi-periodic.

As will be discussed in subsequent sections, the solution of
equations~\eqref{eq:param0},~\eqref{eq:param1},~\eqref{eq:param2}, and~\eqref{eq:paramn}
is not uniquely determined.
We will use this freedom in order to
obtain certain symplectic properties. More specifically, we follow
the ideas in~\cite{DLS08} to maintain the canonical symplectic
structure on $\Lambda_\ep$, so that we can easily characterize
and manipulate the Hamiltonian associated to the restricted vector field $R_\ep$.

\subsection{Symplectic properties of NHIMs of Hamiltonian systems}\label{ssec:NHIM:symp}

Let $M$ be a symplectic manifold with symplectic form $\sform$, represented by
a matrix-valued function $\Omega$, and let us assume that a $C^r$ Hamiltonian $H_0$, with $r\geq 2$, has a NHIM $\Lambda_0$ parameterized by
$P_0:N \rightarrow M$. Then, it is well known
(c.f.~\cite{Fenichel,HPS}) that for every perturbed Hamiltonian
$H_\ep$ of class $C^r$ there exists a NHIM $\Lambda_\ep$ parameterized
by $P_\ep$ of class $C^{r-1}$. Moreover, $\Lambda_\ep$
is $\cO(\ep)$-close to $\Lambda_0$ in the $C^{r-2}$ sense. Here and in what follows, when we say that a map depending on parameters is of class $C^r$ we shall mean that it is of class $C^r$ in all variables including the parameters.

Given a family of Hamiltonians 
having a family of NHIMs $\Lambda_\ep = P_\ep(N)$, with $P_\ep :N \rightarrow M$, we consider the maps $R_\ep : N \rightarrow TN$
corresponding to the vector fields restricted to the NHIMs. The maps $P_\ep$ and $R_\ep$ are
related by the invariance equation~\eqref{eq:NHIM:inv}.

It is well known that the solutions of~\eqref{eq:NHIM:inv} are not uniquely
defined, since we have the possibility of choosing different coordinates in the
reference manifold $N$. It is natural to use this freedom to satisfy certain properties,
like asking $P_\ep$ to be a graph or asking $R_\ep$ to be as simple
as possible.
In this paper, we are interested in choosing the solution that preserves
the Hamiltonian structure of the problem, that is, we want that
\begin{equation}\label{eq:Pepsymp}
\frac{d}{d\ep}(P_\ep^* \sform)=0.
\end{equation}
The fact that this can be achieved 
was proved in~\cite{DLS08}.
In this paper, since we need to perform some explicit
computations, we have to give some additional details on the
procedure presented in~\cite{DLS08}. The aim of this section is to explain 
the explicit computations required to handle a particular problem.

A natural way to obtain~\eqref{eq:Pepsymp} is to use deformation
theory. Let us recall some standard definitions.
Given two connected manifolds $M$ and $N$, 
and given a family $f_\ep : N \rightarrow M$ such that $(x,\ep) \mapsto f_\ep(x)$
is $C^1$ in all its arguments, we define the infinitesimal deformation
of $f_\ep$ as the vector field $\cF_\ep$ that satisfies
\[
\frac{d}{d\ep}f_\ep=\cF_\ep \circ f_\ep,
\]
and we observe that $\cF_\ep=(\frac{d}{d\ep} f_\ep)\circ f_\ep^{-1}$ is
defined on $f_\ep(N) \subset M$.

Let $\cP_\ep$ be the infinitesimal deformation of the family $P_\ep$
with initial condition $P_0$. It is clear that $\cP_\ep : \Lambda_\ep \rightarrow TM$,
so we can consider the projections of $\cP_\ep$ according to the splitting~\eqref{eq:splitting}. Then
we have the following result \cite{DLS08}:
\begin{proposition}\label{prop:NHIM:def}
Let us consider a family of parameterizations $P_\ep : N \rightarrow M$ with
$\Lambda_\ep = P_\ep(N)$. Assume that the infinitesimal deformation $\cP_\ep$
satisfies that the projection on the space $T_x \Lambda_\ep$ vanishes for
every $x\in \Lambda_\ep$. Then, the symplectic form $P_\ep^* \sform_{*,\ep}$ is independent
of $\ep$, where $\sform_{*,\ep}$ is the original form $\sform$ expressed
in a basis of the splitting~\eqref{eq:splitting}.
\end{proposition}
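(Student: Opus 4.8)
The plan is to prove Proposition~\ref{prop:NHIM:def} by a direct computation of $\frac{d}{d\ep}(P_\ep^* \sform_{*,\ep})$, exploiting the hypothesis that the $T_x\Lambda_\ep$-component of the infinitesimal deformation $\cP_\ep$ vanishes. First I would recall the algebraic meaning of $\sform_{*,\ep}$: at each point $x\in\Lambda_\ep$ we have the splitting $T_xM = E^\srm_x \oplus E^\urm_x \oplus T_x\Lambda_\ep$, and $\sform_{*,\ep}$ is the representation of $\sform$ in a (smoothly varying) basis adapted to this splitting. Then I would write down the standard formula for the derivative of a pullback in terms of the deformation vector field and the Lie derivative, namely
\begin{equation*}
\frac{d}{d\ep}\big(P_\ep^* \sform_{*,\ep}\big) = P_\ep^*\big(\Lie{\cP_\ep}\sform_{*,\ep}\big) + P_\ep^*\Big(\frac{\pd}{\pd\ep}\sform_{*,\ep}\Big),
\end{equation*}
and use Cartan's magic formula $\Lie{\cP_\ep} = \rmi_{\cP_\ep}\circ\Dif + \Dif\circ\rmi_{\cP_\ep}$ together with the closedness of $\sform$.

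The core of the argument is then to show that the interior product $\rmi_{\cP_\ep}\sform_{*,\ep}$, pulled back to $N$, contributes nothing. Here is where the hypothesis enters: decompose $\cP_\ep = \cP_\ep^\srm + \cP_\ep^\urm + \cP_\ep^{\Lambda}$ according to the splitting, and observe that by assumption $\cP_\ep^{\Lambda}=0$, so $\cP_\ep$ takes values purely in $E^\srm\oplus E^\urm$. Since $P_\ep$ parameterizes $\Lambda_\ep$, every vector tangent to $\Lambda_\ep$ in the image of $\Dif P_\ep$ lies in $T_x\Lambda_\ep$; contracting $\sform_{*,\ep}$ with $\cP_\ep$ and then with such a tangent vector produces a pairing of an $E^\srm\oplus E^\urm$ vector against a $T\Lambda_\ep$ vector. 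The key observation to extract is that, in the basis adapted to the splitting and with $\sform_{*,\ep}$ defined precisely to be $\sform$ in that basis, the mixed blocks pairing $E^\srm\oplus E^\urm$ with $T\Lambda_\ep$ are exactly what is being quotiented out — so the pullback $P_\ep^*(\rmi_{\cP_\ep}\sform_{*,\ep})$ vanishes identically on $N$. Consequently $P_\ep^*\Dif(\rmi_{\cP_\ep}\sform_{*,\ep}) = \Dif\big(P_\ep^*(\rmi_{\cP_\ep}\sform_{*,\ep})\big) = 0$, and $P_\ep^*(\rmi_{\cP_\ep}\Dif\sform_{*,\ep})$ must be handled by noting that $\Dif\sform_{*,\ep}$, when pulled back and contracted appropriately, also only sees the off-diagonal structure; alternatively one reduces to $\sform$ itself (which is closed) plus correction terms supported on the mixed blocks.

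I would then assemble the pieces: the Lie-derivative term vanishes after pullback by the above, and the explicit $\ep$-derivative term $P_\ep^*(\pd_\ep \sform_{*,\ep})$ must be shown to vanish as well, which follows because the change of adapted basis with $\ep$ only modifies the mixed and normal blocks, none of which survive the pullback to $N$ once we have fixed that $P_\ep$ maps onto $\Lambda_\ep$ and that $\cP_\ep$ has no $T\Lambda_\ep$ component. Putting this together gives $\frac{d}{d\ep}(P_\ep^*\sform_{*,\ep})=0$, as desired.

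The main obstacle, as I see it, is bookkeeping with $\sform_{*,\ep}$: one must be careful about what ``$\sform$ expressed in a basis of the splitting'' means as a differential form depending on $\ep$, since both the splitting and the basis move with $\ep$. The delicate point is to verify that the only part of $\sform_{*,\ep}$ that is $P_\ep$-relevant is the restriction to $T\Lambda_\ep\times T\Lambda_\ep$, and that this restriction is insensitive to the deformation precisely because $\cP_\ep^{\Lambda}=0$. This is morally why the result is true — the freedom in the choice of $P_\ep$ is the freedom to move $\Lambda_\ep$ in the normal directions, and moving in normal directions does not change how $\sform$ restricts to $\Lambda_\ep$ — but making this rigorous requires carefully tracking the basis change and invoking the $C^{r-2}$ smoothness of the splitting (which, as noted after Eq.~\eqref{eq:H2}, we may assume since the Hamiltonian is real-analytic and Fenichel rate conditions give arbitrarily high regularity). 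I would defer the detailed index manipulations to the cited reference~\cite{DLS08} and present only the conceptual skeleton here.
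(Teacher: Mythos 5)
Your framework --- Cartan's magic formula, closedness, and contraction of the deformation field against a tangent vector of $\Lambda_\ep$ --- is the same as the paper's, but a crucial step is asserted rather than proved. You write that ``the mixed blocks pairing $E^\srm\oplus E^\urm$ with $T\Lambda_\ep$ are exactly what is being quotiented out,'' as if this followed from the mere act of writing $\sform$ in a basis adapted to the splitting. But expressing a bilinear form in a different basis does not change which blocks vanish. What is actually needed --- and what the paper establishes as the opening step of its proof --- is the dynamical fact that $\sform$ (and therefore $\sform_{*,\ep}$) pairs $E^\srm_x\oplus E^\urm_x$ against $T_x\Lambda_\ep$ to zero. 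This is proved by combining the invariance of $\sform$ under the Hamiltonian flow $\phi_t^\ep$ with the exponential rate conditions~\eqref{eq:NHIM:prop}: for $u\in E^\srm_x$ and $v\in T_x\Lambda_\ep$ one has $\sform(x)[u,v]=\sform(\phi_t^\ep x)[\Dif\phi_t^\ep\,u,\Dif\phi_t^\ep\,v]$ for all $t$, with $|\Dif\phi_t^\ep\,u|\le c\,\rme^{-\lambda t}$ and $|\Dif\phi_t^\ep\,v|\le c\,\rme^{\mu t}$, $\mu<\lambda$, so letting $t\to+\infty$ forces the pairing to vanish (and $t\to-\infty$ handles $E^\urm$). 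Without this argument, the conclusion that $P_\ep^*\big(i_{\cP_\ep}\sform_{*,\ep}\big)\equiv 0$ is not justified: the contraction pairs a genuinely non-tangent vector $\cP_\ep\in E^\srm\oplus E^\urm$ against a tangent vector $\Dif P_\ep\,v$, and this is exactly the mixed pairing whose vanishing must be earned, not declared.

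Your second concern, the extra term $P_\ep^*(\pd_\ep\sform_{*,\ep})$, is a self-inflicted complication. The pullback $P_\ep^*$ does not depend on the frame chosen on $M$, so if one reads $\sform_{*,\ep}$ simply as $\sform$ represented in an adapted frame --- which is the reading consistent with the paper --- then $P_\ep^*\sform_{*,\ep}=P_\ep^*\sform$ with $\sform$ genuinely $\ep$-independent, and the derivative-of-pullback identity is exactly $\tfrac{d}{d\ep}P_\ep^*\sform = P_\ep^*\big(\Lie{\cP_\ep}\sform\big)$ with no $\pd_\ep$ term. Your heuristic resolution is not wrong in spirit, but the cleaner route is to never introduce the term.
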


\begin{proof}
For the sake of completeness, we reproduce the proof given in~\cite{DLS08}.
First we observe that since $\sform$ is invariant under the flow $\phi^\ep_t$ of $H_\ep$,
then also is $\sform_{*,\ep}$, and we have
\[
\sform_{*,\ep}(x)[u,v] = \sform_{*,\ep}(\phi_t^\ep(x))[\Dif \phi_t^\ep(x)u,\Dif \phi_t^\ep(x)v],
\]
for every $u,v \in T_xM$ and $t\in \RR$. Using the asymptotic properties in~\eqref{eq:NHIM:prop}
it is clear that $\sform_{*,\ep}(x)[u,v]=0$ if $u \in E^\srm_{x,\ep} \oplus E^\urm_{x,\ep}$ and
$v \in T_x \Lambda_{\ep}$ (or vice versa).

Then, using Cartan's formula we obtain
\[
\frac{d}{d\ep} P_\ep^* \sform_{*,\ep} = P_\ep^* (i_{\cP_\ep} d\sform_{*,\ep} + d i_{\cP_\ep} \sform_{*,\ep}) = P_\ep^* d i_{\cP_\ep} \sform_{*,\ep},
\]
where we used that $\sform_{*,\ep}$ is closed. Then, we have
\[
\frac{d}{d\ep}(P_\ep^* \sform_{*,\ep})=dP_\ep^*i_{\cP_\ep} \sform_{*,\ep}
\]
and we observe that the 1-form $P_\ep ^* i_{\cP_\ep} \sform_{*,\ep}$, acting on $v \in T_xN$,
is given by
\[
P_\ep^* i_{\cP_\ep} \sform_{*,\ep} (x) [v] = i_{\cP_\ep} \sform_{*,\ep} (P_\ep(x))[dP_\ep(x) v]= \sform_{*,\ep}(P_\ep(x))[\cP_\ep(P_\ep(x)),dP_\ep(x)v].
\]
By hypothesis, we have $\cP_\ep(P_\ep(x)) \in E^\srm_{P_\ep(x),\ep} \oplus E^\urm_{P_\ep(x),\ep}$
and we also have $dP_\ep(x) v \in T_{P_\ep(x)}\Lambda_\ep$. Hence, it must be $P_\ep^* i_{\cP_\ep} \sform_{*,\ep} (x) \equiv 0$ and we conclude that $P_\ep^* \sform_{*,\ep}$ is independent of $\ep$.
\end{proof}

\begin{remark}
A particularly interesting case arises if
$\Lambda_0=P_0(N)$ is a NHIM for $X_0$ and Eq.~\eqref{eq:NHIM:inv} is solved
perturbatively. This is the situation considered in this paper.
Property~\eqref{eq:Pepsymp} is important in order to have a canonical
symplectic structure on $\Lambda_\ep$, so that the averaging procedure
(normal form) can be implement in the usual way.
\end{remark}
In the following we assume that $M=N \times \TT \times \RR$, with $N=\TT^{n} \times \RR^{n}$, and we
use the notation $(u,p_u) \in N$ with $u=(u_1,\ldots,u_n)$, $p_u=(p_{u,1},\ldots,p_{u,n})$, and $(v,p_v)\in \TT \times \RR$.
We endow $M$ with the symplectic form
\begin{equation}\label{eq:sform:canonic}
\sform = \sum_{i=1}^n dp_{u,i} \wedge d u_i + dp_v \wedge dv,
\end{equation}
which is represented by
\[
\Omega_{n+1}=
\begin{pmatrix}
\Omega^0_{n} & O_{2n \times 2} \\
O_{2 \times 2 n} & \Omega^{0}_{1}
\end{pmatrix},
\quad
\mbox{with}
\quad
\Omega^0_{n} =
\begin{pmatrix}
O_n & -I_n \\
I_n & O_n
\end{pmatrix}
\]
where from now on we use the notation 
$O_{n \times m}$, $I_{n\times m}$,
$O_n \equiv O_{n \times n}$, and $I_n\equiv I_{n\times n}$,
for the zero and identity matrices, respectively.
Moreover, we denote by $\mathrm{M}_{m \times n}$
the space of $m \times n$-matrices with real coefficients.

\begin{definition}\label{def:symp:frame}
Given a parameterization $P_0:N \rightarrow M$
of a NHIM, with $N=\TT^n \times \RR^n$ and $M=N \times \TT\times \RR$, we say that $P_0$ is compatible with the symplectic form $\sform$ if
\[
\Dif P_0 (u,p_u)^\top \Omega_{n+1} \Dif P_0(u,p_u)= \Omega^0_{n}.
\]
Similarly, we say that a frame
\[
\begin{array}{rcl}
\mathfrak{C}: 
N \times \RR^{2n +2} & 
\longrightarrow &
T_{P_0(N)}M \\
(u,p_u,\xi) &\longrightarrow & (P_0(u,p_u),C_0(u,p_u) \xi)
\end{array},
\]
with $C_0: N\rightarrow \mathrm{M}_{(2n+2) \times (2n+2)}$,
is symplectic if
\[
C_0(u,p_u)^\top \Omega_{n+1} C_0(u,p_u)= \Omega_{n+1}\,.
\]
\end{definition}

Let us also introduce some notation regarding derivatives
that will be useful in computations. Given a vector field $R$
on a NHIM, and given a function $\xi : N \rightarrow \RR$,
we denote the Lie derivative of $\xi$ with respect to $R$ as follows
\begin{equation}\label{eq:Lie}
\Lie{R}(\xi) = \Dif \xi R = \sum_{i=1}^{n} \frac{\partial \xi}{\partial_{u_i}} R_i + \sum_{i=1}^{n} \frac{\partial \xi}{\partial_{p_{u,i}}} R_{n+i}\,.
\end{equation}
Moreover, given a parameterization $P:N \rightarrow M$, and vector fields $X$ and $R$ on $M$ and $N$, respectively, we introduce the operator
\begin{equation}\label{eq:cR:general}
\cR_{P,X,R}(\xi)=\Dif X \circ P \xi - \Lie{R}(\xi)\,,
\end{equation}
acting on functions $\xi : N \rightarrow \RR$.
We extend the notation in~\eqref{eq:Lie} and~\eqref{eq:cR:general} component-wise for matrix functions $\xi : N \rightarrow \mathrm{M}_{m\times n}$. In other to simplify the notation, we will write $\cR_0\equiv \cR_{P_0,X_0,R_0}$.

Given a parameterization $P_0:N \rightarrow M$ of a NHIM, with $N=\TT^n \times \RR^n$ and $M=N \times \TT\times \RR$,
we can take derivatives at both sides of the invariance equation $X_0 \circ P_0 = \Dif P_0 R_0$
thus obtaining
\[
\Dif X_0 \circ P_0 \Dif P_0 = \Dif(\Dif P_0 R_0)=\Lie{R_0} (\Dif P_0) + \Dif P_0 \Dif R_0.
\]
This means that the tangent vectors of $P_0(N)$ partially characterize the
action of the operator $\cR_0$ in~\eqref{eq:cR:general}
as
\[
\cR_0(\Dif P_0)=\Dif P_0 \Dif R_0.
\]
Since $P_0(N)$ is normally hyperbolic, there exist maps $W_0 : N \rightarrow \mathrm{M}_{(2n+2)\times 2}$ parameterizing the normal bundle of $P_0(N)$, and
$\Gamma_0: N \rightarrow \mathrm{M}_{2\times 2}$ such that
\[
\cR_0(W_0)=W_0 \Gamma_0.
\]
From now on, we assume that $\Gamma_0$ is diagonal, and due to the
Hamiltonian structure we can write 
$$\Gamma_0=\begin{pmatrix}
\lambda_0&0\\0&-\lambda_0
\end{pmatrix}
\,.$$
Moreover, if we assume
that $P_0$ is compatible with the symplectic form $\sform$, then
it turns out that the matrix $W_0$ can be scaled in such a way
that the juxtaposed matrix $C_0:=(\Dif P_0~W_0)\in \rm M_{(2n+2)\times(2n+2)}$ defines a symplectic
frame as in Definition~\ref{def:symp:frame}.

The operator $\cR_0$ introduced above appears in
the perturbative equations~\eqref{eq:param0}--~\eqref{eq:paramn}
obtained in Section~\ref{ssec:inner:approx}. The following lemma
approaches the study of 
these equations
using the previously constructed frame. It is worth mentioning
that the fact the frame $\mathfrak{C}$ is assumed to be
symplectic is not really necessary. Nevertheless, it simplifies
some computations (for example the computation of the inverse $C_0^{-1}$).

\begin{lemma}\label{lem:coho}
Assume that $P_0:N \rightarrow M$ satisfies $X_0 \circ P_0 = \Dif P_0 R_0$,
with $N=\TT^n \times \RR^n$ and $M=N \times \TT\times \RR$. 
Given a map $\eta : N \rightarrow \RR^{2n+2}$, we consider the
following equation
\begin{equation}\label{eq:paramg}
\Dif X_0 \circ P_0 \xi - \Dif \xi R_0 - \Dif P_0 \rho = \eta
\end{equation}
for the unknowns $\xi: N \rightarrow \RR^{2n+2}$ and $\rho : N \rightarrow \RR^{2n}$.
Then, using the symplectic frame $\mathfrak{C}$ associated to the
matrix $C_0=(\Dif P_0~W_0)$ constructed above, it turns out that Eq.~\eqref{eq:paramg}
leads to
\begin{align}
- \Lie{R_0} (\hat \xi^C) + \Dif R_0 \hat \xi^C = {} & \hat \eta^C + \rho \label{eq:coho1}\\ 
- \Lie{R_0} (\hat \xi^H) + \Gamma_0 \hat \xi^H = {} & \hat \eta^H \label{eq:coho2}
\end{align}
where 
\[
\xi=C_0 \hat \xi = \Dif P_0 \hat \xi^C + W_0 \hat \xi^H~~\mbox{and}~~\hat \eta=
\begin{pmatrix}
\hat \eta^C \\
\hat \eta^H
\end{pmatrix}
= - \Omega_{n+1} C_0^\top \Omega_{n+1}\eta\,,
\]
with $\hat \eta^C:N\rightarrow \RR^{2n}$, $\hat \xi^C:N\rightarrow \RR^{2n}$, $\hat \eta^H:N\rightarrow \RR^2$ and $\hat\xi^H:N\rightarrow \RR^2$.
\end{lemma}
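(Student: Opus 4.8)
The plan is to transcribe Eq.~\eqref{eq:paramg} into the operator notation~\eqref{eq:cR:general} and then read it off in the symplectic frame $\mathfrak{C}$. Since $\Dif\xi R_0 = \Lie{R_0}(\xi)$, Eq.~\eqref{eq:paramg} is exactly $\cR_0(\xi) - \Dif P_0 \rho = \eta$. I would substitute the frame decomposition $\xi = C_0\hat\xi = \Dif P_0\,\hat\xi^C + W_0\,\hat\xi^H$ and invoke the Leibniz-type identity
\[
\cR_0(C_0\hat\xi) = \cR_0(C_0)\,\hat\xi - C_0\,\Lie{R_0}(\hat\xi),
\]
which follows from the ordinary product rule for $\Dif(\cdot)R_0$ applied to the matrix--vector product $C_0\hat\xi$ together with the definition of $\cR_0$; the point to watch is that the Lie derivative differentiates both factors, which is what produces the two terms on the right with the relative minus sign.

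Next I would feed in the two structural identities already available before the statement: differentiating the invariance equation $X_0\circ P_0 = \Dif P_0 R_0$ yields $\cR_0(\Dif P_0) = \Dif P_0\,\Dif R_0$, while normal hyperbolicity of $P_0(N)$ gives $\cR_0(W_0) = W_0\Gamma_0$ with $\Gamma_0$ diagonal (entries $\pm\lambda_0$). Applying $\cR_0$ columnwise to $C_0 = (\Dif P_0~W_0)$ therefore gives $\cR_0(C_0) = (\Dif P_0\,\Dif R_0~~W_0\Gamma_0)$, so that Eq.~\eqref{eq:paramg} turns into
\[
\Dif P_0\bigl(\Dif R_0\,\hat\xi^C - \Lie{R_0}(\hat\xi^C) - \rho\bigr) + W_0\bigl(\Gamma_0\,\hat\xi^H - \Lie{R_0}(\hat\xi^H)\bigr) = \eta.
\]

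To split this into the two asserted equations I would express $\eta$ in the same frame. Here the symplecticity of $C_0$ is what makes the computation explicit: from $C_0^\top\Omega_{n+1}C_0 = \Omega_{n+1}$ and $\Omega_{n+1}^2 = -I_{2n+2}$ one gets $C_0^{-1} = -\Omega_{n+1}C_0^\top\Omega_{n+1}$, hence $\eta = C_0\hat\eta$ with $\hat\eta = -\Omega_{n+1}C_0^\top\Omega_{n+1}\eta$, split as $(\hat\eta^C,\hat\eta^H)$ exactly as in the statement. Since $C_0$ is invertible, $\Dif P_0$ and $W_0$ have full column rank and the decomposition of a vector field along $T_{P_0(N)}M$ into its $\Dif P_0$-part and its $W_0$-part is unique; equating the two parts of the displayed identity gives $\Dif R_0\,\hat\xi^C - \Lie{R_0}(\hat\xi^C) - \rho = \hat\eta^C$ and $\Gamma_0\,\hat\xi^H - \Lie{R_0}(\hat\xi^H) = \hat\eta^H$, which are precisely~\eqref{eq:coho1} and~\eqref{eq:coho2}.

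The argument is almost entirely bookkeeping; the only genuinely delicate points are getting the product rule $\cR_0(C_0\hat\xi) = \cR_0(C_0)\hat\xi - C_0\Lie{R_0}(\hat\xi)$ right (the sign, and which factor the derivative hits) and justifying that equating the $\Dif P_0$- and $W_0$-components is legitimate, i.e. that $\{\Dif P_0, W_0\}$ is a genuine moving frame. Both are consequences of the invertibility of $C_0$, which the symplectic normalization of $W_0$ built into the construction guarantees, so I would not expect any real obstacle beyond careful indexing.
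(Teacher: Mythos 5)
Your proof is correct and takes essentially the same route as the paper: both substitute $\xi = C_0\hat\xi$, apply the Leibniz-type identity $\cR_0(C_0\hat\xi) = \cR_0(C_0)\hat\xi - C_0\Lie{R_0}(\hat\xi)$, use the two structural identities $\cR_0(\Dif P_0) = \Dif P_0\,\Dif R_0$ and $\cR_0(W_0) = W_0\Gamma_0$, and exploit the symplectic inverse $C_0^{-1} = -\Omega_{n+1}C_0^\top\Omega_{n+1}$ to read off the two component equations. The only cosmetic difference is that the paper left-multiplies by $C_0^{-1}$ and evaluates the resulting block products explicitly, while you argue by the uniqueness of the decomposition into $\Dif P_0$- and $W_0$-components; these are equivalent because $C_0$ is invertible.
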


\begin{proof}
Let us observe that the fact that $\mathfrak{C}$ is chosen to be symplectic
allows us to compute the inverse of $C_0$ as follows
\[
C_0^{-1}=\Omega_{n+1}^{-1} C_0^\top \Omega_{n+1}=-\Omega_{n+1} C_0^\top \Omega_{n+1}\,.
\]
We also notice that the action of $\cR_0$ on the matrix $C_0 \hat \xi$
takes the form
\[
\cR_0(C_0 \hat \xi) = \cR_0(C_0) \hat \xi - C_0 \Lie{R_0}(\hat \xi)\,,
\]
and that
\begin{align*}
C_0^{-1} \cR_0(C_0) = {} & - \Omega_{n+1} C_0^\top \Omega_{n+1} \left( \Dif X_0 \circ P_0 C_0 - \Lie{R_0}(C_0) \right) \\
= {} & - \Omega_{n+1}
\begin{pmatrix}
\Dif P_0^\top \Omega_{n+1} \Dif P_0 \Dif R_0 & \Dif P_0^\top \Omega_{n+1} W_0 \Gamma_0 \\
W_0^\top \Omega_{n+1} \Dif P_0 \Dif R_0 & W_0^\top \Omega_{n+1} W_0 \Gamma_0
\end{pmatrix} \\
= {} & - \Omega_{n+1} \Omega_{n+1}
\begin{pmatrix}
\Dif R_0 & O_{2n \times 2} \\
O_{2 \times 2n} & \Gamma_0
\end{pmatrix}
=\begin{pmatrix}
\Dif R_0 & O_{2n \times 2} \\
O_{2 \times 2n} & \Gamma_0
\end{pmatrix}\,.
\end{align*}
Introducing $\xi=C_0 \hat \xi=\Dif P_0 \hat \xi^C + W_0 \hat \xi^H$
into Eq.~\eqref{eq:paramg}, we obtain
\[
\begin{pmatrix}
\Dif R_0 & O_{2n \times 2} \\
O_{2 \times 2n} & \Gamma_0
\end{pmatrix}
\begin{pmatrix}
\hat \xi^C \\
\hat \xi^H 
\end{pmatrix}
-
\begin{pmatrix}
\Lie{R_0} (\hat \xi^C) \\
\Lie{R_0} (\hat \xi^H) 
\end{pmatrix}
-C_0^{-1}
\Dif P_0 \rho = C_0^{-1} \eta.
\]
Then, we observe that
\[
-C_0^{-1} \Dif P_0 \rho = \Omega_{n+1} C_0^\top \Omega_{n+1} \Dif P_0 \rho= \Omega_{n+1}
\begin{pmatrix}
\Dif P_0^\top \Omega_{n+1} \Dif P_0 \\
W_0^\top \Omega_{n+1} \Dif P_0
\end{pmatrix}\rho
=-
\begin{pmatrix}
I_{2n} \\
O_{2 \times 2n}
\end{pmatrix}\rho\,.
\]
Finally, using the symplectic structure, we introduce $\hat \eta^C$ and $\hat \eta^H$ as in the statement of the lemma, thus ending up with the equations~\eqref{eq:coho1} and~\eqref{eq:coho2}.
\end{proof}

It is standard to check that the solution of Eq.~\eqref{eq:coho2} is unique. In
our particular case (see computations in Section~\ref{ssec:NHIM:ABC}),
it turns out that $R_0$ produces an integrable quasi-periodic motion
in $N$, and hence, we can solve~\eqref{eq:coho2} using Fourier series.
In particular, if we have a function $\beta : N \rightarrow \RR$ expressed in
Fourier series as
\[
\beta(u,p_u)= \sum_{k\in \ZZ^n} \left( \beta^{\cos}_k(p_u) \cos(k \cdot u) + \beta^{\sin}_k(p_u) \sin(k\cdot u) \right),
\]
with $\beta^{\sin}_0 \equiv 0$, then it turns out that the solution $\xi$ of the equation
$\lambda \xi-\Lie{R_0}(\xi)=\beta$ is given by
\[
\xi(u,p_u)= \sum_{k\in \ZZ^n} \left( \xi^{\cos}_k(p_u) \cos(k \cdot u) + \xi^{\sin}_k(p_u) \sin(k\cdot u) \right),
\]
with
\begin{equation}\label{eq:solc:coho:general}
\xi^{\cos}_k = \frac{\beta^{\cos}_k \lambda+\omega \cdot \beta^{\sin}_k}{\lambda^2 + (\omega \cdot k)^2}\,,
\qquad
\xi^{\sin}_k = \frac{\beta^{\sin}_k \lambda-\omega \cdot \beta^{\cos}_k}{\lambda^2 + (\omega \cdot k)^2}\,.
\end{equation}
In case that $R_0$ takes a more general form, Eq.~\eqref{eq:coho2}
can be solved using the asymptotic properties of the cocycle.

As was mentioned in Section~\ref{ssec:inner:approx}, the solution of
Eq.~\eqref{eq:coho1} is not unique. A simple choice consists in taking
\begin{equation}\label{eq:particular}
\hat \xi^C = O_{2n \times 1}, \qquad \rho=-\hat \eta^C,
\end{equation}
but, in general, this solution will not determine a parameterization
which is compatible with the symplectic structure of the problem. The final goal
of this section is to compute the deformation of the symplectic frame $\mathfrak{C}$
with respect to the perturbation parameter and to combine Proposition~\ref{prop:NHIM:def}
and Lemma~\ref{lem:coho} in order to obtain the canonical symplectic structure in the deformed NHIM.

Assume that $P_\ep : N \rightarrow M$, with $N=\TT^n \times \RR^n$ and
$M=N \times \TT \times \RR$, is a family of parameterizations satisfying
$X_\ep \circ P_\ep = \Dif P_\ep R_\ep$, where $X_\ep$ is a family of Hamiltonian
vector fields with the symplectic form $\sform$ given by~\eqref{eq:sform:canonic}.
Let us consider $\cP_\ep$, the infinitesimal deformation of the family $P_\ep$
with initial condition $P_0$. A simple computation shows that
\[
\frac{d\cP_\ep}{d\ep}=P_0 + 2 P_2 \ep + 3 P_3 \ep^2 + \ldots = \cP_0 \circ P_0 + (\Dif \cP_0 \circ P_0 P_1 + \cP_1 \circ P_0) \ep +\ldots\,,
\]
thus obtaining
\begin{align}
\mbox{0th order:} & \qquad \hphantom{2n} P_1 = \cP_0 \circ P_0 \label{eq:defP1} \\
\mbox{1st order:} & \qquad \hphantom{n} 2 P_2 = \cP_1 \circ P_0 + \Dif \cP_0 \circ P_0 P_1 \label{eq:defP2} \\
\mbox{$n$th order:} & \qquad \hphantom{2} n P_n = \cP_n \circ P_0 + S_n \label{eq:defPn}
\end{align}
where $S_n$ is an explicit expression depending recursively on the previously
computed objects.

Let us consider the first order correction determined by Eq.~\eqref{eq:param1}.
We apply Lemma~\ref{lem:coho} with
\[
\xi=P_1, \qquad \rho=R_1, \qquad \eta=-X_1 \circ P_0
\]
and we consider the unique solution of Eqs.~\eqref{eq:coho1} and~\eqref{eq:coho2}
satisfying Eq.~\eqref{eq:particular}. In Eq.~\eqref{eq:defP1} we observe that $P_1$ is proportional to $\cP_0$. Hence, it turns out that
the deformation $\cP_0$ vanishes on the central directions. 
By Proposition~\ref{prop:NHIM:def}, we conclude that the reduced vector field $R_1$ 
is a Hamiltonian vector field with
respect to the form $\Omega_n^0$.

The second order correction is not so simple. On the one hand, we observe that
$P_2$ and $\cP_1$ are no longer proportional. On the other hand, we
have to consider Proposition~\ref{prop:NHIM:def} on the deformed
symplectic frame. Let us assume that we have computed $P_1$, $R_1$,
and also the first order correction of the symplectic frame, 
that is, $C_\ep=C_0+\ep C_1+\cO(\ep^2)$.
Then, we express the infinitesimal deformation $\cP_\ep$ on the frame $\mathfrak{C}_\ep$ perturbatively as
\begin{align*}
C_\ep^{-1} & \cP_\ep(P_\ep) = \\
& C_0^{-1} \cP_0 \circ P_0 + \ep(C_0^{-1} \Dif \cP_0 \circ P_0 P_1 
+ C_0^{-1} \cP_1 \circ P_0 - C_0^{-1} C_1 C_0^{-1} \cP_0 \circ P_0) + \cO(\ep^2).
\end{align*}
By construction, it is clear that
\[
C_0^{-1} \cP_0 \circ P_0 = C_0^{-1} P_1 = C_0^{-1} C_0 \hat \xi_1 = \hat \xi_1 =
\begin{pmatrix}
O_{2n\times 1} \\
\hat \xi_1^H
\end{pmatrix}.
\]
We ask the same condition for the $\ep$-order terms, thus obtaining that
\[
C_0^{-1} \Dif \cP_0 \circ P_0 P_1 + C_0^{-1} \cP_1 \circ P_0 - C_0^{-1} C_1 C_0^{-1} \cP_0 \circ P_0 =
\begin{pmatrix}
O_{2n\times 1} \\
\zeta
\end{pmatrix},
\]
for certain $\zeta : N \rightarrow \RR^2$ whose expression is not important for us.
Then, we use again that $\cP_0 \circ P_0 = P_1 = C_0 \hat \xi_1$, we
replace $\cP_1 \circ P_0$ using~\eqref{eq:defP2}, and we write $P_2=C_0 \hat \xi_2$,
thus obtaining the condition
\begin{equation}\label{eq:xi2:corre}
2 \hat \xi_2 - C_0^{-1} C_1  \begin{pmatrix}
O_{2n\times 1} \\
\hat \xi_1^H
\end{pmatrix} =
\begin{pmatrix}
O_{2n\times 1} \\
\zeta
\end{pmatrix}
\end{equation}
that determines the first $2n$ components $\hat \xi^C_2$ of $\hat \xi_2$.
Therefore, we can solve the second order correction of the invariance
equation, given by~\eqref{eq:param2}, using Lemma~\ref{lem:coho}
with 
\[
\xi=P_2, \qquad \rho=R_2, \qquad
\eta=-X_2 \circ P_0 +\Dif P_1 R_1 - \frac{1}{2} \Dif^2 X_0 \circ P_0 P_1^{2 \otimes} - DX_1 \circ P_1
\]
and choosing the unique solution obtained by fixing $\hat \xi^C_2$ satisfying Eq.~\eqref{eq:xi2:corre}.
Then, the corresponding correction of the reduced vector field, 
\begin{equation}\label{tonto}
R_2 = \Dif R_0 \hat \xi^C - \Lie{R_0}(\hat \xi^C) - \hat \eta^C,
\end{equation}
is a Hamiltonian vector field with
respect to the form $\Omega_n^0$.

Finally, we need to give a simple recipe to compute the first order correction $C_1$ of the
symplectic frame.
The construction
is analogous up to any order, but this is enough for our purposes. 
We will construct the frame taking $C_1=(\Dif P_1~W_1)$, where $W_1$ is computed as follows.
On the one hand,
we assume that we have computed $P_\ep=P_0 +\ep P_1 + \cO(\ep^2)$ so that we have
(the computation is direct)
\[
\cR_{P_\ep,X_\ep,R_\ep} (\Dif P_0+\ep \Dif P_1) = (\Dif P_0+\ep \Dif P_1) (\Dif R_0 + \ep \Dif R_1) + \cO(\ep^2),
\]
where we recall that $\cR_{P_\ep,X_\ep,R_\ep}$ is given by Eq.~\eqref{eq:cR:general}.
On the other hand, we look for $W_1$ and $\Gamma_1$ is such a way that
the action of $\cR_{P_\ep,X_\ep,R_\ep}$ on the matrix $W_0 + \ep W_1$ is given by
\[
\cR_{P_\ep,X_\ep,R_\ep} (W_0+\ep W_1) = (W_0+\ep W_1) (\Gamma_0 + \ep \Gamma_1) + \cO(\ep^2)\,.
\]
We observe that this condition is satisfied if
\begin{equation}\label{eq:cond:W1}
(\Dif X_0 \circ P_0) W_1 - \Lie{R_0} (W_1) - W_0 \Gamma_1 - W_1 \Gamma_0 = S_1\,,
\end{equation}
where
\[
S_1 := \Lie{R_1}(W_0)-\Dif X_1 \circ P_0 W_0 - \Dif^2 X_0 \circ P_0 P_1 \otimes W_0\,.
\]
Again, the solutions of Eq.~\eqref{eq:cond:W1} are obtained by considering the
action of the unperturbed operator $\cR_0$.
In the following result, analogous to Lemma~\ref{lem:coho}, we
study the above equation.

\begin{lemma}\label{lem:coho2}
Assume that $P_0:N \rightarrow M$ satisfies $X_0 \circ P_0 = \Dif P_0 R_0$,
with $N=\TT^n \times \RR^n$ and $M=N \times \TT\times \RR$. Assume that
the pair $P_1$ and $R_1$ is a solution of equation~\eqref{eq:param1}, that is,
we have
\[
(X_0+\ep X_1) \circ (P_0+\ep P_1)=(\Dif P_0+\ep \Dif P_1)(R_0+\ep R_1) + \cO(\ep^2).
\]
Then, using the symplectic frame $\mathfrak{C}$ associated to the
matrix $C_0=(\Dif P_0~W_0)$, it turns out that equation~\eqref{eq:cond:W1}
leads to
\begin{align}
-\Lie{R_0} (\hat W_1^C) + \Dif R_0 \hat W_1^C - \hat W_1^C \Gamma_0= {} & \hat S_1^C\,, \label{eq:coho3}\\ 
-\Lie{R_0} (\hat W_1^H) + \Gamma_0 \hat W_1^H-\hat W^H_1 \Gamma_0 = {} & \hat S_1^H - \Gamma_1\,,\label{eq:coho4}
\end{align}
where 
\begin{equation}\label{eq:not:split}
W_1=C_0 \hat W_1 = \Dif P_0 \hat W_1^C + W_0 \hat W_1^H~~\mbox{and}~~
\begin{pmatrix}
\hat S_1^C \\
\hat S_1^H
\end{pmatrix}
= - \Omega_{n+1} C_0^\top \Omega_{n+1} S_1\,.
\end{equation}
\end{lemma}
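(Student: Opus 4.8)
The plan is to reproduce, essentially line by line, the manipulation carried out in the proof of Lemma~\ref{lem:coho}: pass to the symplectic frame $\mathfrak{C}$ built from $C_0=(\Dif P_0~W_0)$, diagonalise the action of $\cR_0$ in this frame, and read off the two resulting equations. First I would recognise that the left-hand side of~\eqref{eq:cond:W1} is $\cR_0(W_1)-W_0\Gamma_1-W_1\Gamma_0$, where $\cR_0(W_1)=(\Dif X_0\circ P_0)W_1-\Lie{R_0}(W_1)$ is exactly the operator appearing (on vectors) in Lemma~\ref{lem:coho}, now applied columnwise to the matrix $W_1$. Writing $W_1=C_0\hat W_1=\Dif P_0\hat W_1^C+W_0\hat W_1^H$ as in~\eqref{eq:not:split}, and recalling that, $\mathfrak{C}$ being symplectic, $C_0^{-1}=-\Omega_{n+1}C_0^\top\Omega_{n+1}$ (which is also what gives meaning to the definition $\begin{pmatrix}\hat S_1^C\\ \hat S_1^H\end{pmatrix}=C_0^{-1}S_1$ in~\eqref{eq:not:split}), I would then substitute into~\eqref{eq:cond:W1} and multiply on the left by $C_0^{-1}$.

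The computation then rests on three identities that carry over unchanged from the proof of Lemma~\ref{lem:coho}. The first is the Leibniz-type rule $\cR_0(C_0\hat W_1)=\cR_0(C_0)\hat W_1-C_0\Lie{R_0}(\hat W_1)$, which still holds for matrix-valued $\hat W_1$ because $\Dif X_0\circ P_0$ acts by left multiplication and $\Lie{R_0}$ is a derivation acting entrywise. The second is the block identity $C_0^{-1}\cR_0(C_0)=\begin{pmatrix}\Dif R_0&O_{2n\times 2}\\ O_{2\times 2n}&\Gamma_0\end{pmatrix}$, which follows at once from the defining relations $\cR_0(\Dif P_0)=\Dif P_0\Dif R_0$ and $\cR_0(W_0)=W_0\Gamma_0$. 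The third is $C_0^{-1}W_0=\begin{pmatrix}O_{2n\times 2}\\ I_2\end{pmatrix}$, since $W_0$ consists of the last two columns of $C_0$. Using also that $C_0^{-1}$ commutes past the right multiplications by $\Gamma_0$ and $\Gamma_1$, equation~\eqref{eq:cond:W1} becomes, after splitting into its first $2n$ rows and its last $2$ rows,
\[
\Dif R_0\hat W_1^C-\Lie{R_0}(\hat W_1^C)-\hat W_1^C\Gamma_0=\hat S_1^C,\qquad
\Gamma_0\hat W_1^H-\Lie{R_0}(\hat W_1^H)-\hat W_1^H\Gamma_0-\Gamma_1=\hat S_1^H,
\]
which, upon moving $\Gamma_1$ to the right-hand side, are precisely~\eqref{eq:coho3} and~\eqref{eq:coho4} (the sign convention for $\Gamma_1$ being immaterial, as $\Gamma_1$ is one of the unknowns being solved for).

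I do not foresee any genuine difficulty here: the content is linear-algebra bookkeeping on top of what was already done for Lemma~\ref{lem:coho}. The only point worth a line of care is that $W_1$ is now matrix-valued and is multiplied on the right by the $2\times 2$ matrices $\Gamma_0,\Gamma_1$, so one must check that the change of frame (a left multiplication) interacts cleanly with these right multiplications and with the entrywise Lie derivative; all of this is immediate. It would be natural, although not demanded by the statement of the lemma, to follow this reduction with the observation that~\eqref{eq:coho3} is solved exactly like~\eqref{eq:coho1} (for instance by the particular solution analogous to~\eqref{eq:particular}, via the scalar formulas~\eqref{eq:solc:coho:general} applied entrywise when $R_0$ is quasi-periodic), while in~\eqref{eq:coho4} one must first choose $\Gamma_1$ --- diagonal, so as to keep $\Gamma_\ep$ diagonal --- in order to cancel the resonant (diagonal, zero-frequency) part of $\hat S_1^H$, after which the off-diagonal entries of $\hat W_1^H$ are recovered from cohomological equations with shifts $\pm 2\lambda_0$; this mirrors the way $\Gamma_0$ itself arises as the average of the unperturbed cocycle.
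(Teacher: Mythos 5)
Your proposal reproduces the paper's argument step for step: write $W_1=C_0\hat W_1$, use the Leibniz rule for $\cR_0$ together with the block identity $C_0^{-1}\cR_0(C_0)=\begin{pmatrix}\Dif R_0&O_{2n\times 2}\\O_{2\times 2n}&\Gamma_0\end{pmatrix}$ and $C_0^{-1}W_0=\begin{pmatrix}O_{2n\times 2}\\I_2\end{pmatrix}$ after left-multiplying~\eqref{eq:cond:W1} by $C_0^{-1}=-\Omega_{n+1}C_0^\top\Omega_{n+1}$, and split into the first $2n$ and last $2$ rows to obtain~\eqref{eq:coho3} and~\eqref{eq:coho4}. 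You also correctly flag that the careful computation gives $\hat S_1^H+\Gamma_1$ rather than $\hat S_1^H-\Gamma_1$ on the right of~\eqref{eq:coho4}; this traces to a sign slip in the paper's display $\Omega_{n+1}C_0^\top\Omega_{n+1}W_0=\begin{pmatrix}O_{2n\times 2}\\I_2\end{pmatrix}$, which (since $W_0^\top\Omega_{n+1}W_0=\Omega_1^0$ and $(\Omega_1^0)^2=-I_2$) should carry an overall minus sign, and, as you observe, the discrepancy is harmless because $\Gamma_1$ is itself a free unknown to be chosen.
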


\begin{proof}
We recall that the frame $\mathfrak{C}$ satisfies
\[
\cR_0(C_0)=\Dif X_0\circ P_0 C_0 -\Dif (C_0) R_0 = C_0
\begin{pmatrix}
\Dif R_0 & O_{2n \times 2}\\
O_{2\times 2n} & \Gamma_0
\end{pmatrix}.
\]
Then, we compute the action of $\cR_0$ on $W_1=C_0 \hat W_1$ as follows
\[
\cR_0(C_0 \hat W_1) = C_0
\begin{pmatrix}
\Dif R_0 & O_{2n \times 2}\\
O_{2\times 2n} & \Gamma_0
\end{pmatrix} \hat W_1 - C_0 \Lie{R_0} (\hat W_1)\,,
\]
and we introduce this expression into~\eqref{eq:cond:W1}, thus obtaining
\[
C_0 \begin{pmatrix}
\Dif R_0 & O_{2n \times 2}\\
O_{2\times 2n} & \Gamma_0
\end{pmatrix} \hat W_1 - C_0 \Lie{R_0} (\hat W_1) - W_0 \Gamma_1 - C_0 \hat W_1 \Gamma_0 =S_1\,.
\]
Using the symplectic properties of the frame, we multiply both sides by
$C_0^{-1}=-\Omega_{n+1} C_0^\top \Omega_{n+1}$ and we end up with
\[
\begin{pmatrix}
\Dif R_0 & O_{2n \times 2}\\
O_{2\times 2n} & \Gamma_0
\end{pmatrix} \hat W_1 - \Dif (\hat W_1) R_0 + \Omega_{n+1} C_0^\top \Omega_{n+1} W_0 \Gamma_1 - \hat W_1 \Gamma_0 = -\Omega_{n+1} C_0^\top \Omega_{n+1} S_1\,.
\]
Finally, we observe that
\[
\Omega_{n+1} C_0^\top \Omega_{n+1} W_0 = \Omega_{n+1}
\begin{pmatrix}
\Dif P_0^\top \Omega_{n+1} W_0 \\
W_0^\top \Omega_{n+1} W_0
\end{pmatrix}
=
\begin{pmatrix}
O_{2n\times 2} \\
I_2
\end{pmatrix}\,,
\]
and using the notation in~\eqref{eq:not:split} we obtain the equations~\eqref{eq:coho3}
and~\eqref{eq:coho4}.
\end{proof}

Finally, we discuss the solution of equations~\eqref{eq:coho3}
and~\eqref{eq:coho4}. On the one hand, we observe that equation~\eqref{eq:coho3}
is similar to equation~\eqref{eq:coho2} in the sense that it can be solved
using Fourier series, obtaining a unique solution. On the other
hand, we observe that the diagonal part of the left hand side of equation~\eqref{eq:coho4}
is resonant. We can avoid this resonance by selecting $\Gamma_1$. To this end, we consider
the particular choice
\[
\Gamma_1 = \mathrm{diag} \langle \hat S_1^H \rangle\,,
\]
where $\langle \cdot \rangle$ stands for the average with respect to the variables $u \in \TT^n$.
Obviously, this choice preserves the diagonal character of
the matrix $\Gamma_\ep = \Gamma_0+ \ep \Gamma_1+\ldots$.

\subsection{Perturbative computation of the NHIM of the ABC system}
\label{ssec:NHIM:ABC}

The goal of this section is to compute the NHIM $\Lambda_\ep$ associated to
the ABC system in the perturbative setting given by Eqs.~\eqref{eq:Ham:exp}--~\eqref{eq:H2}. We follow the notation and methodology described
in Section~\ref{ssec:NHIM:symp}.

First, it is convenient to reorder the phase-space coordinates
as $(x,y,p_x,p_y,z,p_z)$ rather than $(x,y,z,p_x,p_y,p_z)$. In analogy with Section~\ref{ssec:NHIM:symp},
we have $(u,p_u)=(x,y,p_x,p_y) \in N=\TT^2 \times \RR^2$ and $(v,p_v)=(z,p_z)\in \TT \times \RR$. Then,
we consider the symplectic form $\sform$, and its matrix representation $\Omega_3$, given by
\[
\sform = dp_x \wedge dx + dp_y \wedge dy + dp_z \wedge d z, \qquad
\Omega_3 =
\begin{pmatrix}
\Omega^0_{2} & O_{4 \times 2} \\
O_{2 \times 4} & \Omega^{0}_{1}
\end{pmatrix}.
\]
With the above notation, we have $X_\ep=\Omega_3^{-1} \Dif H_\ep^\top=-\Omega_3 \Dif H_\ep^\top$.

We start with the explicit characterization of the unperturbed problem,
giving rise to the expressions
\[
X_0 =
\begin{pmatrix}
p_x - \sin z \\
p_y - \cos z \\
0 \\
0 \\
p_z \\
p_x \cos z - p_y \sin z
\end{pmatrix},
\qquad
P_0 =
\begin{pmatrix}
x \\
y \\
p_x \\
p_y \\
z^*=\arctan (p_x/p_y) + \pi \\
p_z^*=0
\end{pmatrix},
\]
and the corresponding derivatives
\[
\Dif X_0 \circ P_0 =
\begin{pmatrix}
0 & 0 &        1 &          0 &-\cos z^* &  0 \\
0 & 0 &        0 &          1 & \sin z^* &  0 \\
0 & 0 &        0 &          0 &0         &  0 \\
0 & 0 &        0 &          0 &0         &  0 \\
0 & 0 &        0 &          0 &0         &  1 \\
0 & 0 & \cos z^* & - \sin z^* &\lambda^2 &  0 
\end{pmatrix},~~
\Dif P_0 =
\begin{pmatrix}
1 & 0 & 0 & 0 \\
0 & 1 & 0 & 0 \\
0 & 0 & 1 & 0 \\
0 & 0 & 0 & 1 \\
0 & 0 & p_y \lambda^{-4} & -p_x \lambda^{-4} \\
0 & 0 & 0 & 0 
\end{pmatrix}\,,
\]
where we recall that 
$\lambda = (p_x^2 + p_y^2)^{1/4}$, 
$\sin z^* = - p_x/\lambda^2$, and 
$\cos z^* = - p_y/\lambda^2$
(see computations in Section~\ref{ssec:geom:unper})
Notice that the parameterization $P_0$ given above is compatible with the form $\sform$
according to Definition~\ref{def:symp:frame}.

To obtain the unperturbed symplectic frame we take the columns of $\Dif P_0$
and we complement them with the eigenvectors of $\Dif X_0 \circ P_0$
of eigenvalues $\lambda$ and $-\lambda$, that we suitable scale in
order to obtain a symplectic frame. Specifically, we take
\[
C_0 =
\begin{pmatrix}
1 & 0 & 0                & 0                 &  (\sqrt{2}/2)p_y \lambda^{-7/2} &  (\sqrt{2}/2)p_y \lambda^{-7/2} \\
0 & 1 & 0                & 0                 & -(\sqrt{2}/2)p_x \lambda^{-7/2} & -(\sqrt{2}/2)p_x \lambda^{-7/2} \\
0 & 0 & 1                & 0                 &                               0 & 0 \\
0 & 0 & 0                & 1                 &                               0 & 0 \\
0 & 0 & p_y \lambda^{-4} & -p_x \lambda^{-4} & (\sqrt{2}/2) \lambda^{-1/2}     & -(\sqrt{2}/2) \lambda^{-1/2} \\
0 & 0 & 0                & 0                 & (\sqrt{2}/2) \lambda^{1/2}      & (\sqrt{2}/2) \lambda^{1/2}
\end{pmatrix},
\]
and we left as an exercise to the reader to check that $C_0(x)^\top \Omega_3 C_0(x)=\Omega_3$,
where $\Omega_3$ is the matrix of the canonical symplectic form. 
The inverse of $C_0$ is given by
\[
C_0^{-1} =
\begin{pmatrix}
1 & 0 & 0 & 0 &0 &  -p_y \lambda^{-4} \\
0 & 1 & 0 & 0 &0 &   p_x \lambda^{-4} \\
0 & 0 & 1 & 0 &0 &  0 \\
0 & 0 & 0 & 1 &0 &  0 \\
0 & 0 &-(\sqrt{2}/2) p_y \lambda^{-7/2} & (\sqrt{2}/2) p_x \lambda^{-7/2} &  (\sqrt{2}/2) \lambda^{1/2} &  (\sqrt{2}/2) \lambda^{-1/2} \\
0 & 0 & (\sqrt{2}/2) p_y \lambda^{-7/2} & -(\sqrt{2}/2) p_x \lambda^{-7/2} & -(\sqrt{2}/2) \lambda^{1/2} & (\sqrt{2}/2) \lambda^{-1/2}
\end{pmatrix}\,,
\]
and it turns out that this frame allows us to reduce $\Dif X_0 \circ P_0$ as follows
\[
C_0^{-1} \Dif X_0 \circ P_0 C_0 =
\begin{pmatrix}
0 & 0 & p_y^2 \lambda^{-6} +1 & -p_xp_y \lambda^{-6} & 0 &0 \\
0 & 0 & -p_x p_y \lambda^{-6} & p_x^2 \lambda^{-6} +1 & 0 & 0 \\
0 & 0 & 0 & 0 & 0 & 0 \\
0 & 0 & 0 & 0 & 0 & 0 \\
0 & 0 & 0 & 0 & \lambda & 0 \\
0 & 0 & 0 & 0 & 0 & -\lambda
\end{pmatrix}.
\]
Notice that this expression corresponds to $C_0^{-1} \cR_0(C_0)$ since
in this case $C_0^{-1} \Dif (C_0)R_0=0$.
Finally, the reduced vector field is given by
$R_0 = \omega_1 \partial_x + \omega_2 \partial_y$,
where we recall that 
$\omega_1 = p_x (1+\lambda^{-2})$, and
$\omega_2 = p_y (1+\lambda^{-2})$.

To obtain the corrections $P_1$ and $R_1$ of the parameterization and the
reduced vector field, respectively, we consider the equation
\[
(\Dif X_{0} \circ P_0) P_1 - \Dif P_0 R_1 - \Dif P_1 R_0 = \eta
\]
where
\[
\eta =  - X_{1} \circ P_0 =
\begin{pmatrix}
\hat C \cos y \\
\hat B \sin x \\
- \hat B \omega_2 \cos x \\
\hat C \omega_1 \sin y \\
\hat C \sin y + \hat B \cos x \\
\hat C \cos z^* \cos y - \hat B \sin z^* \sin x
\end{pmatrix}.
\]
Using Lemma~\ref{lem:coho}, with $P_1=C_0 \hat \xi$ and $R_1=\rho$, we obtain the equivalent system of equations
\begin{align}
(p_y^2 \lambda^{-6}+1) \hat \xi_3 - p_xp_y \lambda^{-6} \hat \xi_4 - \Lie{R_0}(\hat \xi_1) = {} & \hat \eta_1 + \rho_1\,, \label{eq:abc:1:coho1} \\
-p_x p_y \lambda^{-6} \hat \xi_3 + (p_x^2 \lambda^{-6} +1) \hat \xi_4 - \Lie{R_0}(\hat \xi_2) = {} & \hat \eta_2 + \rho_2\,, \label{eq:abc:1:coho2} \\
- \Lie{R_0}(\hat \xi_3) = {} & \hat \eta_3 + \rho_3\,, \label{eq:abc:1:coho3} \\
- \Lie{R_0}(\hat \xi_4) = {} & \hat \eta_4 + \rho_4\,, \label{eq:abc:1:coho4} \\
\lambda \hat \xi_5 - \Lie{R_0}(\hat \xi_5) = {} & \hat \eta_5\,, \label{eq:abc:1:coho5}\\
-\lambda \hat \xi_6- \Lie{R_0}(\hat \xi_6) = {} & \hat \eta_6\,, \label{eq:abc:1:coho6}
\end{align}
where $\Lie{R_0}(\hat \xi_i)=\omega_1 \partial_x \hat \xi_i + \omega_2 \partial_y \hat \xi_i$, and
\[
\hat \eta = C_0^{-1} \eta =
\begin{pmatrix}
\hat \eta_1 \\
\hat \eta_2 \\
\hat \eta_3 \\
\hat \eta_4 \\
\hat \eta_5 \\
\hat \eta_6 \\
\end{pmatrix}
=
\begin{pmatrix}
A_1 \cos y + A_2 \sin x \\
A_3 \cos y + A_4 \sin x \\
A_5 \cos x \\
A_6 \sin y \\
A_7 \cos x + A_8 \cos y + A_9 \sin x + A_{10} \sin y \\
A_{11}\cos x + A_{12} \cos y + A_{13} \sin x + A_{14} \sin y \\
\end{pmatrix}.
\]
The coefficients $A_i$, $i=1,\ldots 14$, are functions depending on the action variables $p_x,p_y$,
given by
{\allowdisplaybreaks
\begin{equation}\label{eq:coefs:A}
\begin{split}
A_1 = {} & \vphantom{\sqrt{2}{2}} \hat C(1+p_y^2 \lambda^{-6}) \\
A_2 = {} & \vphantom{\sqrt{2}{2}} - \hat B p_x p_y \lambda^{-6} \\
A_3 = {} & \vphantom{\sqrt{2}{2}} - \hat C p_x p_y \lambda^{-6} \\
A_4 = {} &\vphantom{\sqrt{2}{2}}  \hat B(1+p_x^2 \lambda^{-6}) \\
A_5 = {} & \vphantom{\sqrt{2}{2}} - \hat B \omega_2 \\
A_6 = {} & \vphantom{\sqrt{2}{2}} \hat C \omega_1 \\
A_7 = {} & \sqrt{2}/2 (\lambda^6 + \lambda^2 p_y^2 + p_y^2) \hat B \lambda^{-11/2}
\end{split}
\qquad
\qquad
\begin{split}
A_8 = {} & -\sqrt{2}/2 \hat C p_y \lambda^{-5/2} \\
A_9 = {} & \sqrt{2}/2 \hat B p_x \lambda^{-5/2} \\
A_{10}={}& \sqrt{2}/2 (\lambda^6 + \lambda^2 p_x^2 + p_x^2) \hat C \lambda^{-11/2} \\
A_{11}={}& -\sqrt{2}/2 (\lambda^6 + \lambda^2 p_y^2 + p_y^2) \hat B \lambda^{-11/2} \\
A_{12}={}& -\sqrt{2}/2 \hat C p_y \lambda^{-5/2} \\
A_{13}={}& \sqrt{2}/2 \hat B p_x \lambda^{-5/2} \\
A_{14}={}& -\sqrt{2}/2 (\lambda^6 + \lambda^2 p_x^2 + p_x^2) \hat C \lambda^{-11/2} 
\end{split}
\end{equation}
}

The solution of Eqs.~\eqref{eq:abc:1:coho5} and \eqref{eq:abc:1:coho6}, using Fourier series, is obtained using~\eqref{eq:solc:coho:general}
\begin{align*}
\hat \xi_5 = {} & B_1 \cos x + B_2 \cos y + B_3 \sin x + B_4 \sin y\,, \\
\hat \xi_6 = {} & B_1 \cos x - B_2 \cos y - B_3 \sin x + B_4 \sin y\,,
\end{align*}
where the coefficients $B_i$ have the following expressions:
\[
B_1 = \frac{A_7 \lambda+\omega_1 A_9}{\lambda^2 +\omega_1^2}\,, \quad
B_2 = \frac{A_8 \lambda+\omega_2 A_{10}}{\lambda^2 +\omega_2^2}\,, \quad
B_3 = \frac{A_9 \lambda-\omega_1 A_7}{\lambda^2 +\omega_1^2}\,, \quad
B_4 = \frac{A_{10} \lambda-\omega_2 A_8}{\lambda^2 +\omega_2^2}\,,
\]
which are functions depending on the action variables $p_x,p_y$. The solution of Eqs.~\eqref{eq:abc:1:coho1}--~\eqref{eq:abc:1:coho4} is given by
$\hat \xi_1=\hat \xi_2=\hat \xi_3=\hat \xi_4=0$ and
\begin{equation}\label{eq:ham:R1}
R_1 = \rho = 
\begin{pmatrix}
-A_1 \cos y - A_2 \sin x \\
-A_3 \cos y - A_4 \sin x \\
-A_5 \cos x \\
-A_6 \sin y 
\end{pmatrix}\,.
\end{equation}
By construction, the vector field $R_1$ in~\eqref{eq:ham:R1} is Hamiltonian
with respect to the symplectic form $dp_x \wedge dx + dp_y \wedge dy$ (c.f. Section~\ref{ssec:NHIM:symp}).

The specific computations regarding $C_1$, $R_2$ and $P_2$ are omitted, since they will not
be used in what follows. The only thing that we need to know in the next section is which resonances appear
in the averaging process of the Hamiltonian corresponding to $R_2$.
We remark that in our problem, it turns out that $R_2$ is a trigonometric polynomial of degree $2$. This claim follows from the the construction explained in Section~\ref{ssec:NHIM:symp} and the fact that we know the degrees of the functions $X_0$, $X_1$, $X_2$, $P_0$, $P_1$ and $C_0$ that appear in Eqs.~\eqref{eq:xi2:corre} and~\eqref{tonto}.

\subsection{Invariant tori on the NHIM}\label{ssec:inner:tori}

From the computations presented in Section~\ref{ssec:NHIM:ABC},
we obtain that the dynamics reduced to
the perturbed NHIM is given by the Hamiltonian system:
\begin{equation}\label{eq:red:ham}
r_\ep(x,y,p_x,p_y)=r_0(p_x,p_y) + r_1(x,y,p_x,p_y) \ep + r_2(x,y,p_x,p_y) \ep^2 + \cO(\ep^3)\,.
\end{equation}
The Hamiltonian functions $r_i$ satisfy $R_i = -\Omega_2^0 \Dif r_i^\top$, where $R_i$ is
the reduced vector field on the NHIM computed in Section~\ref{ssec:NHIM:ABC}. Specifically,
we have
\begin{align*}
r_0(p_x,p_y)={} & \frac{p_x^2+p_y^2}{2}+\sqrt{p_x^2+p_y^2}\,,\\
r_1(x,y,p_x,p_y)={} & A_5 \sin x - A_6 \cos y\,,\\
r_2(x,y,p_x,p_y)={} & A_{15} + A_{16} \cos x + A_{17} \cos y + A_{18} \sin x + A_{19} \sin y \\
                    & + A_{20} \cos (2x) + A_{21} \cos(2y) + A_{22} \cos(x+y) + A_{23} \cos(x-y) \\
                    & + A_{24} \sin (2x) + A_{25} \sin(2y) + A_{26} \sin(x+y) + A_{27} \sin(x-y)\,,
\end{align*}
where $A_5$ and $A_6$ are given in Eqs.~\eqref{eq:coefs:A} and the remaining coefficients
are certain explicit functions of $(p_x,p_y)$ whose explicit expressions are not important in the computations performed later. Here we are denoting as $(x,y,p_x,p_y)$, with abuse of notation, the reduced variables on the perturbed NHIM, but they are not the same as the coordinate variables in the phase space $\TT^3\times\RR^3$. However, at first order in $\ep$ the parameterization is a graph (see Eq.~\eqref{eq:defP1}), and hence, the reduced variables and the coordinate variables only differ in terms of order $\ep^2$.

\subsubsection{The global averaging method}\label{sssec:aver}

The task now is to 
characterize invariant tori on the perturbed NHIM.
The idea introduced in~\cite{DLS06} consists in performing several steps of averaging 
in a global way on the whole NHIM. To this
end, a normal form procedure is applied but, when we are close to a given resonance,
the resonant normal form is defined by evaluating the corresponding coefficient on the
resonant manifold (see also~\cite{DLS13}).
It is worth mentioning that since the problems considered in~\cite{DLS06,DLS13} are
non-autonomous, a suitable projection is the so-called projection along the $k$-direction.
In our case, due to the fact that the studied Hamiltonian is autonomous, the
orthogonal projection is more appropriate to perform computations.

Although we are interested in the ABC system, 
the discussion of this section is presented in a general setting.
This allows us to use a more convenient notation and, moreover,
we think that it will 
help
the reader to link with the exposition in~\cite{DLS06,DLS13} and to consult the
details that we omit in our discussion.

Let us consider a Hamiltonian system on $N=\TT^n \times \RR^n$ of the form
\begin{equation}\label{eq:aver:gen}
h(u,p_u) = h_0(p_u) + \sum_{i=1}^m \ep^i h_i(u,p_u) + \cO(\ep^{m+1})\,,
\end{equation}
where every $h_i$ is written in Fourier series as
\begin{equation}\label{eq:Fou:aver}
h_i(u,p_u)= \sum_{k\in \cZ_i \subset \ZZ^n} \left( h^{\cos}_{i,k}(p_u) \cos(k \cdot u) + h^{\sin}_{i,k}(p_u) \sin(k\cdot u) \right)\,,
\end{equation}
where $\cZ_i$ is the support of the Fourier series, which is assumed to
be a finite set. For the sake of consistency we take $h^{\sin}_{i,0}\equiv0$.

The averaging of Eq.~\eqref{eq:aver:gen}
consists in performing (recursively) a suitable change of variables in such a way
that we obtain a new Hamiltonian system depending on the variables $u\in \TT^n$
in a simple way. Setting $\langle h\rangle_{0}(u,p_u):=h(u,p_u)$, let us assume that we have performed $m-1\geq0$ steps of
averaging, so we have
\[
\langle h \rangle_{m-1} (u,p_u) = h_0 (p_u)+\sum_{i=1}^{m-1} \ep^i \bar h_i(u,p_u) + \ep^m  h_m(u,p_u) + \cO(\ep^{m+1})\,.
\]
Then, given a Hamiltonian system $\ep^mg_m$ with time-1 flow $\phi^{g_m}$, we introduce
the new Hamiltonian
\begin{align*}
\langle h \rangle_{m} (u,p_u) = {} & \langle h \rangle_{m-1} \circ \phi^{g_m} (u,p_u) \\
= {} & h_0 (p_u)+\sum_{i=1}^{m-1} \ep^k \bar h_i(u,p_u) +
\ep^m 
\bigg(
h_m(u,p_u)+\{ h_0,g_m \}(u,p_u)
\bigg) + \cO(\ep^{m+1})\,,
\end{align*}
and we ask it
to be as simple as possible by taking
\[
h_m(u,p_u)+\{ h_0,g_m \}(u,p_u) = \bar h_m(u,p_u)\,.
\]
Here $\{\cdot,\cdot\}$ is the Poisson bracket, defined as
\[
\{f,g\}=\sum_{i=1}^n 
\left(
\frac{\partial f}{\partial u_i} \frac{\partial g}{\partial p_{u,i}}
-
\frac{\partial f}{\partial p_{u,i}} \frac{\partial g}{\partial u_{i}}
\right)\,.
\]
Using an expansion as in~\eqref{eq:Fou:aver}, we obtain the following set of equations
for the Fourier coefficients:
\begin{equation}\label{eq:coefs:Fou}
\begin{split}
(\omega \cdot k) g^{\cos}_{m,k} (p_u) = {} & \bar h^{\sin}_{m,k} (p_u)- h^{\sin}_{m,k}(p_u)\,, \\
-(\omega \cdot k) g^{\sin}_{m,k} (p_u)= {} & \bar h^{\cos}_{m,k} (p_u)- h^{\cos}_{m,k}(p_u)\,, 
\end{split}
\end{equation}
for every $k \in \ZZ^n \backslash\{0\}$, and we take $\bar h^{\cos}_{m,0} = h^{\cos}_{m,0}$,
and $\bar h^{\sin}_{m,0}=0$, so that $g^{\cos}_{m,0}$ and $g^{\sin}_{m,0}$ can
take any value. In these equations $\omega\equiv \omega(p_u):=\frac{\partial h_0}{\partial p_u}$.

\begin{definition}
Given a Hamiltonian $h: \cI \subset \RR^n \rightarrow \RR$, for each $k \in \ZZ^n \backslash\{0\}$ we define the resonant set
\[
\mathrm{R}_k:=\{ p_u \in \cI \, : \, \omega(p_u)\cdot k=0\}.
\]
\end{definition}

Let us assume in what follows that the function $\omega(p_u)\cdot k$ has no critical points on $\mathrm{R}_k$, so that the resonant manifolds are smooth surfaces (a condition that
depends only on the unperturbed problem and that is certainly satisfied
by the ABC system). Then, it makes perfect sense to introduce some
additional definitions and notation. Indeed,
given a resonant set $\mathrm{R}_k$ and a small constant $L>0$, we
denote the tubular neighborhood of $\mathrm{R}_k$ of radius $L$ (measured with the Euclidean metric) as $\mathrm{Tub}(\mathrm{R}_k,L)$.
Moreover, for every resonant set we introduce the orthogonal projection
$\Pi_{k} : \mathrm{Tub}(\mathrm{R}_k,L) \subset \RR^n \rightarrow \mathrm{R}_k$.
Finally, given a resonant set $\mathrm{R}_k$ we denote by $\mathrm{dist}\,(p_u,\mathrm{R}_k)$
the Euclidean distance of the point $p_u$ to the manifold $\mathrm{R}_k$.

Notice that $\mathrm{R}_k=\mathrm{R}_{mk}$ for any $m\in \ZZ$. Then, given two
sets $\mathrm{R}_k$ and $\mathrm{R}_\ell$, we have, generically, the following trichotomy:
\begin{itemize}
\item They are the same manifold: $\mathrm{R}_k=\mathrm{R}_\ell$, i.e, $k=m\ell$ for some $m\in \ZZ$.
\item They do not intersect: $\mathrm{R}_k \cap \mathrm{R}_\ell = \varnothing$.
\item They intersect transversely in a manifold of codimension two without boundary.
\end{itemize}

It is worth mentioning that the third case does not play an important role in our
problem. Indeed, for the ABC system, resonant sets are 1-dimensional manifolds and their intersections define sets of dimension zero. The case of higher dimensions
has been discussed recently in~\cite{DLS13} proving that the existence of
multiple resonances is not a limitation to prove diffusion.

If there is a finite number of resonant sets, it is clear that we can choose
a constant $L>0$ small enough such that for every pair $k,\ell \in \ZZ^n$ we
have either $\mathrm{R}_k=\mathrm{R}_\ell$ or 
$\mathrm{R}_k \cap \mathrm{Tub}(\mathrm{R}_\ell,L) = \varnothing$.
Then, following~\cite{DLS13}, we construct a solution of Eq.~\eqref{eq:coefs:Fou} in a global way,
that is, for all values $p_u \in \cI$. Of course, we only want to modify
the Fourier coefficients in the support of the series, that is,
if $k \notin \cZ_m$ we take $\bar h^{\cos}_{m,k}(p_u)=\bar h^{\sin}_{m,k}(p_u)=0$,
and hence $g^{\cos}_{m,k}(p_u)=g^{\sin}_{m,k}(p_u)=0$.
Then, if
$k \in \cZ_m$, we take
\begin{align*}
\bar h^{\cos}_{m,k}(p_u) = & h^{\cos}_{m,k}(\Pi_k(p_u)) \psi \bigg(\frac{1}{L} \mathrm{dist}\,(p_u,\mathrm{R}_k)\bigg)\,,\\
\bar h^{\sin}_{m,k}(p_u) = & h^{\sin}_{m,k}(\Pi_k(p_u)) \psi \bigg(\frac{1}{L} \mathrm{dist}\,(p_u,\mathrm{R}_k)\bigg)\,,
\end{align*}
where $\psi :\RR \rightarrow \RR$ is a fixed $C^\infty$ function such that $\psi(t)=1$, if $t \in [-1,1]$, and $\psi(t)=0$, if $t\notin [-2,2]$. The Fourier coefficients of the Hamiltonian
$g_m$ are obtained from Eq.~\eqref{eq:coefs:Fou}, passing to the limit when $p_u$ tends to $\mathrm{R}_k$. For details, we refer to Lemma 8.8 in~\cite{DLS06} and to Lemma 10 in~\cite{DLS13}. With this choice we distinguish two different zones:

\begin{itemize}
\item \textbf{Non-resonant region:} 
If $p_u \notin \mathrm{Tub}(\mathrm{R}_k,2L)$,
we have $\bar h^{\cos}_{m,k}(p_u)=0 = \bar h^{\sin}_{m,k}(p_u)=0$.

\item \textbf{Resonant region:} 
If $p_u \in \mathrm{Tub}(\mathrm{R}_k,L)$, we have $\bar h^{\cos}_{m,k}(p_u) = h^{\cos}_{m,k}(\Pi_k(p_u))$,
and $\bar h^{\sin}_{m,k}(p_u) = h^{\sin}_{m,k}(\Pi_k(p_u))$.
\end{itemize}

\begin{remark}\label{rem:role:L}
The choice of $L$ is arbitrary. This implies that
we do not need to study the regions at a distance between $L$ and $2L$ of the resonant set $\mathrm{R}_k$.
\end{remark}

\subsubsection{Adapted coordinates on the averaged system}\label{ssec:aver:ABC}

Let us now apply the averaging procedure described in Section~\ref{sssec:aver}
to the reduced Hamiltonian~\eqref{eq:red:ham}. In this case, resonant
sets are expressed as
\[
\mathrm{R}_k = \{ (p_x,p_y) \in \cI \, : \, \omega_1 k_1 + \omega_2 k_2 = 0 \}
\]
where $k=(k_1,k_2)$, the set $\cI\subset \{p_x>0\}\times\{p_y>0\}$, and the frequency $\omega=(\omega_1,\omega_2)$
is given by Eqs.~\eqref{eq:omega1} and~\eqref{eq:omega2}. Then, it is clear
that there are no resonances associated to the averaging of order $|k|\leq 1$, since
$p_x\neq 0$ and $p_y \neq 0$, and so we have $\omega_1 \neq 0$ and $\omega_2 \neq 0$.
For the same reason, in the averaging of order $|k|\leq 2$, we must deal only with the set
$\omega_1-\omega_2=0$, that corresponds to the 
straight line $p_x=p_y$. The orthogonal projection associated to this particular
resonance, that we simply write as $\Pi$, has the following explicit expression:
\begin{equation}\label{eq:Gamma}
\Pi(p_x,p_y)=\left( \frac{p_x+p_y}{2}, \frac{p_x+p_y}{2} \right)\,.
\end{equation}

\begin{itemize}
\item \textbf{Non-resonant region:} we can eliminate all the terms in $r_1(u,p_u)$ and $r_2(u,p_u)$, so
the second order averaged system is given by
\[
\langle r_\ep \rangle_2(x, y,p_x,p_y) = r_0(p_x,p_y) + \cO(\ep^3).
\]
Neglecting the $\cO(\ep^3)$ terms, we obtain an integrable unperturbed system. The invariant
tori of this unperturbed system are given by the level sets
\begin{equation}\label{eq:nueva}
\begin{split}
p_x = {} & e_1\,,\\
p_y = {} & e_2\,.
\end{split}
\end{equation}
When the perturbation $\cO(\ep^3)$ is taken into account, KAM theorem guarantees
that most of these invariant tori persist for the perturbed system, covering the non-resonant region except for a set of measure of order $\cO(\ep^{3/2})$. 
We remark again that, since our problem is real analytic, we do not need to
care about the technical difficulties regarding regularity in the KAM theorem.

\item \textbf{Resonant region:} we consider the projection~\eqref{eq:Gamma},
and we obtain that the second order averaged reduced Hamiltonian is given by
\[
\langle r_\ep \rangle_2 (x, y,p_x,p_y) = r_0(p_x,p_y) + \ep^2 \bigg( A_{23}(\Pi(p_x,p_y)) \cos(x-y) +  A_{27}(\Pi(p_x,p_y)) \sin(x-y) \bigg) + \cO(\ep^3)\,.
\]
Then, it is natural to perform a canonical change of variables in order to introduce a resonant angle: 
\[
\begin{split}
\theta_1 = {} & x, \\
\theta_2 = {} & x - y,
\end{split}
\qquad 
\qquad
\begin{split}
I_1 = {} & p_x+p_y, \\
I_2 = {} & -p_y,
\end{split}
\]
thus obtaining the Hamiltonian
\begin{equation}\label{eq:aver1}
\langle r_\ep \rangle_2 (\theta_1,\theta_2, I_1,I_2) 
= 
r_0(I_1+I_2,-I_2) 
+ \ep^2 \bigg( A_{23}(\tfrac{I_1}{2},\tfrac{I_1}{2}) \cos(\theta_2) 
+  A_{27}(\tfrac{I_1}{2},\tfrac{I_1}{2}) \sin(\theta_2) \bigg) + \cO(\ep^3)\,.
\end{equation}

The next step is to perform a Taylor expansion around the resonance. It is
clear that the resonance $p_x=p_y$ is equivalent to $I_2=-I_1/2$. Hence,
we consider the expansion $I_2=-I_1/2+\delta$ and we write the unperturbed Hamiltonian as follows
\begin{align*}
r_0(\tfrac{I_1}{2}+ \delta,\tfrac{I_1}{2}-\delta) = {} & r_0(\tfrac{I_1}{2},\tfrac{I_1}{2}) 
+ \frac{1}{2} \bigg (
\frac{\partial^2 r_0}{\partial p_x^2}
-2\frac{\partial^2 r_0}{\partial p_x \partial p_y}
+ \frac{\partial^2 r_0}{\partial p_y^2}
\bigg) (\tfrac{I_1}{2},\tfrac{I_1}{2}) \delta^2 + \cO(\delta^3),
\end{align*}
where we have used that $\omega_1=\omega_2$ at $\delta=0$. Moreover, using the
specific expression of $r_0$, it turns out that we can write the
Hamiltonian~\eqref{eq:aver1} as
\begin{equation}\label{eq:aver:trunc}
\langle r_\ep \rangle_2 (\theta_1,\theta_2, I_1,-\frac{I_1}{2}+\delta) =
\frac{I_1^2}{4}+\frac{I_1}{\sqrt{2}} + (1+\sqrt{2} I_1^{-1}) \delta^2 
+ \ep^2 \bigg( A_{23}(\tfrac{I_1}{2},\tfrac{I_1}{2}) \cos(\theta_2) 
+  A_{27}(\tfrac{I_1}{2},\tfrac{I_1}{2}) \sin(\theta_2) \bigg)\,,
\end{equation}
modulo terms of order $\cO(\ep^3,\delta^3)$.
This corresponds to a pendulum-like Hamiltonian system
in the variables $(\theta_2,\delta)$ depending on the variable $I_1$. In other words, we observe that $I_1$ is an integral of
motion of the truncated Hamiltonian~\eqref{eq:aver:trunc} and the motion of the 
variables $(\theta_2,\delta)$ is described by the system
\begin{align*}
\dot{\theta_2} = {} & 2 (1+\sqrt{2} I_1^{-1}) \delta ,\\
\dot{\delta} = {} & \ep^2  \bigg( A_{23}(\tfrac{I_1}{2},\tfrac{I_1}{2}) \sin(\theta_2) 
-  A_{27}(\tfrac{I_1}{2},\tfrac{I_1}{2}) \cos(\theta_2) \bigg)
\end{align*}
The above system has a hyperbolic equilibrium point at $(\theta_2,\delta)=(\theta_2^*(I_1),0)$,
and we denote by $H^*$ the energy of this point. Then, the level sets of $I_1$ and $\langle r_\ep\rangle_2$ are characterized
as follows
\begin{equation}\label{eq:unper:torus:res}
\begin{split}
I_1 = {} & e_1\,,\\
\frac{I_1^2}{4}+\frac{I_1}{\sqrt{2}} + (1+\sqrt{2} I_1^{-1}) \delta^2 
+ \ep^2 \bigg( A_{23}(\tfrac{I_1}{2},\tfrac{I_1}{2}) \cos(\theta_2) 
+  A_{27}(\tfrac{I_1}{2},\tfrac{I_1}{2}) \sin(\theta_2) \bigg) +\cO(\ep^3,\delta^3)= {} & e_2\,.
\end{split}
\end{equation}
We observe that
these level sets have different topology
depending if $e_2>H^*$ (two primary tori), if
$e_2=H^*$ (two whiskered tori with coincident whiskers) or if $e_2<H^*$ (two secondary tori).
Again, applying the KAM theorem to consider the effect of the perturbation, we
obtain that many of the invariant tori in the previous picture persist, covering
the resonant region except for a set of measure $\cO(\ep^{3/2})$. We refer to~\cite{DLS06,DLS13,DH09}
for full details on the application of the KAM theorem close to the separatrix.
As before, we do not discuss here the specific technical details since they are
covered by the fact that our Hamiltonian is real-analytic.
\end{itemize}

We have obtained an approximation of the level sets that characterize
the invariant objects inside the NHIM. Such level sets are not written in
terms of the original variables of the problem but in the averaged variables.
In the following result we translate the previous construction into the coordinate
variables in phase space.

\begin{proposition}\label{prop:level:sets}
Let us consider the original Hamiltonian system
\[
H_\ep= H_0 + \ep H_1 + \ep^2 H_2,
\]
where $H_0$, $H_1$ and $H_2$ are given by Eqs.~\eqref{eq:H0:two}--~\eqref{eq:H2}. Then,
the invariant tori inside the NHIM are characterized
by the level sets
\begin{equation}\label{eq:tori:non-resonant}
\begin{split}
p_x - \ep \hat B \tfrac{\omega_2}{\omega_1} \sin x + \cO(\ep^2) = {} & e_1\,, \\
p_y - \ep \hat C \tfrac{\omega_1}{\omega_2} \cos y + \cO(\ep^2) = {} & e_2\,, 
\end{split}
\end{equation}
in the non-resonant region, and
\begin{equation}\label{eq:tori:resonant}
\begin{split}
p_x + p_y - \ep \hat B \tfrac{\omega_2}{\omega_1} \sin x - \ep \hat C \tfrac{\omega_1}{\omega_2} \cos y + \cO(\ep^2) = {} & e_1, \\
\frac{e_1^2}{4}+\frac{e_1}{\sqrt{2}} + (1+\sqrt{2} e_1^{-1}) \left(\tfrac{e_1}{2}-p_y + \ep \hat C \tfrac{\omega_1}{\omega_2} \cos y\right)^2 + \cO(\ep^2) = {} & e_2, 
\end{split}
\end{equation}
in the resonant region. Recall that the frequencies $\omega_1$ and $\omega_2$ are defined in Eqs.~\eqref{eq:omega1} and~\eqref{eq:omega2}.
\end{proposition}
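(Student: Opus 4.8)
The plan is to derive the two pairs of formulas in two moves. First, I would invert the averaging transformations of Section~\ref{ssec:aver:ABC} so as to rewrite the averaged level sets~\eqref{eq:nueva} and~\eqref{eq:unper:torus:res} in the reduced coordinates on the NHIM. Second, I would pass from those reduced coordinates to the ambient coordinates of $\TT^3\times\RR^3$. Only the corrections linear in $\ep$ need to be followed; everything of higher order is collected into the $\cO(\ep^2)$ remainders, which is precisely what the proposition records.

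For the first move I would begin with the generating Hamiltonian of the first averaging stage. Since $\omega_1,\omega_2\neq0$ on $\cI$, the term $r_1=A_5\sin x-A_6\cos y$ carries no resonant harmonic and is removed entirely; solving~\eqref{eq:coefs:Fou} with $\bar h_1\equiv0$ gives $g_1=-\frac{A_5}{\omega_1}\cos x-\frac{A_6}{\omega_2}\sin y$. The change of variables conjugating $r_\ep$ to $\langle r_\ep\rangle_2$ is $\phi^{g_1}\circ\phi^{g_2}$, and only $\phi^{g_1}$ contributes at first order, so if $(P_x,P_y)$ denote the averaged actions then $P_x=p_x+\ep\,\partial_x g_1+\cO(\ep^2)=p_x-\ep\hat B\tfrac{\omega_2}{\omega_1}\sin x+\cO(\ep^2)$ and $P_y=p_y+\ep\,\partial_y g_1+\cO(\ep^2)=p_y-\ep\hat C\tfrac{\omega_1}{\omega_2}\cos y+\cO(\ep^2)$, after inserting $A_5=-\hat B\omega_2$ and $A_6=\hat C\omega_1$. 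In the non-resonant region the unperturbed tori~\eqref{eq:nueva} are $\{P_x=e_1\}$ and $\{P_y=e_2\}$, and substituting the two displays yields~\eqref{eq:tori:non-resonant}; by KAM (cf. Section~\ref{ssec:aver:ABC}) the tori surviving the $\cO(\ep^3)$ remainder are $\cO(\ep^2)$-close to these, so the same formulas describe them up to the stated error. In the resonant region I would further apply the $\ep$-independent canonical change $I_1=P_x+P_y$, $I_2=-P_y$, $\theta_1=x$, $\theta_2=x-y$ together with the expansion around $I_2=-I_1/2$, so that $\delta:=I_2+\tfrac{I_1}{2}=\tfrac{e_1}{2}-P_y$; the level sets $\{I_1=e_1\}$ and $\{\langle r_\ep\rangle_2=e_2\}$ of~\eqref{eq:unper:torus:res} then turn, after inserting the expression for $P_y$, into~\eqref{eq:tori:resonant}, the $\ep^2$ pendulum term of~\eqref{eq:aver:trunc}, the higher-order-in-$\delta$ Taylor remainder, the $g_2$-correction, and the KAM distortion being absorbed into $\cO(\ep^2)$.

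For the second move I would use the first-order parameterization of $\Lambda_\ep$ computed in Section~\ref{ssec:NHIM:ABC}: there $P_1=C_0\hat\xi$ with $\hat\xi^C=0$, hence $P_1=W_0\hat\xi^H$, and the $(p_x,p_y)$-rows of $W_0$ vanish. Therefore the action coordinates on $\Lambda_\ep$ coincide with the ambient $(p_x,p_y)$ up to $\cO(\ep^2)$, while the angle coordinates differ by $\cO(\ep)$; the latter discrepancy is immaterial because $x$ and $y$ enter~\eqref{eq:tori:non-resonant} and~\eqref{eq:tori:resonant} only inside terms already multiplied by $\ep$, so replacing the reduced angles by the ambient ones perturbs those terms by $\cO(\ep^2)$. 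Combining the two moves gives the stated level sets directly in phase-space coordinates.

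I expect the main obstacle to be the order bookkeeping in the resonant region: one has to verify that, on the relevant part of the tube $\mathrm{Tub}(\mathrm{R}_k,L)$ around the line $p_x=p_y$, the Taylor remainder in $\delta$, the second-stage generator $g_2$, and the shift of the tori that persist under the $\cO(\ep^3)$ perturbation are all genuinely $\cO(\ep^2)$, and that the time-$1$ flow of $\ep^2g_2$ feeds no linear-in-$\ep$ term into the actions. Once this is settled the remaining computations are routine, and they are uniform over $\cI$ since there is a single resonant line, on whose $L$-neighbourhood $\psi\equiv1$.
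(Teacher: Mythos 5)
Your proposal is correct and follows the same route as the paper's proof: identify $g_1$ from the averaging step, express the averaged actions in terms of the original reduced variables, substitute into the averaged level sets~\eqref{eq:nueva} and~\eqref{eq:unper:torus:res}, and pass back to ambient coordinates using the explicit $\cO(\ep)$ parameterization of $\Lambda_\ep$. You are in fact slightly more careful than the paper at the final step: the paper asserts that the reduced and ambient variables agree up to $\cO(\ep^2)$, whereas from the structure of $W_0$ only the actions $(p_x,p_y)$ enjoy this; the angles differ by $\cO(\ep)$, but as you correctly observe this is harmless since $x,y$ enter the displayed level sets only inside terms already multiplied by $\ep$.
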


\begin{proof}
We only have to undo the different changes of variables in the averaging construction previously explained. In particular
we recall that we defined
\[
\langle r_\ep \rangle_1 (u,p_u) = r_\ep \circ \phi^{g_1}(u,p_u),
\]
where $\phi^{g_1}$ is the time-$1$ flow of a Hamiltonian $\ep g_1$ satisfying $r_1+\{ r_0,g_1\}=0$, with $r_1
= A_5 \sin x - A_6 \cos y$. The expressions of
$A_5$ and $A_6$ are given in Eqs.~\eqref{eq:coefs:A}. Since there are no resonances involved,
we can solve the previous equation taking
\[
g_1(u,p_u)=G_1 \cos x + G_2 \sin y,
\]
with $G_1= \hat B \tfrac{\omega_2}{\omega_1}$ and $G_2= - \hat C \tfrac{\omega_1}{\omega_2}$. This means
that we have to invert the change of variables
\begin{equation}\label{eq:change:g}
\begin{split}
x \mapsto {} & x + \ep \partial_{p_x} g_1 + \cO(\ep^2),\\
y \mapsto {} & y + \ep \partial_{p_y} g_1 + \cO(\ep^2),\\
p_x \mapsto {} & p_x - \ep \partial_{x} g_1 + \cO(\ep^2),\\
p_y \mapsto {} & p_y - \ep \partial_{y} g_1 + \cO(\ep^2),
\end{split}
\end{equation}
that lead to the averaged system.
As explained at the beginning of Section~\ref{ssec:inner:tori}, the reduced variables are the same as the coordinate variables up to terms of order $\ep^2$, so we can safely assume that the variables $(x,y,p_x,p_y)$ in Eq.~\eqref{eq:change:g} are the phase space coordinates. 

Let us first consider the non-resonant region. The unperturbed invariant tori
of the averaged system are given by the level sets $p_x=e_1$ and $p_y=e_2$, cf. Eq.~\eqref{eq:nueva}.
Inverting the change of variables~\eqref{eq:change:g} we obtain that
the surviving invariant tori satisfy the expression in Eq.~\eqref{eq:tori:non-resonant}.

In the resonant region, we obtained that the unperturbed invariant
tori of the averaged system are given by the level sets in Eq.~\eqref{eq:unper:torus:res}.
Following~\cite{DLS13}, we first replace the expression $I_1=e_1$ into
the second expression in~\eqref{eq:unper:torus:res}, thus obtaining the equivalent system
\begin{equation}\label{eq:unper:torus:res2}
\begin{split}
I_1 = {} & e_1\,,\\
\frac{e_1^2}{4}+\frac{e_1}{\sqrt{2}} + (1+\sqrt{2} e_1^{-1}) \delta^2 
+ \cO(\ep^2) = {} & e_2\,.
\end{split}
\end{equation}
This choice will simplify subsequent computations. Then, we recall the
definition of the variable $\delta$ and we invert the change of variables
$(p_x,p_y) \mapsto (I_1,I_2)$, thus obtaining
\[
\delta= I_2 + \frac{I_1}{2}=-p_y + \frac{e_1}{2}.
\]
Then, inverting the change of variables~\eqref{eq:change:g}, we
obtain Eq.~\eqref{eq:tori:resonant}.
\end{proof}
We would like to remark that the terms of order $\ep$ in Eqs.~\eqref{eq:tori:non-resonant} and~\eqref{eq:tori:resonant} will be important in the computations of Section~\ref{sec:chains}. These terms
are not necessary in~\cite{DLS06,DLS13,DH09}, due to the fact that the unperturbed outer dynamics
is the identity and hence there is no phase-shift. 

\section{Outer dynamics of the NHIM}
\label{sec:outer}

In this section we consider the outer dynamics of the NHIM for the perturbed system.
This dynamics is modelled by the so-called \emph{scattering map} of a
normally hyperbolic invariant manifold with intersecting stable and unstable
invariant sets along a homoclinic manifold. This remarkable
tool was introduced in~\cite{DLS00} to study Arnold diffusion in the context
of periodic perturbations of geodesic flows in $\TT^2$,
and it was crucial for applications in~\cite{DLS06,DLS13,DH11,GL06}.
The paper~\cite{DLS08} contains a complete description of the geometric properties
of the scattering map, together with a systematic development of perturbative
formulas for its computation. 

In Section~\ref{ssec:poincare} we recall the construction of the so-called Melnikov
potential, which was introduced in~\cite{DLS00}, in the setting considered in this
paper. In Section~\ref{ssec:scattering} we present a brief definition of the scattering
map and we obtain its first order approximation for the case of the ABC system.

\subsection{The Poincar\'e-Melnikov function}
\label{ssec:poincare}

As was discussed in Section~\ref{sec:inner}, for small values of $\ep$ there exists a perturbed NHIM, denoted
by $\Lambda_\ep$, together with local invariant manifolds 
$W^\srm_{\mathrm{loc}}(\Lambda_\ep)$ and $W^\urm_{\mathrm{loc}}(\Lambda_\ep)$.
These manifolds are $\cO(\ep)$-close to $\Lambda$ and 
$W^\srm(\Lambda)=W^\urm(\Lambda)$, respectively. As usual, we globalize the invariant manifolds as
$
W^\srm(\Lambda_\ep) = \bigcup_{t<0} \phi_t^\ep (W^\srm_{\mathrm{loc}}(\Lambda_\ep))$,
$W^\urm(\Lambda_\ep) = \bigcup_{t>0} \phi_t^\ep (W^\urm_{\mathrm{loc}}(\Lambda_\ep))$,
where $\phi_t^\ep$ is the flow of the perturbed Hamiltonian $H_\ep$. The intersections of the stable and unstable invariant manifolds are given by the following proposition. All along this section we use the notation introduced in Section~\ref{ssec:geom:unper} for the unperturbed problem.

\begin{proposition}\label{prop:melnikov}
Let us consider an analytic Hamiltonian system of the form
$
H_\ep(q,p) = H_0(q,p) + \ep h(q,p,\ep),
$
having a NHIM $\Lambda_\ep$, where the unperturbed Hamiltonian
$H_0$ is given by~\eqref{eq:sys:H0}.
The homoclinic intersections
of the invariant manifolds $W^\srm(\Lambda_\ep)$ and $W^\urm (\Lambda_\ep)$ are
described, at first order in $\ep$, by the critical
points of the \emph{Poincar\'e function} (also known as \emph{Melnikov potential}):
\begin{equation}\label{eq:Poin:Mel}
L(\tau,x,y,p_x,p_y)= \int_{-\infty}^{\infty} h(\phi_\sigma^0(u^0),0)-h(\phi_{\sigma}^0 (u^* + u_{\pm}),0) d \sigma\,,
\end{equation}
where $\phi_\sigma^0$ is the time-$\sigma$ flow of the unperturbed Hamiltonian $H_0$.
In particular, $\phi_\sigma^0(u^*+u_{\pm})$ is given by Eqs.~\eqref{eq:flow:H0:inner} and~\eqref{eq:phase-shift},
and $\phi_\sigma^0(u^0)$ is given by Eq.~\eqref{eq:flow:H0:outer}.
Recall that the compact notation $u_{\pm}$ means that 
we take $u_+$ for $\sigma>0$ and $u_-$ for $\sigma<0$.
\end{proposition}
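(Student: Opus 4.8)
The plan is to run the Poincar\'e--Melnikov argument for normally hyperbolic invariant manifolds as in~\cite{DLS00,DLS08,Tre02}, exploiting the explicit structure of $H_0$. First, by the persistence theory of NHIMs (\cite{Fenichel,HPS}), for $|\ep|$ small the manifold $\Lambda_\ep$ together with its local whiskers $W^\srm_{\mathrm{loc}}(\Lambda_\ep)$ and $W^\urm_{\mathrm{loc}}(\Lambda_\ep)$ exists, is as smooth as needed, and is $\cO(\ep)$-close to $\Lambda_0$ and to $W^\srm(\Lambda_0)=W^\urm(\Lambda_0)$. Globalising by the flow $\phi^\ep_t$ gives parameterisations $\Psi^\srm_\ep=u^0+\ep\,\Psi^\srm_1+\cO(\ep^2)$ and $\Psi^\urm_\ep=u^0+\ep\,\Psi^\urm_1+\cO(\ep^2)$, with $u^0=u^0(\tau,x,y,p_x,p_y)$ as in~\eqref{eq:param:u0}. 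Differentiating the invariance equations at $\ep=0$, the corrections $\Psi^\srm_1$ and $\Psi^\urm_1$ solve the first variational equation of $H_0$ along $\sigma\mapsto\phi^0_\sigma(u^0)$ with inhomogeneity $X_h\circ\phi^0_\sigma(u^0)$, selected by boundedness as $\sigma\to+\infty$ (resp. $\sigma\to-\infty$) and by matching, in that limit, the $\cO(\ep)$-correction of $\Lambda_\ep$ from Section~\ref{sec:inner} at the asymptotic phase $u^*+u_+$ (resp. $u^*+u_-$); that $u^0(\tau,\cdot)\to u^*+u_\pm$ exponentially as $\tau\to\pm\infty$ with the phases~\eqref{eq:phase-shift} is what makes this well posed.

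\medskip
Near the homoclinic manifold $W^\srm(\Lambda_0)=W^\urm(\Lambda_0)$, the level set $\{\mathcal E=0\}$ is a regular hypersurface $\Gamma_0$ containing it, where $\mathcal E:=H_0-r_0(p_x,p_y)-\tfrac12$ is the energy of the pendulum pair $(z,p_z)$ measured from the separatrix level; note $\mathcal E$ is an integral of the unperturbed flow, its Hamiltonian vector field is $X_{\mathcal E}=X_{H_0}-(\omega_1\partial_x+\omega_2\partial_y)$ (so $\partial_\tau u^0=X_{\mathcal E}(u^0)$), and $\nabla\mathcal E$, $X_{\mathcal E}$, $\{\mathcal E,h\}$ all vanish on $\Lambda_0$. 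I would measure the splitting by $\mathcal E$: writing both perturbed whiskers as graphs along $\nabla\mathcal E$ over $\Gamma_0$, and letting $z^\srm_\ep,z^\urm_\ep$ be the points of $W^\srm(\Lambda_\ep),W^\urm(\Lambda_\ep)$ over $u^0$, the first-order gap $\Delta$ satisfies $|\nabla\mathcal E(u^0)|\,\ep\,\Delta=\mathcal E(z^\urm_\ep)-\mathcal E(z^\srm_\ep)+\cO(\ep^2)$. Integrating $\tfrac{d}{d\sigma}\mathcal E(\phi^\ep_\sigma(\cdot))=\ep\{\mathcal E,h\}(\phi^\ep_\sigma(\cdot))$ forward from $z^\srm_\ep$ and backward from $z^\urm_\ep$, the boundary terms at $\pm\infty$ are $\mathcal E|_{\Lambda_\ep}=\cO(\ep^2)$ because $\Lambda_0$ is a critical manifold of $\mathcal E$, and $\{\mathcal E,h\}(\phi^\ep_\sigma(z^{\srm/\urm}_\ep))=\{\mathcal E,h\}(\phi^0_\sigma(u^0))+\cO(\ep)$ decays exponentially since $\{\mathcal E,h\}$ vanishes on $\Lambda_0$; hence
\[
\Delta(\tau,x,y,p_x,p_y)\;=\;\frac{1}{|\nabla\mathcal E(u^0)|}\int_{-\infty}^{\infty}\{\mathcal E,h\}\bigl(\phi^0_\sigma(u^0)\bigr)\,d\sigma\;+\;\cO(\ep).
\]
On the other hand, since $\partial_\tau u^0=X_{\mathcal E}(u^0)$ and $\phi^0_t$ commutes with $X_{\mathcal E}$, one finds $\partial_\tau L=\int_{-\infty}^{\infty}\langle\nabla h,X_{\mathcal E}\rangle\bigl(\phi^0_\sigma(u^0)\bigr)\,d\sigma=-\int_{-\infty}^{\infty}\{\mathcal E,h\}\bigl(\phi^0_\sigma(u^0)\bigr)\,d\sigma$, the subtraction of $h(\phi^0_\sigma(u^*+u_\pm))$ in $L$ being exactly what regularises the two half-line integrals while contributing nothing to $\partial_\tau L$. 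Therefore $\Delta=-|\nabla\mathcal E(u^0)|^{-1}\partial_\tau L+\cO(\ep)$, so $\Delta$ and $\partial_\tau L$ have the same zeros: the homoclinic points of $\Lambda_\ep$ are, at first order, the critical points of $\tau\mapsto L(\tau,x,y,p_x,p_y)$, the non-degenerate ones ($\partial^2_{\tau\tau}L\neq0$) yielding transverse intersections $W^\urm(\Lambda_\ep)\pitchfork W^\srm(\Lambda_\ep)$.

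\medskip
It remains to check that $L$ is well defined and $C^\infty$ (indeed analytic) in its five arguments and that differentiation under the integral is justified; both follow from~\eqref{eqsep} and~\eqref{eq:phase-shift} (exponential convergence with rate $\lambda$ of $u^0(\tau,\cdot)$ and of $\phi^0_\sigma(u^0)$ to the relevant $\Lambda_0$-orbit) together with analyticity of $h$. I expect the main obstacle to be the identification of the right ``measuring'' function and the associated asymptotic-phase bookkeeping: one must verify that $\Gamma_0$ is a level set of $\mathcal E=H_0-r_0(p_x,p_y)-\tfrac12$ and not of $H_0$ (using $H_0$ alone fails, as $\Lambda_0$ is not contained in a single level set of $H_0$), that $\nabla\mathcal E$ vanishes on $\Lambda_0$, and that the forward and backward asymptotic points of the two whiskers on $\Lambda_\ep$ are $u^*+u_+$ and $u^*+u_-$, respectively; the non-trivial unperturbed scattering shift $u_+-u_-$ from~\eqref{eq:phase-shift} is precisely the source of the subtlety. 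The remaining ingredients --- persistence and smooth $\ep$-dependence of NHIMs, variation of constants with the uniform hyperbolicity estimates associated to the splitting~\eqref{eq:splitting} --- are standard and taken from~\cite{Fenichel,HPS,DLS00,DLS08,Tre02}.
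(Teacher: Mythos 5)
Your proposal follows the paper's proof essentially step for step: your measuring function $\mathcal{E}=H_0-r_0(p_x,p_y)-\tfrac12$ expands to precisely the paper's $\cP=\tfrac12 p_z^2-\lambda^2(\cos(z-\alpha)+1)$, and you then use the same Poisson-bracket evolution $\tfrac{d}{d\sigma}\mathcal E(\phi^\ep_\sigma)=\ep\{\mathcal E,h\}$, the same boundary/decay bookkeeping (boundary terms $\cO(\ep^2)$ since $\Lambda_0$ is a critical manifold of $\mathcal E$, exponential decay of the integrand along the separatrix), and the same identity $\partial_\tau L=-\int\{\mathcal E,h\}(\phi^0_\sigma(u^0))\,d\sigma$ via $\partial_\tau u^0=X_{\mathcal E}(u^0)$ to convert the splitting into critical points of $\tau\mapsto L$. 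The only presentational difference is that you describe the whiskers as graphs along $\nabla\mathcal E$ over $\{\mathcal E=0\}$ (which, as a level set, is singular along $\Lambda_0$, so the phrasing should be restricted to the separatrix away from $\Lambda_0$) whereas the paper uses the transversal line $\Sigma(u^0)$ and Lemma~\ref{lem:heteroclinic}; this is a cosmetic distinction and your argument is correct.
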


We observe that this expression of $L(\tau,x,y,p_x,p_y)$ differs from the one used
in~\cite{DLS06,DLS13,DH09} by the fact that it depends on the phase-shift.
A Melnikov potential of this type is given in Proposition 3 of~\cite{Tre02} and analogous expressions
can be found in~\cite{DG00,DLS00}. We invite the reader to compare Proposition~\ref{prop:melnikov}
with Theorem 32 in~\cite{DLS08} that is stated in a more general setting.
For the sake of completeness, we present here a complete proof of this proposition that may be of valuable help for the general reader.
The arguments, which we adapt from~\cite{DLS06}, are 
standard in Melnikov theory and well known to experts.

Let us consider the function
\begin{equation}\label{eq:Ham:P}
\cP(x,y,z,p_x,p_y,p_z):=
\frac{p_z^2}{2} - \lambda^2(\cos(z -\alpha)+1)\,, \\
\end{equation}
which is a first integral of the Hamiltonian system defined by $H_0$, where $\alpha=\arctan (p_x/p_y)$.
This function is used to estimate the distance between the invariant manifolds
associated to the NHIM (see Lemma~\ref{lem:heteroclinic} below). 
Indeed, at every point $u^0=u^0(\tau,x,y) \in W^\srm(\Lambda_0)=W^\urm(\Lambda_0)$,
given by~\eqref{eq:param:u0}, we have
\begin{align*}
\cP(u^0) = {} & \frac{2\lambda^2}{\cosh^2(\lambda \tau)} - \lambda^2 \left(\cos(4 \arctan \rme^{\lambda \tau}+\pi)+1 \right) \\
= {} & \lambda^2 \left( \frac{8\rme^{2 \lambda \tau}}{(1+\rme^{2\lambda \tau})^2} + \cos(4 \arctan \rme^{\lambda \tau})  - 1 \right) = 0\,.
\end{align*}
Then, for every point $u^0 \in W^\srm(\Lambda_0)=W^\urm(\Lambda_0)$ we consider the straight line $\Sigma$, transversal to $W^\srm(\Lambda_0)=W^\urm(\Lambda_0)$, given by
\[
\Sigma \equiv \Sigma(u^0) = \{u^0 + \mu \nabla_{(z,p_z)} \cP(u^0):\mu\in\RR\}\,,
\]
where we are using the notation 
$\nabla_{(z,p_z)}\cP:=(0,0,\frac{\partial \cP}{\partial z},0,0,\frac{\partial \cP}{\partial p_z})$. 
We denote by $u^\srm = \Sigma(u^0) \cap W^\srm(\Lambda_\ep)$ and $u^\urm = \Sigma(u^0) \cap W^\urm(\Lambda_\ep)$
the intersections of the line $\Sigma$ with the stable and unstable manifolds of $\Lambda_\ep$, respectively.
Then, there exist constants $\mu^\srm\in\RR$ and $\mu^\urm\in\RR$ such that these intersections are given by
\begin{align*}
u^\srm = {} & \bigg(x+F_1(\tau),y+F_2(\tau),z^0(\tau)+\mu^\srm \partial_z \cP^0,p_x,p_y,p_z^0(\tau) + \mu^\srm \partial_{p_z} \cP^0 \bigg)\,,\\
u^\urm = {} & \bigg(x+F_1(\tau),y+F_2(\tau),z^0(\tau)+\mu^\urm \partial_z \cP^0,p_x,p_y,p_z^0(\tau) + \mu^\urm \partial_{p_z} \cP^0 \bigg)\,,
\end{align*}
where $\partial_z \cP^0 := \frac{\partial \cP}{\partial z}(u^0(\tau,x,y))$
and $\partial_{p_z} \cP^0 = \frac{\partial \cP}{\partial p_z}(u^0(\tau,x,y))$. Then, the following
result states the relationship between $\cP(u)$
 and the intersections of $W^\srm(\Lambda_\ep)$ and $W^\urm(\Lambda_\ep)$ for $\ep\neq 0$:

\begin{lemma}\label{lem:heteroclinic}
For each fixed $u^0$, the homoclinic intersections of the stable and unstable manifolds
are characterized by 
\[
u^\srm = u^\urm
\quad
\Leftrightarrow
\quad
\mu^\srm=\mu^\urm
\quad
\Leftrightarrow
\quad
\cP(u^\srm)=\cP(u^\urm).
\]
\end{lemma}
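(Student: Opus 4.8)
The plan is to reduce the statement to the observation that $\cP$, restricted to the transversal line $\Sigma(u^0)$, is a local diffeomorphism near the base point $u^0$; once this is established, the values $\cP(u^\srm)$ and $\cP(u^\urm)$ pin down the parameters $\mu^\srm$ and $\mu^\urm$ uniquely. The first equivalence is essentially free: the direction vector $\nabla_{(z,p_z)}\cP(u^0)=(0,0,\partial_z\cP^0,0,0,\partial_{p_z}\cP^0)$ is nonzero, since $\partial_{p_z}\cP^0=p_z^0(\tau)=2\lambda/\cosh(\lambda\tau)\neq 0$ for finite $\tau$, so $\mu\mapsto u^0+\mu\,\nabla_{(z,p_z)}\cP(u^0)$ is injective. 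Reading off the explicit expressions recalled just before the lemma, $u^\srm=u^0+\mu^\srm\,\nabla_{(z,p_z)}\cP(u^0)$ and likewise for $u^\urm$, hence $u^\srm=u^\urm\Leftrightarrow\mu^\srm=\mu^\urm$; and $\mu^\srm=\mu^\urm\Rightarrow\cP(u^\srm)=\cP(u^\urm)$ is then trivial.

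The substantive implication is $\cP(u^\srm)=\cP(u^\urm)\Rightarrow\mu^\srm=\mu^\urm$. I would introduce the scalar function $g(\mu):=\cP\big(u^0+\mu\,\nabla_{(z,p_z)}\cP(u^0)\big)$, so that $g(\mu^\srm)=\cP(u^\srm)$ and $g(\mu^\urm)=\cP(u^\urm)$. Since $\cP$ depends on the phase-space point only through $z$ and $p_z$ (the variables $p_x,p_y$ entering merely as parameters, through $\lambda$ and $\alpha$, and being constant along $\Sigma$), one computes
\[
g'(0)=\Dif\cP(u^0)\big[\nabla_{(z,p_z)}\cP(u^0)\big]=(\partial_z\cP^0)^2+(\partial_{p_z}\cP^0)^2=|\nabla_{(z,p_z)}\cP(u^0)|^2>0 ,
\]
so $g$ is a diffeomorphism onto its image on some neighbourhood $(-\mu_0,\mu_0)$ of $0$, and in particular is injective there. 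It then suffices to know that $\mu^\srm,\mu^\urm\in(-\mu_0,\mu_0)$ for $|\ep|$ small, because then $g(\mu^\srm)=g(\mu^\urm)$ forces $\mu^\srm=\mu^\urm$.

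This last fact follows from the $\cO(\ep)$-closeness of $W^\srm_{\mathrm{loc}}(\Lambda_\ep)$ and $W^\urm_{\mathrm{loc}}(\Lambda_\ep)$ to the unperturbed coincident manifold, together with the transversality of $\Sigma(u^0)$ to $W^\srm(\Lambda_0)=W^\urm(\Lambda_0)$ at $u^0$ — which is the very same positivity, since near a finite-$\tau$ homoclinic point one has $T_{u^0}W^\srm(\Lambda_0)=\ker\Dif\cP(u^0)$ (because $\cP$ vanishes identically on the coincident whisker and $\Dif\cP(u^0)\neq0$) and $\langle\nabla\cP(u^0),\nabla_{(z,p_z)}\cP(u^0)\rangle>0$. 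By transversality the intersection points $u^\srm,u^\urm$ persist for small $\ep$ and depend smoothly on $\ep$ with $\mu^\srm,\mu^\urm\to0$ as $\ep\to0$. The main — but minor — point requiring care is to make this closeness and transversality uniform over the compact ranges of $\tau$ (equivalently $\mu$) and of $(x,y,p_x,p_y)$ relevant to the construction, so that a single threshold $\ep^*$ works; this is routine given that all the objects involved are explicit and real-analytic.
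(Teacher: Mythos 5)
Your proof is correct and takes essentially the same route as the paper: you introduce the scalar function $g(\mu)=\cP\bigl(u^0+\mu\,\nabla_{(z,p_z)}\cP(u^0)\bigr)$ (denoted $f$ in the paper), compute $g'(0)=(\partial_z\cP^0)^2+(\partial_{p_z}\cP^0)^2>0$, and conclude local injectivity of $\cP$ restricted to $\Sigma(u^0)$, which pins down $\mu^\srm=\mu^\urm$ once $\ep$ is small. Your extra remarks on the transversality of $\Sigma$ and on the smallness of $\mu^\srm,\mu^\urm$ make explicit points the paper leaves implicit, but the argument is the same.
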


\begin{proof}
It is clear that these implications hold from the left to the right. The converse follows from the fact that the function
\[
f(\mu):=\cP(u^0+\mu \nabla_{(z,p_z)} \cP(u^0))=\frac{\Big(p_z^0(\tau)+\mu \partial_{p_z} \cP^0\Big)^2}{2} - \lambda^2 \cos(z^0(\tau)+\mu \partial_{z} \cP^0 - \alpha) - \lambda^2\,,
\]
has no critical points if $\mu$ is small enough. Indeed, an easy computation shows that the derivative
\[
f'(0) = \frac{4 \lambda^2}{\cosh^2(\lambda \tau)} + \lambda^4 \sin^2(4 \arctan \rme^{\lambda \tau}+\pi)
\]
does not vanish in the region $\{p_x>0,p_y>0\}$ because $\lambda=(p_x^2+p_y^2)^{1/4}$.
\end{proof}

Before proving Proposition~\ref{prop:melnikov} we summarize some basic asymptotic
properties of the flows $\phi_t^0$ and $\phi_t^\ep$. We recall that the dynamics of the unperturbed
problem has a phase-shift
$
u_{\pm} = (x_{\pm},y_{\pm},0,0,0,0),
$
where $x_\pm$, and $y_\pm$ are given by Eq.~\eqref{eq:phase-shift}.
The trajectories on the invariant torus $\cT_{p_x,p_y}$ and the trajectories on the whiskers
$W^{\srm}(\cT_{p_x,p_y}) = W^{\urm}(\cT_{p_x,p_y})$ converge exponentially 
to each other, with rate $\lambda$ as $t\to\pm\infty$. More precisely
\begin{equation}\label{eq:prop:flows:0}
\begin{array}{ll}
|\phi_t^0(u^0)-\phi_t^0(u^*+u_+)|
\leq C_1 \rme^{-\lambda t} \,, &  t\geq
0\,, \\
|\phi_t^0(u^0)-\phi_t^0(u^*+u_-)|
\leq C_1 \rme^{-\lambda |t|}\,, &  t\leq
0\,, \\
\end{array}
\end{equation}
for some constant $C_1>0$. These estimates are obtained using
the explicit expressions computed in Section~\ref{ssec:geom:unper}, and they just reflect the normal
hyperbolicity of the NHIM. Analogous expressions hold for the perturbed system.
In this case, given $u^\srm \in W^\srm(\Lambda_\ep)$ and $u^\urm \in W^\urm(\Lambda_\ep)$, there exist
points on the NHIM,
$u^{\srm*}, u^{\srm}_+, u^{\urm*}, u^{\urm}_- \in \Lambda_\ep$,
that are $\ep$-close to their unperturbed counterparts.
These points satisfy
\begin{equation}\label{eq:prop:flows:ep}
\begin{array}{ll}
|\phi_t^\ep(u^\srm)-\phi_t^\ep(u^{\srm *}+u^\srm_+)|
\leq C_2 \rme^{-\lambda_\ep t}\,, &  t\geq
0\,, \\
|\phi_t^\ep(u^\urm)-\phi_t^\ep(u^{\urm *}+u^\urm_-)|
\leq C_2 \rme^{-\lambda_\ep |t|}\,, &  t\leq
0\,, \\
\end{array}
\end{equation}
for some constant $C_2>0$, where $\lambda_\ep=\lambda +\cO(\ep)$.
We also need to recall some estimates
that allow us to compare the perturbed and the unperturbed flows. The following
estimates, which hold for all $t\in\RR$, are standard and immediate to obtain
(for certain positive constants $C_3$, $C_4$, $C_5$ and $K$):
\begin{equation}\label{eq:est:NHIM}
\begin{split}
|\phi_t^\ep (u^\srm) - \phi_t^0 (u^0)| \leq {} & C_3 |u^\srm-u^0| \rme^{K \ep |t|} \leq C_5 \ep \rme^{K \ep |t|}\,, \\
|\phi_t^\ep (u^\urm) - \phi_t^0 (u^0)| \leq {} & C_3 |u^\urm-u^0| \rme^{K \ep |t|} \leq C_5 \ep \rme^{K \ep |t|}\,, \\
|\phi_t^\ep(u^{\srm *} + u^{\srm}_+) - \phi_t^0 (u^*+u_+)| \leq {} & C_4 |u^{\srm *}-u^*| \rme^{K \ep |t|} \leq C_5 \ep \rme^{K \ep |t|}\,, \\
|\phi_t^\ep(u^{\urm *} + u^{\urm}_-) - \phi_t^0 (u^*+u_-)| \leq {} & C_4 |u^{\urm *}-u^*| \rme^{K \ep |t|} \leq C_5 \ep \rme^{K \ep |t|}\,.
\end{split}
\end{equation}
These expressions state that for $\ep>0$ there may be unstable motions inside the NHIM and we
cannot have a global control on the dynamics for all time. 
Nevertheless, 
we have the bounds
\begin{equation}\label{eq:est:NHIM2}
C_5 \ep \rme^{K \ep |t|} \leq C_7 \ep^{\rho_1},
\qquad \mbox{for $|t|\leq C_6 \log (1/\ep)$},
\end{equation}
with $C_6>0$, $C_7>0$ and $0<\rho_1<1$, which is enough for our purposes.

\begin{proof}[Proof of Proposition~\ref{prop:melnikov}]
To monitor the evolution of the function $\cP$, given by Eq.~\eqref{eq:Ham:P}, along the perturbed flow, we use the formula
\[
\frac{d}{dt}(\cP(\phi_t^\ep(u)) = \{\cP,H_\ep\}(\phi_t^\ep(u)) = \ep \{\cP,h\}(\phi_t^\ep(u))\,,
\]
where we have used that $\{\cP,H_0\}=0$. Integrating this equation we obtain
\begin{equation}\label{eq:intP}
\cP(\phi_{t_2}^\ep (u)) = \cP(\phi^\ep_{t_1}(u)) + \ep \int_{t_1}^{t_2} \{\cP,h\}(\phi_\sigma^\ep(u)) d\sigma\,.
\end{equation}

Using~\eqref{eq:intP} with $t_2=0$, $t_1=\infty$ and $u=u^{\srm}$, we have
\begin{equation}\label{eq:Pu1}
\cP(u^{\srm})=\cP(\phi_{t \to \infty}^\ep (u^{\srm})) - \ep \int_0^\infty \{\cP,h\} ( \phi_{\sigma}^\ep(u^{\srm})) d\sigma\,,
\end{equation}
and using~\eqref{eq:intP} with $t_2=0$, $t_1=\infty$ and $u=u^{\srm *} + u^\srm_+$, we get
\[
\cP(u^{\srm *} + u^\srm_+)=\cP(\phi_{t \to \infty}^\ep (u^{\srm *} + u^\srm_+)) - \ep \int_0^\infty \{\cP,h\} ( \phi_{\sigma}^\ep(u^{\srm *} + u^\srm_+)) d\sigma\,.
\]
Subtracting these expressions we obtain
\begin{align}
 \cP(u^{\srm})-\cP(u^{\srm *}+u^{\srm}_+) = {} &  \cP(\phi_{t\to \infty}^\ep(u^{\srm}))-\cP(\phi_{t\to \infty}^\ep(u^{\srm *}+u^{\srm}_+))\nonumber\\ 
& - \ep \int_0^\infty 
\Big(\{\cP,h\} (\phi_\sigma^\ep(u^{\srm}))- \{\cP,h\} (\phi_\sigma^\ep(u^{\srm *}+u^{\srm}_+))\Big) d\sigma\,.\label{eqotra2}
\end{align}
Now, we observe that $\cP(u^{\srm *}+u^{\srm}_+)=\cP(u^*+u_++\cO(\ep))=\cO(\ep^2)$, 
since both $\cP$ and $\nabla \cP$ vanish on the unperturbed NHIM $\Lambda_0$.
Moreover, from the asymptotic properties~\eqref{eq:prop:flows:ep} it follows that
\[
| \cP(\phi_{t\to \infty}^\ep(u^{\srm}))-\cP(\phi_{t\to \infty}^\ep(u^{\srm *}+u^{\srm}_+))  | \leq C_8 |\phi_{t \to \infty}^\ep(u^{\srm})- \phi_{t\to \infty}^\ep(u^{\srm *}+u^{\srm}_+)| \to 0\,.
\]
To study the integral term, 
we recall that we cannot control the dynamics on the NHIM for all time,
so we consider
\begin{align*}
\int_{C_6 \log(1/\ep)}^\infty 
\Big(\{\cP,h\} (\phi_\sigma^\ep(u^{\srm})) & - \{\cP,h\} (\phi_\sigma^\ep(u^{\srm *}+u^{\srm}_+))\Big) d\sigma \\
\leq {} & \int_{C_6 \log(1/\ep)}^\infty C_9 | 
\phi_{\sigma}^\ep(u^{\srm}) - \phi_{\sigma}^\ep(u^{\srm *}+u^{\srm}_+)| d \sigma \\
\leq {} &  C_9 C_2 \int_{C_6 \log(1/\ep)}^\infty \mathrm{e}^{-\lambda_\ep \sigma} d \sigma = \frac{C_9 C_2}{\lambda_\ep} \mathrm{e}^{-\lambda_\ep C_6 \log(1/\ep)} = \cO(\ep^{\rho_2})\,,
\end{align*}
for certain constant $\rho_2>0$. Notice that we have used Eq.~\eqref{eq:prop:flows:ep} to derive the second inequality. Accordingly, these estimates and Eq.~\eqref{eqotra2} imply that
\[
\cP(u^{\srm}) = - \ep \int_0^{C_6 \log(1/\ep)} \Big(\{\cP,h\} (\phi_\sigma^\ep(u^{\srm}))- \{\cP,h\} (\phi_\sigma^\ep(u^{\srm *}+u^{\srm}_+))\Big) d\sigma + \cO(\ep^2) + \cO(\ep^{1+\rho_2})\,.
\]
Now, we can control the quantities $\phi_{\sigma}^\ep(u^{\srm})-\phi_{\sigma}^0(u^0)$ and $\phi_{\sigma}^\ep(u^{\srm *}+u^{\srm}_+)-\phi_\sigma^0 (u^*+u_+)$ using Eqs.~\eqref{eq:est:NHIM} and~\eqref{eq:est:NHIM2}, so we write
\begin{align*}
\cP(u^{\srm}) = {} &
- \ep \int_0^{C_6 \log(1/\ep)}
\Big(\{\cP,h\} (\phi_\sigma^0(u^{0})) - \{\cP,h\} (\phi_\sigma^0(u^{*}+u_+))\Big) d\sigma +I+\cO(\ep^2)+\cO(\ep^{1+\rho_2})\,,
\end{align*}
where
\begin{equation*}
I\leq \ep C_9 \int_0^{C_6 \log(1/\ep)}
\Big(|\phi_{\sigma}^\ep(u^{\srm})-\phi_{\sigma}^0(u^0)| +
|\phi_{\sigma}^\ep(u^{\srm *}+u^{\srm}_+)-\phi_\sigma^0 (u^*+u_+)|\Big) d\sigma\leq 
2C_9C_7C_6\ep^{1+\rho_1} \log(1/\ep) = \cO(\ep^{1+\rho_3})
\end{equation*}
for certain constant $0 <\rho_3 <1$. We conclude that
\[
\cP(u^\srm)=-\ep \int_0^\infty \Big(\{\cP,h\} (\phi_\sigma^0(u^{0})) - \{\cP,h\} (\phi_\sigma^0(u^{*}+u_+))\Big) d\sigma 
+ \cO(\ep^{1+\rho})\,,
\]
for some constant $\rho>0$. Here we have used the bound 
$$
\int_{C_6 \log(1/\ep)}^\infty 
\Big(\{\cP,h\} (\phi_\sigma^0(u^0)) - \{\cP,h\} (\phi_\sigma^0(u^{*}+u_+))\Big) d\sigma=\cO(\ep^{\rho_4})\,.
$$

Finally, obtaining a similar formula for $\cP(u^{\urm})$ and subtracting, we obtain
\begin{equation}\label{eq:int1}
\cP(u^{\urm})-\cP(u^{\srm}) = \ep \int_{-\infty}^{\infty} 
\Big(\{\cP,h\} (\phi_\sigma^0(u^{0})) - \{\cP,h\} (\phi_\sigma^0(u^{*}+u_\pm))\Big) d\sigma
+ \cO(\ep^{1+\rho})\,.
\end{equation}
Recalling that the unperturbed flow $\phi_{\sigma}^0$ satisfies Eq.~\eqref{eq:flow:H0:outer},
we can write
\[
\frac{\partial}{\partial \tau} (h(\phi_\sigma^0(u^0))) = 
\partial_x h^0 \, \dot{F}_1 (\tau+\sigma)
+\partial_y h^0 \, \dot{F}_2 (\tau+\sigma)
+\partial_z h^0 \, \dot{z}^0(\tau+\sigma)
+\partial_{p_z} h^0 \, \dot{p_z}(\tau+\sigma) \,,
\]
where we are using the notation $\partial_\xi h^0 := \frac{\partial h}{\partial \xi}(\phi_\sigma^0 (u^0))$. Now we observe that $\dot{F}_1(\tau) = - \sin(\alpha) - \sin (z^0(\tau))$,
and $\dot{F}_2(\tau) = - \cos(\alpha) - \cos (z^0(\tau))$,
so we obtain
\begin{align*}
\frac{\partial}{\partial \tau} (h(\phi_\sigma^0(u^0)))
= {} &
- [\sin(\alpha) + \sin (z^0(\tau+\sigma))] \partial_x h^0 - [ \cos(\alpha) + \cos (z^0(\tau+\sigma))] \partial_y h^0 \\
& + p_z^0(\tau+\sigma) \partial_z h^0  + (p_x \cos (z^0(\tau+\sigma))-p_y \sin(z^0(\tau+\sigma))) \partial_{p_z} h^0\,.
\end{align*}
Using the definition of $\cP$ in~\eqref{eq:Ham:P}, we end up with
\[
\frac{\partial}{\partial \tau} (h(\phi_\sigma^0(u^0)))
=  - \{\cP,h\}(\phi_\sigma^0 (u^0))\,.
\]
Hence, the expression~\eqref{eq:int1} is equivalent to
\[
\cP(u^{\urm})-\cP(u^{\srm}) = - \ep \frac{\partial}{\partial \tau} L(\tau,x,y,p_x,p_y)
+ \cO(\ep^{1+\rho})\,,
\]
where we 
have considered the expansion $h(u,\ep)=h(u,0)+\cO(\ep)$ and
used the definition of $L$ in Eq.~\eqref{eq:Poin:Mel}.
By Lemma~\ref{lem:heteroclinic}, homoclinic intersections are characterized by
the condition $\cP(u^\srm)=\cP(u^{\urm})$. Therefore, we conclude that the existence of
homoclinic intersections is given, at first order
perturbation theory, by the zeros of a directional derivative of the
Poincar\'e function $L(\tau,x,y,p_x,p_y)$, as we wanted to prove.
\end{proof}

Let us consider the Hamiltonian of the ABC system, written as
$
H_\ep= H_0 + \ep H_1 + \ep^2 H_2,
$
where $H_0$, $H_1$ and $H_2$ are given by Eqs.~\eqref{eq:H0:two}--~\eqref{eq:H2}. Then,
using expressions~\eqref{eq:flow:H0:inner} and~\eqref{eq:flow:H0:outer},
the Poincar\'e function $L$ has the form
\[
L(\tau,x,y,p_x,p_y)= M_1 \cos x + M_2 \cos y + M_3 \sin x + M_4 \sin y\,,
\]
where the coefficients $M_i\equiv M_i(\tau,p_x,p_y)$ are given by the integrals
\begin{align}
M_1 := {} & \hat B \int_{-\infty}^\infty \bigg( 
(p_y - \cos z^*) \sin (x_{\pm}+\omega_1 \sigma) 
-(p_y-\cos z^0) \sin (F_1 + \omega_1 \sigma) 
- p_z^0 \cos(F_1 + \omega_1 \sigma) 
\bigg) d\sigma\,, \label{eq:M1:simple}\\
M_2 := {} & \hat C \int_{-\infty}^\infty \bigg( 
(p_x - \sin z^*) \cos (y_{\pm}+\omega_2 \sigma) 
-(p_x-\sin z^0) \cos (F_2 + \omega_2 \sigma) 
- p_z^0 \sin(F_2 + \omega_2 \sigma) 
\bigg) d\sigma\,, \label{eq:M2:simple}\\
M_3 := {} & \hat B \int_{-\infty}^\infty \bigg( 
(p_y - \cos z^*) \cos (x_{\pm}+\omega_1 \sigma) 
-(p_y-\cos z^0) \cos (F_1 + \omega_1 \sigma) 
+ p_z^0 \sin(F_1 + \omega_1 \sigma) 
\bigg) d\sigma\,, \label{eq:M3:simple}\\
M_4 := {} &  \hat C \int_{-\infty}^\infty \bigg( 
(p_x-\sin z^0) \sin (F_2 + \omega_2 \sigma) 
- (p_x - \sin z^*) \sin (y_{\pm}+\omega_2 \sigma) 
- p_z^0 \cos(F_2 + \omega_2 \sigma) 
\bigg) d\sigma\,. \label{eq:M4:simple}
\end{align}
Here $F_1=F_1(\tau+\sigma)$ and $F_2=F_2(\tau+\sigma)$ are given by Eq.~\eqref{eq:F1:F2:ABC},
and $z^0=z^0(\tau+\sigma)$ and $p_z^0=p_z^0(\tau+\sigma)$
are given by Eq.~\eqref{eqsep}.

\subsection{The Scattering map}
\label{ssec:scattering}

It is convenient to introduce the notation
\begin{equation}\label{redpoin}
\cL(x-\omega_1 \tau, y-\omega_2 \tau,p_x,p_y):=L(0,x-\omega_1 \tau, y-\omega_2 \tau, p_x, p_y)\,.
\end{equation}
Since the properties of the unperturbed flow imply that 
$$
L(0,x-\omega_1 \tau, y-\omega_2 \tau, p_x, p_y)=L(\tau,x,y,p_x,p_y)\,,
$$
we can consider the critical points of the function
\begin{equation}\label{eq:tau:poincaremelnikoc}
\tau \longmapsto \cL(x-\omega_1 \tau, y-\omega_2 \tau,p_x,p_y)
\end{equation}
in order to study the homoclinic intersections.

Then, we introduce the domain $\cD \subset \TT^2 \times \cI\subset \TT^2\times \RR^2$ in hypothesis $\mathbf{A}_2$ of Theorem~\ref{teo:diffusion:ABC}, such that
for each $(x,y,p_x,p_y)$ in $\cD$, there exists a unique critical point
$\tau^*=\tau^*(x,y,p_x,p_y)$ of the map~\eqref{eq:tau:poincaremelnikoc}
defining a 
smooth function
on $\cD$.
This implies that the points
\[
(x,y,z^0(\tau^*),p_x,p_y,p_z^0(\tau^*)) + \cO(\ep) \in W^\srm(\Lambda_\ep) \pitchfork W^\urm(\Lambda_\ep) 
\]
define a manifold $\Gamma_\ep$, called \emph{homoclinic manifold}. The scattering map
associated to $\Gamma_\ep$ is defined in a domain $\cD_{\ep,b} \subset \cD$ in the following
way (see~\cite{DLS06,DLS08}):
\begin{equation}\label{eq:scat:map}
\begin{array}{rcl}
s_\ep : \cD_{\ep,{\rm b}} \subset \cD & \longrightarrow & \cD_{\ep,{\rm f}} \subset \TT^2 \times \RR^2\,,\\
u_{\rm b} & \longmapsto & u_{\rm f}\,,
\end{array}
\end{equation}
with $u_{\rm f}=s_{\ep}(u_{\rm b})$ if and only if there exists $u \in \Gamma_\ep$ such that
\begin{align*}
|\phi_t^\ep(u)-\phi_t^\ep(P_\ep(u_{\rm f}))| & \longrightarrow 0, \quad t \rightarrow \infty\,, \\
|\phi_t^\ep(u)-\phi_t^\ep(P_\ep(u_{\rm b}))| & \longrightarrow 0, \quad t \rightarrow -\infty\,,
\end{align*}
where $P_\ep$ is the parameterization of the perturbed NHIM $\Gamma_\ep$ introduced in Section~\ref{ssec:inner:approx}. Since the parameterizing variables $(x,y,p_x,p_y)$ and the phase space variables coincide up to order $\ep^2$, we can safely assume that they are the same. 
The sets $\cD_{\ep,{\rm b}}$ and $\cD_{\ep,{\rm f}}$ are defined as:
\[
\cD_{\ep,{\rm b}} := \bigcup_{u \in \Gamma_\ep} \{u_{\rm b}\},
\qquad
\cD_{\ep,{\rm f}} := \bigcup_{u \in \Gamma_\ep} \{u_{\rm f}\}.
\]
The scattering map relates the past asymptotic trajectory of any orbit in the homoclinic
manifold to its future asymptotic behavior.

The scattering map~\eqref{eq:scat:map} is exact symplectic (see~\cite{DLS08})
and it is given by the time-1 flow of the Hamiltonian function
\[
S_\ep = S_0 + \ep S_1 + \cO(\ep^2)
\]
where $S_0$ corresponds to the unperturbed outer dynamics,
and $S_1$ is given by the Poincar\'e function~\eqref{redpoin} evaluated at $\tau=\tau^*$. Notice that
the unperturbed scattering map for the ABC system satisfies
\[
u_{\rm f} = u_{\rm b} + u_{+} - u_{-}, \qquad 
u_{\pm} = (x_{\pm},y_{\pm},0,0,0,0)\,,
\]
where $x_\pm$, and $y_\pm$ are given by Eq.~\eqref{eq:phase-shift}.
Hence, we obtain the following expressions for $S_0$ and $S_1$:
\[
S_0=-8 (p_x^2 + p_y^2)^{1/4},
\qquad
S_1(x,y,p_x,p_y)=\cL(x-\omega_1 \tau^*, y-\omega_2 \tau^*,p_x,p_y)\,,
\]
where $\tau^*=\tau^*(x,y,p_x,p_y)$ is the critical point
of the function~\eqref{eq:tau:poincaremelnikoc} that
has the following expression for the ABC system:
\[
\tau \mapsto M_1^0 \cos(x-\omega_1 \tau) + M_2^0 \cos(y-\omega_2 \tau) + M_3^0 \sin(x-\omega_1 \tau) + M_4^0 \sin (y-\omega_2\tau)\,,
\]
where $M_i^0:= M_i(0,p_x,p_y)$ are obtained evaluating the integrals~\eqref{eq:M1:simple},
\eqref{eq:M2:simple}, \eqref{eq:M3:simple}, and \eqref{eq:M4:simple}. 
Finally, we discuss some conditions that allows us to justify that there
exists a domain $\cD$ where the above construction is well posed for the ABC system.
We fix a value of $(p_x,p_y)$ and notice that the function
\[
(x,y) \mapsto \cL(x,y) = M_1^0 \cos(x) + M_2^0 \cos(y) + M_3^0 \sin(x) + M_4^0 \sin (y),
\]
has four critical points $(x_c,y_c)$ given by
\begin{equation}\label{eq:crit:cL}
x_c = \arctan \frac{M_3^0}{M_1^0}, \qquad y_c = \arctan \frac{M_4^0}{M_2^0}\,.
\end{equation}
It is easy to check that these critical points are nondegenerate provided that $M_1^0$ and $M_3^0$ do not vanish simultaneously, and the same for $M_2^0$ and $M_4^0$. This implies, in particular, that $\hat B\neq0$ and $\hat C\neq 0$, as required in the statement of Theorem~\ref{teo:diffusion:ABC}.
Hence, we observe that we are in the same
situation considered in~\cite{DH11}, where the existence of $\tau^*$ was
justified in detail using the tangential intersection of straight lines in the direction $(\omega_1,\omega_2)$
with the regular level curves of $\cL(x,y)$, which are periodic curves which fill out
a region bounded by the level curves containing the saddle points.

\section{Combination of inner and outer dynamics}
\label{sec:chains}

In this Section we conclude the proof of Theorem~\ref{teo:diffusion:ABC}
and we give explicit formulas for the condition~\eqref{eq:cond:trans:teo}.
To this end, we combine the inner and outer dynamics.
In Proposition~\ref{prop:level:sets} we showed that the invariant tori (both primary
and secondary) of the ABC system are given by the level sets
of a couple of functions. This couple defines an $\RR^2$-valued map that will be denoted as $F_\ep$ all along this section. The scattering map described in Section~\ref{ssec:scattering}
transports the level sets of $F_\ep$ onto the level sets of $F_\ep \circ s_\ep$. Then, following~\cite{DLS00,DLS06}, it turns out that (c.f. Lemma 10.4 in~\cite{DLS06})
given two manifolds $\Sigma_1,\Sigma_2 \subset \Lambda_\ep$
that are invariant under the inner dynamics, if $\Sigma_1$ intersects
transversally $s_\ep(\Sigma_2)$ in $\Lambda_\ep$, then $W^{\srm}_{\Sigma_1} \pitchfork W^{\srm}_{\Sigma_2}$.
This is the main ingredient to create heteroclinic intersections between the KAM tori in $\Lambda_\ep$.

To characterize the action of the scattering map on the level sets of a given function, we
follow the computations in~\cite{DH09} (which are also used in~\cite{DLS13,DH11}). Given a function
$F=F_0 + \ep F_1 + \ep^2 F_2 + \ldots$ we can approximate $F \circ s_\ep$ as
\begin{align}
F \circ s_\ep = {} & F + \{F,S_\ep\} + \cO(\ep^2) = F + \{ F_0 + \ep F_1, S_0 + \ep S_1\} +\cO(\ep^2) \nonumber \\
              = {} & F + \{ F_0, S_0\} + \ep (\{F_1, S_0\} + \{ F_0,S_1\} ) +\cO(\ep^2)\,. \label{eq:F:S}
\end{align}
It is worth mentioning that this expression does not correspond with the expression
obtained in~\cite{DLS13,DH09,DH11}, due to the presence of a phase-shift in the unperturbed problem.
We observe that $F_0$ and $S_0$ depend
only on the momenta, so we have $\{F_0,S_0\}=0$.

Let us consider a function $F : \TT^2 \times \RR^2 \rightarrow \RR^2$ that defines the level sets
\begin{align}
F_1 = F_{1,0} + \ep F_{1,1} + \cO(\ep^2) = {} & e_1\,, \label{eq:trans1}\\
F_2 = F_{2,0} + \ep F_{2,1} + \cO(\ep^2) = {} & e_2\,, \label{eq:trans2}
\end{align}
and the transformation of these level sets by means of the scattering map
\begin{align}
F_1 \circ s_\ep = F_{1,0} + \ep F_{1,1} + \ep (\{F_{1,1}, S_0\} + \{ F_{1,0},S_1\} ) + \cO(\ep^2) = {} & e_1'\,, \label{eq:trans3}\\
F_2 \circ s_\ep = F_{2,0} + \ep F_{2,1} + \ep (\{F_{2,1}, S_0\} + \{ F_{2,0},S_1\} ) + \cO(\ep^2) = {} & e_2'\,, \label{eq:trans4}
\end{align}
for certain $e_1', e_2'$. We will use Eqs.~\eqref{eq:trans1}--~\eqref{eq:trans4}
to determine transversal intersections between these level sets. Indeed, if we subtract these expressions, we
have
\begin{align}
\{F_{1,1}, S_0\} + \{ F_{1,0},S_1\} + \cO(\ep) = {} & \frac{e_1'-e_1}{\ep}\,, \label{eq:trans5}\\
\{F_{2,1}, S_0\} + \{ F_{2,0},S_1\} + \cO(\ep) = {} & \frac{e_2'-e_2}{\ep}\,. \label{eq:trans6}
\end{align}
Then, if we use Eqs.~\eqref{eq:trans1} and~\eqref{eq:trans2} to write $p_x=p_x(x,y,e_1,e_2)$ and
$p_y=p_y(x,y,e_1,e_2)$, and we introduce these expressions into Eqs.~\eqref{eq:trans5} and~\eqref{eq:trans6}, it turns out that we will have intersection as long as
$\tfrac{e_1'-e_1}{\ep}$ and $\tfrac{e_1'-e_1}{\ep}$ are small enough, close to
the non-degenerate solutions of
\begin{align}
\{F_{1,1}, S_0\} + \{ F_{1,0},S_1\} = {} & 0\,, \label{eq:trans7}\\
\{F_{2,1}, S_0\} + \{ F_{2,0},S_1\} = {} & 0\,. \label{eq:trans8}
\end{align}
The non-degeneracy condition, which implies that the intersection is transversal, reads as
\begin{equation}\label{eq:det}
\det
\begin{pmatrix}
\vphantom{\bigg(} \frac{\partial}{\partial x} \Big(\{F_{1,1}, S_0\} + \{ F_{1,0},S_1\} \Big) & \frac{\partial}{\partial y} \Big(\{F_{1,1}, S_0\} + \{ F_{1,0},S_1\} \Big) \\
\frac{\partial}{\partial x} \Big(\{F_{2,1}, S_0\} + \{ F_{2,0},S_1\} \Big) & \frac{\partial}{\partial y} \Big(\{F_{2,1}, S_0\} + \{ F_{2,0},S_1\} \Big) 
\end{pmatrix}
\neq 0
\end{equation}
for each point at the intersection of the level sets of $F$. Let us remark that the condition~\eqref{eq:det}
is evaluated by fixing $p_x$ and $p_y$ by means of $F_{1,0}=e_1$ and $F_{2,0}=e_2$, where
we have neglected the $\cO(\ep)$ terms.

\begin{remark}\label{rem:symm}
It is easy to check that, in the non-resonant case, the matrix in Eq.~\eqref{eq:det} is symmetric.
This is a consequence of the geometric structure of the problem, since the functions
$F_{1,1}$ and $F_{2,1}$ are obtained by means of the change of variables~\eqref{eq:change:g}.
\end{remark}

Finally, let us express the condition~\eqref{eq:det} in a explicit way
for the case of the ABC system. We consider separately the non-resonant and the resonant zones:

\begin{itemize}
\item \textbf{Non-resonant region:} From Proposition~\ref{prop:level:sets} it follows that
we have to consider the level sets~\eqref{eq:tori:non-resonant}. In order to check the condition~\eqref{eq:det}
we introduce 
$F_1=F_{1,0} + \ep F_{1,1} + \cO(\ep^2)$
and 
$F_2=F_{2,0} + \ep F_{2,1} + \cO(\ep^2)$,
where
\begin{align*}
F_{1,0} := {} & p_x\,, \\
F_{1,1} := {} & - \hat B \tfrac{\omega_2}{\omega_1} \sin x\,, \\
F_{2,0} := {} & p_y\,, \\
F_{2,1} := {} & - \hat C \tfrac{\omega_1}{\omega_2} \cos y\,.
\end{align*}
A direct computation shows that
\begin{align*}
\{F_{1,1},S_0\} = {} & 4 \hat B \tfrac{\omega_2}{\omega_1} p_x (p_x^2 + p_y^2)^{-3/4} \cos x\,, \\
\{F_{1,0},S_1\} = {} & 
-M_3^0 \left( 1-\omega_1 \tfrac{\partial \tau^*}{\partial x} \right) \cos(x-\omega_1 \tau^*)
+M_4^0 \omega_2 \tfrac{\partial \tau^*}{\partial x} \cos(y-\omega_2 \tau^*) \\
& +M_1^0 \left( 1-\omega_1 \tfrac{\partial \tau^*}{\partial x} \right) \sin(x-\omega_1 \tau^*)
-M_2^0 \omega_2 \tfrac{\partial \tau^*}{\partial x} \sin(y-\omega_2 \tau^*)\,,\\
\{F_{2,1},S_0\} = {} & - 4 \hat C \tfrac{\omega_1}{\omega_2} p_y (p_x^2 + p_y^2)^{-3/4} \sin y\,, \\
\{F_{2,0},S_1\} = {} & 
M_3^0 \omega_1 \tfrac{\partial \tau^*}{\partial y} \cos(x-\omega_1 \tau^*)
-M_4^0 \left( 1-\omega_2 \tfrac{\partial \tau^*}{\partial y} \right) \cos(y-\omega_2 \tau^*) \\
& -M_1^0 \omega_1 \tfrac{\partial \tau^*}{\partial y} \sin(x-\omega_1 \tau^*)
+M_2^0 \left(1- \omega_2 \tfrac{\partial \tau^*}{\partial y} \right) \sin(y-\omega_2 \tau^*)\,,
\end{align*}
and a straightforward but cumbersome computation allows us to compute the $2\times 2$ matrix in Eq.~\eqref{eq:det}, which reads as
\begin{equation*}
\left(
  \begin{array}{cc}
    \Delta_1 & \Delta_2 \\
    \Delta_2 & \Delta_3 \\
  \end{array}
\right)\,,
\end{equation*}
where the coefficients have the expressions
{\allowdisplaybreaks
\begin{align}
\Delta_1 := {} & \tfrac{\partial}{\partial x} (\{F_{1,1}, S_0\} + \{ F_{1,0},S_1\}) \label{eq:Delta:1} \\
= {} & - 4 \hat B \tfrac{\omega_2}{\omega_1} p_x (p_x^2 + p_y^2)^{-3/4} \sin x + \left[ M_3^0 \omega_1 \tau^*_{xx}+M_1^0 (1-\omega_1 \tau^*_x)^2 \right] \cos(x-\omega_1 \tau^*)  \nonumber \\
& + \left[ M_4^0 \omega_2 \tau^*_{xx}+M_2^0 (\omega_2 \tau^*_x)^2 \right] \cos(y-\omega_2 \tau^*)+ \left[ -M_1^0 \omega_1 \tau^*_{xx} + M_3^0 (1-\omega_1 \tau^*_x)^2 \right] \sin(x-\omega_1 \tau^*)  \nonumber \\
& + \left[ -M_2^0 \omega_2 \tau^*_{xx} + M_4^0 (\omega_2 \tau^*_{x})^2 \right] \sin(y-\omega_2 \tau^*)\,, \nonumber \\
\Delta_2 := {} & \tfrac{\partial}{\partial y} (\{F_{1,1}, S_0\} + \{ F_{1,0},S_1\}) = \tfrac{\partial}{\partial x} (\{F_{2,1}, S_0\} + \{ F_{2,0},S_1\}) \label{eq:Delta:2}\\
= {} & \left[M_3^0 \omega_1 \tau^*_{xy}-M_1^0 \omega_1 \tau^*_{y} (1-\omega_1 \tau^*_x) \right] \cos(x-\omega_1 \tau^*)  \nonumber \\
& + \left[M_4^0 \omega_2 \tau^*_{xy}-M_2^0 \omega_2 \tau^*_{x} (1-\omega_2 \tau^*_y) \right] \cos(y-\omega_2 \tau^*)  \nonumber \\
& + \left[-M_1^0 \omega_1 \tau^*_{xy}-M_3^0 \omega_1 \tau^*_{y} (1-\omega_1 \tau^*_x) \right] \sin(x-\omega_1 \tau^*)  \nonumber \\
& + \left[-M_2^0 \omega_2 \tau^*_{xy}-M_4^0 \omega_2 \tau^*_{x} (1-\omega_2 \tau^*_y) \right] \sin(y-\omega_2 \tau^*)\,,
\nonumber \\
\Delta_3 := {} &  \tfrac{\partial}{\partial y} (\{F_{2,1}, S_0\} + \{ F_{2,0},S_1\}) 
\label{eq:Delta:3} \\
= {} & -4 \hat C \tfrac{\omega_1}{\omega_2} p_y (p_x^2 + p_y^2)^{-3/4} \cos y + \left[M_3^0 \omega_1 \tau^*_{yy}+M_1^0 (\omega_1 \tau^*_{y})^2 \right] \cos(x-\omega_1 \tau^*)  \nonumber \\
& + \left[M_4^0 \omega_2 \tau^*_{yy}+M_2^0 (1-\omega_2 \tau^*_y)^2 \right] \cos(y-\omega_2 \tau^*) + \left[-M_1^0 \omega_1 \tau^*_{yy}+M_3^0 (\omega_1 \tau^*_{y})^2 \right] \sin(x-\omega_1 \tau^*)  \nonumber \\
& + \left[-M_2^0 \omega_2 \tau^*_{yy}+M_4^0 (1-\omega_2 \tau^*_y)^2 \right] \sin(y-\omega_2 \tau^*)\,.
\nonumber
\end{align}}

\vspace{-0.5cm}
\noindent Here the subscripts in $\tau^*$ denote, as usual, partial differentiation.
Then, the transversality condition in the non-resonant
region, using the functions $\Delta_i$, reads as
\begin{equation}\label{eq:cond:trans1}
\Delta_1 \Delta_3-\Delta_2^2 \neq 0.
\end{equation}

\item \textbf{Resonant region:}
From Proposition~\ref{prop:level:sets} it follows that we have to consider the level sets~\eqref{eq:tori:resonant}.
In order to check the condition in Eq.~\eqref{eq:det}
we introduce
$F_1=F_{1,0} + \ep F_{1,1} + \cO(\ep^2)$,
and 
$F_2=F_{2,0} + \ep F_{2,1} + \cO(\ep^2)$,
where
\begin{align*}
F_{1,0} := {} & p_x+p_y\,, \vphantom{\tfrac{\omega_1}{\omega_1}} \\
F_{1,1} := {} &  - \hat B \tfrac{\omega_2}{\omega_1} \sin x - \hat C \tfrac{\omega_1}{\omega_2} \cos y\,, \\
F_{2,0} := {} & \tfrac{e_1^2}{4}+\tfrac{e_1}{\sqrt{2}} + (1+\sqrt{2} e_1^{-1}) \left(\tfrac{e_1}{2}-p_y \right)^2\,, \\
F_{2,1} := {} & 2 \hat C \tfrac{\omega_1}{\omega_2} (1+\sqrt{2} e_1^{-1}) \left(\tfrac{e_1}{2}-p_y \right) \cos y\,.
\end{align*}
Then, the matrix in Eq.~\eqref{eq:det} has the form
\begin{equation*}
\left(
  \begin{array}{cc}
    \hat\Delta_1 & \hat\Delta_2 \\
    \hat\Delta_3 & \hat\Delta_4 \\
  \end{array}
\right)\,,
\end{equation*}
with 
{\allowdisplaybreaks
\begin{align}
\hat \Delta_1 := {} & \tfrac{\partial}{\partial x} (\{F_{1,1}, S_0\} + \{ F_{1,0},S_1\})  
= -4 \hat B \tfrac{\omega_2}{\omega_1} p_x (p_x^2 + p_y^2)^{-3/4} \sin x  
\label{eq:Delta:4} \\
& + \left[ M_3^0 \omega_1 ( \tau^*_{xx} +\tau^*_{xy})
+ M_1^0 (1-\omega_1 \tau^*_x)(1-\omega_1(\tau^*_x + \tau^*_y))
\right] \cos(x-\omega_1 \tau^*)  \nonumber \\
& + \left[ M_4^0 \omega_2 (\tau^*_{xx}+\tau^*_{xy})
- M_2^0 \omega_2 \tau^*_x (1-\omega_2(\tau^*_x + \tau^*_y)) \right]\cos(y-\omega_2 \tau^*)  \nonumber \\
& + \left[ -M_1 \omega_1(\tau^*_{xx} + \tau^*_{xy}) + M_3^0 (1-\omega_1 \tau^*_x)
(1-\omega_1(\tau^*_x + \tau^*_y)) \right] \sin (x-\omega_1 \tau^*)  \nonumber \\
& + \left[
- M_2^0 \omega_2 (\tau^*_{xx}+\tau^*_{xy})
- M_4^0 \omega_2 \tau^*_x (1- \omega_2(\tau^*_x + \tau^*_y)) 
\right]\sin (y-\omega_2 \tau^*)\,, 
\nonumber \\
\hat \Delta_2 := {} & \tfrac{\partial}{\partial y} (\{F_{1,1}, S_0\} + \{ F_{1,0},S_1\})
= -4 \hat C \tfrac{\omega_1}{\omega_2} p_y (p_x^2 + p_y^2)^{-3/4} \cos y  
\label{eq:Delta:5}  \\
& + \left[ M_3^0 \omega_1 (\tau_{xy}^*+\tau^*_{yy})-M_1^0 \omega_1 \tau^*_y (1-\omega_1(\tau^*_x+\tau^*_y)) \right] \cos(x-\omega_1 \tau^*)  \nonumber \\
& + \left[ M_4^0 \omega_2 (\tau^*_{xy}+\tau^*_{yy}) + M_2^0 (1-\omega_2 \tau^*_y) (1-\omega_2(\tau^*_x+\tau^*_y)) \right]\cos(y-\omega_2 \tau^*)  \nonumber \\
& + \left[ -M^0_1 \omega_1 (\tau^*_{xy}+\tau^*_{yy}) - M_3^0\omega_1 \tau^*_y (1-\omega_1(\tau^*_x+\tau^+_y))\right] \sin (x-\omega_1 \tau^*) \nonumber \\
& + \left[ -M^0_2 \omega_2 (\tau^*_{xy}+\tau^*_{yy}) + M_4^0 (1-\omega_2 \tau^*_y)(1-\omega_2(\tau^*_x+\tau^*_y))\right] \sin(y-\omega_2 \tau^*)\,, 
\nonumber \\
\hat \Delta_3 := {} & \tfrac{\partial}{\partial x} (\{F_{2,1}, S_0\} + \{ F_{2,0},S_1\}) 
= \gamma \left[ -M^0_3 \omega_1 \tau^*_{xy} + M_1^0 \omega_1 \tau^*_y (1-\omega_1 \tau^*_x) \right] \cos(x-\omega_1 \tau^*)
\label{eq:Delta:6} \\
{} & +\gamma \left[ - M_4^0 \omega_2 \tau^*_{xy} + M^0_2 \omega_2 \tau^*_x (1-\omega_2 \tau^*_y) \right] \cos(y-\omega_2 \tau^*)  \nonumber \\
{} & +\gamma \left[ M_1^0 \omega_1 \tau^*_{xy} + M_3^0 \omega_1 \tau^*_y (1-\omega_1 \tau^*_x) \right]  \sin (x-\omega_1 \tau^*)  \nonumber \\
{} & +\gamma \left[ M^0_2 \omega_2 \tau^*_{xy} + M^0_4  \omega_2 \tau^*_x (1-\omega_2 \tau^*_y) \right] \sin (y-\omega_2 \tau^*)\,, 
\nonumber \\
\hat \Delta_4 := {} & \tfrac{\partial}{\partial y} (\{F_{2,1}, S_0\} + \{ F_{2,0},S_1\}) 
= 8 \hat C p_y (p_x^2 + p_y^2)^{-3/4} \tfrac{\omega_1}{\omega_2} (1+\tfrac{\sqrt{2}}{p_x+p_y}) \left(\tfrac{p_x-p_y}{2} \right) \cos y  
\label{eq:Delta:7} \\
& + \gamma \left[ - M^0_3 \omega_1 \tau^*_{yy} - M^0_1 (\omega_1 \tau^*_y)^2 \right] \cos(x-\omega_1 \tau^*)  \nonumber \\
& + \gamma \left[ - M^0_4 \omega_2 \tau^*_{yy} - M^0_2 (1-\omega_2 \tau^*_y)^2 \right] \cos(y-\omega_2 \tau^*)  \nonumber \\
& + \gamma \left[  M^0_1 \omega_1 \tau^*_{yy} - M^0_3 (\omega_1 \tau^*_y)^2 \right] \sin(x-\omega_1 \tau^*) \nonumber  \\
& + \gamma \left[  M^0_2 \omega_2 \tau^*_{yy} - M^0_4 (1-\omega_2 \tau^*_y)^2 \right] \sin(y-\omega_2 \tau^*)\,.
\nonumber
\end{align}}

\vspace{-0.5cm}
\noindent Here we are using the notation
$
\gamma:=2(1+\sqrt{2}) \left(\tfrac{p_x-p_y}{2}\right).
$
Then, the transversality condition in the resonant region takes
the form
\begin{equation}\label{eq:cond:trans2}
\hat \Delta_1 \hat \Delta_4-\hat \Delta_2 \hat \Delta_3 \neq 0\,.
\end{equation}
\end{itemize}

Putting together the information gathered on the inner and the outer
dynamics, we can construct chains of invariant tori giving rise to large motions
in the action space. Assume that $\cD$ is the domain introduced in Section~\ref{ssec:scattering}. It is obvious that we can safely assume, by shrinking $\cD$ if necessary, that the domain $\cD_{\ep,\rm b}$ of the scattering map~\eqref{eq:scat:map} coincides with $\cD$. Then, we assume that we can choose a constant $L$ such that
such that for every $(x,y,p_x,p_y)\in \cD$ we have
\[
\left\{
\begin{array}{ll}
\Delta_1 \Delta_3 - \Delta_2^2 \neq 0, & \mbox{if $|p_x-p_y|\geq L$}\,, \\
\hat \Delta_1 \hat \Delta_4 -\hat \Delta_2 \hat \Delta_3 \neq 0, & \mbox{if $|p_x-p_y|\leq L$}\,.
\end{array}
\right.
\]
This condition is precisely Eq.~\eqref{eq:cond:trans:teo} in Theorem~\ref{teo:diffusion:ABC}. It then follows that we can find a sequence
$\{\cT_i\}_{i=0}^\infty$ of tori which are at a
distance $\cO(\ep)$ from each other and that satisfy $s_\ep(\cT_i) \pitchfork \cT_{i+1}$.
By applying Lemma 10.4 in~\cite{DLS06}, it turns out that these tori satisfy $W^{\urm}_{\cT_i} \pitchfork W^{\srm}_{\cT_{i+1}}$, that is, they form a transition chain. The claim of Theorem~\ref{teo:diffusion:ABC} (Arnold diffusion) then follows from the general theory presented in~\cite{DLS06,DLS13}.

\section{Rigorous verification of the hypotheses of the main theorem}\label{sec:explicit}

In this section we illustrate the effective verification of the hypotheses
of Theorem~\ref{teo:diffusion:ABC}, thus obtaining Arnold diffusion in the ABC
Hamiltonian system~\eqref{eq:scaled:ham}.
We rigorously evaluate the involved functions and obtain rigorous bounds
for the critical points with the help of the computer. 
Our approach is as simple as possible, in the sense that we
do not pretend to present a fast and efficient methodology to
study large regions of the phase-space systematically. Our interest here is to convince
the general reader that the hypotheses of Theorem~\ref{teo:diffusion:ABC}
can be rigorously checked with the help of a computer.

In rigorous computations, real numbers are substituted by intervals whose extrema
are computer representable real numbers. That is, when implementing interval operations in a computer, the result of an operation with intervals is an interval that includes the result. The reader can consult the recent introductory book~\cite{Tucker11} on rigorous computations. All the computations presented in this
section have been performed using FILIB~\cite{HofschusterK97}
that uses double precision arithmetics.

Rigorous bounds of the Melnikov coefficients are obtained in Section~\ref{ssec:cap:Mi}.
In Section~\ref{ssec:cap:tau} we control the critical point $\tau^*$ and its
derivatives with respect to the angles $(x,y)$. Finally, in Section~\ref{ssec:cap:delta}
we present a direct application of the previous ideas and we describe the
implementation details giving rise to Corollary~\ref{cor:diffusion:ABC:informal}.

\subsection{On the evaluation of the Melnikov coefficients}
\label{ssec:cap:Mi}

Given certain values of $p_x$ and $p_y$ (represented using interval arithmetics),
we are interested in the rigorous evaluation of the Melnikov coefficients $M_1$, $M_2$, $M_3$
and $M_4$ in Eqs.~\eqref{eq:M1:simple}--~\eqref{eq:M4:simple}. Let us recall that we are particularly
interested in the values $M_i^0=M_i(0,p_x,p_y)$ in
order to check the hypotheses of Theorem~\ref{teo:diffusion:ABC}.
If we denote by $f_i(\sigma)$ the function that we have to integrate to
evaluate $M_i^0$, and we introduce the notation $f_{i,+}(\sigma)=f_i(\sigma)$ for $\sigma>0$ and
$f_{i,-}(\sigma)=f_i(\sigma)$ for $\sigma<0$, we can write the expressions for
$M_i^0$ as follows
\[
M_i^0 = \int_{-\infty}^{\infty} f_i(\sigma) d\sigma = 
\int_{-\infty}^{-a} f_{i,-}(\sigma)d\sigma
+\int_{-a}^0 f_{i,-}(\sigma)d\sigma
+\int_0^{a} f_{i,+}(\sigma)d\sigma
+\int_{a}^{\infty} f_{i,+}(\sigma)d\sigma\,,
\]
where $a>0$ is a constant that will be fixed later.
The integrals at infinity (called \emph{tails} from now on) will be bounded
using the asymptotic properties discussed in
Section~\ref{ssec:geom:unper}. Of course, one can obtain general formulas for the tails in terms of a uniform control on the Hamiltonian and the Lyapunov exponent.
However, in this case, we present specific formulas for the ABC
system giving rise to sharper estimates of the tails. 
This allows us to keep the modulus of the tails under a prefixed tolerance
using a small value of $a$.

\begin{lemma}\label{lem:control:M}
The following bounds hold for the ABC system:
\[
\left | M_i^0 -\int_{-a}^0 f_{i,-}(\sigma)d\sigma - 
\int_0^{a} f_{i,+}(\sigma)d\sigma \right| \leq \Sigma_i\,,
\]
where $\Sigma_i$ are given by
\begin{align*}
\Sigma_1 = \Sigma_3 :={} & \hat B 
\frac{4}{\lambda^2} 
|p_y-\cos z^*|
\left|
\log(1+\rme^{-2 \lambda a}) \sin z^*
- 2 \bigg(\arctan \Big(\frac{\rme^{-\lambda a}-1}{\rme^{-\lambda a}+1}\Big) + \frac{\pi}{4} \bigg) \cos z^*
\right| + \hat B \Sigma_0\,, \\
\Sigma_2 = \Sigma_4 :={} & \hat C 
\frac{4}{\lambda^2} 
|p_x-\sin z^*|
\left|
\log(1+\rme^{-2 \lambda a}) \cos z^*
+ 2 \bigg(\arctan \Big(\frac{\rme^{-\lambda a}-1}{\rme^{-\lambda a}+1}\Big) + \frac{\pi}{4} \bigg) \sin z^*
\right| + \hat C \Sigma_0\,,
\end{align*}
where $\Sigma_0:= 8 \tfrac{\rme^{-\lambda a}}{\lambda} + \tfrac{8}{3} \tfrac{\rme^{-3 \lambda a}}{\lambda} +  4 \hat B |\arctan(\sinh(\lambda a))-\pi|$.
\end{lemma}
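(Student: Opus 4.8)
The plan is to estimate the two ``tail'' contributions
\[
M_i^0-\int_{-a}^{0}f_{i,-}(\sigma)\,d\sigma-\int_{0}^{a}f_{i,+}(\sigma)\,d\sigma=\int_{-\infty}^{-a}f_{i,-}(\sigma)\,d\sigma+\int_{a}^{\infty}f_{i,+}(\sigma)\,d\sigma,
\]
by majorising the integrand pointwise on each semi-infinite interval and then integrating explicitly; since $|\int g|\le\int|g|$ it suffices to bound $|f_{i,\pm}(\sigma)|$ by an explicitly integrable function. I will carry this out for $i=1$ in detail. The bound for $i=3$ is identical, because $f_1$ and $f_3$ differ only by swapping $\sin$ and $\cos$ in the oscillatory factors, which does not affect any absolute value, so $\Sigma_1=\Sigma_3$; and $i=2,4$ follow by replacing $\hat B,F_1,x_\pm,\cos z^*$ with $\hat C,F_2,y_\pm,\sin z^*$, whence $\Sigma_2=\Sigma_4$.

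The one non-trivial structural point is that the integrand of $M_1^0$ contains two non-decaying oscillatory terms, $(p_y-\cos z^*)\sin(x_\pm+\omega_1\sigma)$ and $-(p_y-\cos z^0)\sin(F_1+\omega_1\sigma)$; the estimate works only because their \emph{difference} decays exponentially, as $z^0(\sigma)\to z^*$ and $F_1(\sigma)\to x_\pm$ exponentially fast as $\sigma\to\pm\infty$ (cf.\ Section~\ref{ssec:geom:unper} and Eqs.~\eqref{eqsep},~\eqref{eq:F1:F2:ABC}). To expose this, I would add and subtract $(p_y-\cos z^*)\sin(F_1+\omega_1\sigma)$ and write
\[
\frac{f_{1,\pm}(\sigma)}{\hat B}=(p_y-\cos z^*)\big[\sin(x_\pm+\omega_1\sigma)-\sin(F_1+\omega_1\sigma)\big]+(\cos z^0-\cos z^*)\sin(F_1+\omega_1\sigma)-p_z^0\cos(F_1+\omega_1\sigma),
\]
and then bound the three summands one by one.

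For the first summand one uses $|\sin(x_\pm+\omega_1\sigma)-\sin(F_1+\omega_1\sigma)|\le|x_\pm-F_1(\sigma)|$ and the closed form~\eqref{eq:F1:F2:ABC}, which expresses $x_\pm-F_1(\sigma)$ as an explicit linear combination of $1\mp\tanh(\lambda\sigma)$ and $\mathrm{sech}(\lambda\sigma)$ (both positive and exponentially small on the relevant tail); the primitives $\int_a^\infty(1-\tanh\lambda\sigma)\,d\sigma=\lambda^{-1}\log(1+\rme^{-2\lambda a})$ and $\int_a^\infty\mathrm{sech}(\lambda\sigma)\,d\sigma=2\lambda^{-1}\arctan(\rme^{-\lambda a})$, together with $\arctan(\rme^{-\lambda a})=\arctan\!\big((\rme^{-\lambda a}-1)/(\rme^{-\lambda a}+1)\big)+\pi/4$, reproduce the first term of $\Sigma_1$ after multiplication by $|p_y-\cos z^*|$ and summation over the two tails. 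For the second summand, $|\sin(F_1+\omega_1\sigma)|\le1$ and the circular distance between $z^0(\sigma)$ and $z^*$ equals $4\arctan(\rme^{-\lambda|\sigma|})\le4\rme^{-\lambda|\sigma|}$, so $|\cos z^0-\cos z^*|$ is controlled by this quantity (a cubic Taylor refinement producing the $\tfrac83\rme^{-3\lambda a}/\lambda$ term); integrating yields the $8\rme^{-\lambda a}/\lambda$ term of $\Sigma_0$. For the third summand, $|p_z^0(\sigma)\cos(\cdot)|\le p_z^0(\sigma)=2\lambda/\cosh(\lambda\sigma)$, whose tail integral is $\pi-2\arctan(\sinh\lambda a)$ (a Gudermannian primitive), giving the $\arctan(\sinh\lambda a)$ term of $\Sigma_0$. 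Collecting the three contributions over both tails gives $\le\Sigma_1$.

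I do not expect a genuine obstacle: once the integrand is grouped so that the near-cancellation of the oscillatory terms is visible, every pointwise bound is elementary and every tail integral has an elementary closed antiderivative. The points needing care are (i) the $2\pi$-ambiguity in $z^0(\sigma)-z^*$ (which tends to $2\pi$, not $0$, along the positive tail), handled by working with the circular distance to $z^*$; (ii) checking that the estimates on the two tails $\{\sigma>a\}$ and $\{\sigma<-a\}$ coincide, which follows from the reflection symmetry $p_z^0(-\sigma)=p_z^0(\sigma)$, $z^0(-\sigma)\equiv 2z^*-z^0(\sigma)\pmod{2\pi}$ of the homoclinic orbit (this is what produces the factor $2$ in the coefficients of $\Sigma_i$ and $\Sigma_0$); and (iii) tracking signs rather than applying the triangle inequality too generously, which is what is needed to obtain the stated (sharp) form of the first term of $\Sigma_1$ rather than a cruder majorant of the same exponential order. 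This last bookkeeping is the only laborious part.
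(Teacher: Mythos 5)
Your decomposition of $f_{1,\pm}/\hat B$ into three pieces — $(p_y-\cos z^*)[\sin(x_\pm+\omega_1\sigma)-\sin(F_1+\omega_1\sigma)]$, $(\cos z^0-\cos z^*)\sin(F_1+\omega_1\sigma)$, and $-p_z^0\cos(F_1+\omega_1\sigma)$ — is exactly the $f_{1,\pm}^{(1)}+f_{1,\pm}^{(2)}+f_{1,\pm}^{(3)}$ split used in the paper, and your pointwise bounds (Lipschitz estimate for the sine difference together with the closed form of $F_1-x_\pm$, circular distance for $|\cos z^0-\cos z^*|$, and $|p_z^0|=2\lambda\,\mathrm{sech}(\lambda\sigma)$) followed by the same elementary primitives are likewise identical to the paper's; the only cosmetic difference is that you integrate the two exponentially small profiles $1\mp\tanh(\lambda\sigma)$ and $\mathrm{sech}(\lambda\sigma)$ separately, while the paper writes down the antiderivative $g(\sigma)$ of $F_1-x_+$ and evaluates the telescoping sum, which produces the same $\log(1+\rme^{-2\lambda a})$ and $\arctan$ terms. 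This is the paper's proof, correctly reproduced.
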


\begin{proof}
We will only consider the case of $M_1^0$ because the control of the tails of $M_2^0$, $M_3^0$ and $M_4^0$ is completely analogous. We first split the function $f_{1,+}$ into three terms as
\[
f_{1,+}(\sigma)= f_{1,+}^{(1)}+f_{1,+}^{(2)}+f_{1,+}^{(3)}\,.
\]
In this splitting each term is given by
\begin{align*}
f_{1,+}^{(1)} := {} & \hat B (p_y-\cos z^*) (\sin(x_+ + \omega_1 \sigma)-\sin(F_1(\sigma)+\omega_1 \sigma))\,,\\
f_{1,+}^{(2)} := {} & \hat B \sin(F_1(\sigma)+\omega_1 \sigma)(\cos z^0(\sigma) - \cos z^*)\,,\\
f_{1,+}^{(3)} := {} & - \hat B p_z^0 (\sigma) \cos(F_1(\sigma)+\omega_1 \sigma)\,.
\end{align*}
Then, a straightforward computation shows that
\begin{align}
\left| \int_a^\infty f^{(1)}_{1,+}(\sigma) d\sigma \right| \leq {} & \hat B |p_y- \cos z^*| \left| \int_a^\infty (F_1(\sigma)-x_+) d\sigma \right|\,, \label{eq:M1,1} \\
\left| \int_a^\infty f^{(2)}_{1,+}(\sigma) d\sigma \right| \leq {} & \hat B 
\left| \int_a^\infty (4 \arctan \rme^{\lambda \sigma} - 2\pi)d\sigma \right|\,, \label{eq:M1,2} \\
\left| \int_a^\infty f^{(3)}_{1,+}(\sigma) d\sigma \right| \leq {} & \hat B 
\left| \int_a^\infty \frac{2\lambda}{\cosh(\lambda \sigma)} d\sigma \right|\,. \label{eq:M1,3}
\end{align}
By the asymptotic properties discussed in Section~\ref{ssec:geom:unper} we know that
these three integrals are convergent.
Next, we give some explicit expressions to control the above integrals. To this end, we use the 
expression of the primitives of the functions that we are integrating. First, we introduce
\[
g(\sigma):=\int (F_1(\sigma)-x_+) d\sigma = \frac{2}{\lambda^2} \log(\cosh(\lambda \sigma)) \sin z^* - \bigg( \frac{4}{\lambda^2} \arctan(\tanh(\tfrac{\lambda \sigma}{2}))-\frac{2\sigma}{\lambda}\bigg) \cos z^* - x_+ \sigma\,,
\]
which allows us to control Eq.~\eqref{eq:M1,1} in terms of the expression $g(\infty)-g(a)$. However, the direct evaluation of this expression with a computer presents
a huge rounding error. A more suitable formula is obtained using the limit
\[
\lim_{\sigma \rightarrow \infty} \log(\cosh(\lambda\sigma)) = 
\lim_{\sigma \rightarrow \infty} \log(\tfrac{\rme^{\lambda\sigma}+\rme^{-\lambda\sigma}}{2}) = \lim_{\sigma \rightarrow \infty} (\lambda \sigma - \log 2)\,,
\]
which allows us to control the term~\eqref{eq:M1,1} as follows
\[
\left| \int_a^\infty f^{(1)}_{1,+}(\sigma) d\sigma \right| \leq \hat B \frac{2}{\lambda^2} |p_y - \cos z^*|
\left|
 \log(1+\rme^{-2 \lambda a}) \sin z^*
- 2 \bigg(\arctan \Big(\frac{\rme^{-\lambda a}-1}{\rme^{-\lambda a}+1}\Big) + \frac{\pi}{4} \bigg) \cos z^*
\right|.
\]
Using Taylor series, it is easy to check that the term~\eqref{eq:M1,2} is bounded as
\[
\left| \int_a^\infty (4 \arctan \rme^{\lambda \sigma} - 2\pi)d\sigma \right|
\leq 4 \frac{\rme^{-\lambda a}}{\lambda} + \frac{4}{3} \frac{\rme^{-3 \lambda a}}{\lambda}\,,
\]
and that the term~\eqref{eq:M1,3} is estimated as
\[
\left| \int_a^\infty \frac{2\lambda}{\cosh(\lambda \sigma)} d\sigma \right| \leq 2 |\arctan(\sinh(\lambda a))-\pi|\,.
\]
Analogously, we can estimate the term 
$$\int_{-\infty}^{-a} f_{1,-}(\sigma)d\sigma\,,$$
thus proving the lemma.
\end{proof}

We use Lemma~\ref{lem:control:M} to rigorously control the Melnikov coefficients. Specifically, we evaluate directly the obtained expressions of $\Sigma_i$
using interval arithmetics. The integrals $\int_{-a}^0f_{i,-}$ and $\int_0^af_{i,+}$ are controlled using Simpson's rule
with rigorous bounds on the error, which are obtained using explicit
formulas for the 4th-order derivatives of the functions $f_i(\sigma)$
computed with a symbolic manipulator.

Next, we illustrate the rigorous evaluation of the Hypothesis $\mathbf{A}_1$ of Theorem~\ref{teo:diffusion:ABC}.
Obviously, it is enough to consider the coefficients $M_i^0$ for the parameters $\hat B=\hat C=1$
and store conveniently the obtained values.
If we are interested in other values of $\hat B$ and $\hat C$ we simply have to scale
the previously computed values.
In Table~\ref{tab:Melnikov} we present some rigorous enclosures of the coefficients $M_i^0$
corresponding to $\hat B= \hat C=1$, $p_x \in [0.4, 0.4001]$ and different
interval values of $p_y$. To control the tails
we use Lemma~\ref{lem:control:M}
with $a=20$ and to enclose the finite integrals we
use Simpson's rule with $130$ subintervals.
This implementation parameters are enough to guarantee 
that the coefficients $M_i^0$ 
do not vanish for a non-empty set $\cI$ of momenta. We can obtain a similar result
for a much larger domain $\cI$ by systematically performing this computation. 

\begin{table}[!h]
\centering
{\scriptsize
\begin{tabular}{|c| c c c c|}
\hline
$p_y$ & $M_1^0$ & $M_2^0$ & $M_3^0$ & $M_4^0$ \\
\hline \hline
$[0.1, 0.1001]$ & $[-5.1237, -4.9193]$ & $[-11.314, -10.972]$ & $[-1.282, -1.0754]$ & $[1.4611, 1.807]$ \\ 
$[0.2, 0.2001]$ & $[-5.3464, -5.1279]$ & $[-9.185, -8.8668]$ & $[-2.2718, -2.0471]$ & $[-3.0694, -2.7531]$ \\ 
$[0.3, 0.3001]$ & $[-5.911, -5.6582]$ & $[-6.5333, -6.2423]$ & $[-3.3557, -3.0967]$ & $[-5.5463, -5.2556]$ \\ 
$[0.4, 0.4001]$ & $[-6.4914, -6.2559]$ & $[-4.4201, -4.1843]$ & $[-4.4214, -4.1829]$ & $[-6.4917, -6.2557]$ \\ 
$[0.5, 0.5001]$ & $[-7.0075, -6.7721]$ & $[-2.9704, -2.7501]$ & $[-5.2892, -5.0544]$ & $[-6.7487, -6.5249]$ \\ 
$[0.6, 0.6001]$ & $[-7.4463, -7.2257]$ & $[-1.988, -1.7805]$ & $[-5.8843, -5.6657]$ & $[-6.701, -6.4904]$ \\ 
$[0.7, 0.7001]$ & $[-7.8251, -7.6315]$ & $[-1.3078, -1.129]$ & $[-6.2495, -6.0596]$ & $[-6.5182, -6.3291]$ \\ 
$[0.8, 0.8001]$ & $[-8.1753, -7.9784]$ & $[-0.85053, -0.66663]$ & $[-6.4659, -6.2811]$ & $[-6.2954, -6.0987]$ \\ 
$[0.9, 0.9001]$ & $[-8.4879, -8.2889]$ & $[-0.53277, -0.3423]$ & $[-6.575, -6.3929]$ & $[-6.0543, -5.8485]$ \\ 
\hline
\end{tabular}
\caption{{\footnotesize
We show rigorous enclosures of the coefficients $M_i^0$,
for the values $\hat B= \hat C=1$ and $p_x \in [0.4, 0.4001]$.}}
\label{tab:Melnikov}
}
\end{table}

Finally, let us illustrate how to check Hypothesis $\mathbf{A}_2$ of Theorem~\ref{teo:diffusion:ABC}. It turns out that,
for $p_x\in[0.4,0.4001]$ and all the intervals $p_y$ in Table~\ref{tab:Melnikov},
the four critical points of the map $(x,y) \mapsto \cL(x,y)$ in Eq.~\eqref{eq:crit:cL}
are non-degenerate and given by two maxima and two
saddle points, so that we can use the arguments in~\cite{DH11} to
see that there is a unique smooth critical point $\tau^*$ of the function
$\tau \mapsto \cL(x-\omega_1 \tau,y-\omega_2 \tau,p_x,p_y)$. In the
following section we discuss the rigorous enclosure of $\tau^*$ and
its derivatives.

\subsection{On the evaluation of the critical points}\label{ssec:cap:tau}

We discuss a simple methodology to rigorously
enclose the critical points of the function
\begin{align}
\tau \longmapsto \cL(x & -\omega_1 \tau, y-\omega_2 \tau,p_x,p_y) \nonumber \\
& = M_1^0 \cos(x-\omega_1 \tau) + M_2^0\cos(y-\omega_2 \tau)
+ M_3^0 \sin(x-\omega_1 \tau) + M_4^0\sin(y-\omega_2 \tau)\,, \label{eq:crit:poin}
\end{align}
where $(x,y,p_x,p_y) \in \TT^2 \times \cI$,
and the coefficients $M_i^0=M_i(0,p_x,p_y)$ are given by
Eqs.~\eqref{eq:M1:simple}--~\eqref{eq:M4:simple}.
The critical points $\tau^*=\tau^*(x,y,p_x,p_y)$ of~\eqref{eq:crit:poin} are characterized by the zeros
of the function
\[
Q(\tau):= \omega_1 \bigg(-M_1^0 \sin (x-\omega_1 \tau) + M_3^0 \cos(x-\omega_1 \tau)\bigg) + \omega_2 \bigg(-M_2^0 \sin(y-\omega_2 \tau) + M_4^0 \cos(y-\omega_2 \tau)\bigg)\,,
\]
which, of course, depends on the variables $(x,y,p_x,p_y)$. There are several techniques in computer-assisted proofs that allow us to study
solutions of nonlinear equations as above, e.g. the interval Newton method~\cite{Tucker11}. However, in the following discussion we choose to 
use the simplest possible method with the aim of convincing a reader that is not familiar
with computer-assisted methods. More advanced techniques
will give the possibility to validate large regions of phase-space with
a reduced computational cost. This is not the aim in this article, since the
required techniques are not related with the ideas that we want to highlight
and they would require to provide a larger amount of computational and implementation details.
To enclose $\tau^*$ we proceed using a bisection-like procedure:
\begin{itemize}
\item Given $(x,y,p_x,p_y)$, that may be numbers or interval values, we
enclose $M_i^0=M_i(0,p_x,p_y)$ following Section~\ref{ssec:cap:Mi}.

\item Given an integer $N>1$, we consider an increasing sequence
of interval values 
$\tau_i\in[(i-1)/N,i/N]$.
Then we compute
$Q(\tau_i)$. While $0 \notin Q(\tau_i)$ we increase the index $i$, until we obtain
an interval such that $0 \in Q(\tau_i)$. This gives a lower estimate for $\tau^*$.

\item After the previous computations, we continue the process of increasing the
index $i$ and computing $Q(\tau_i)$. When we reach an interval such that $0 \notin Q(\tau_i)$
then we have obtained an upper estimate for $\tau^*$.
\end{itemize}

As a result of the above procedure, we obtain
an interval enclosure of the critical point $\tau^*$.
Then, the derivatives of $\tau^*$ with respect to $(x,y)$
are computed using the following equations:
\[
\tau_\alpha^*     = - \frac{Q_\alpha(\tau^*)}{Q_\tau(\tau^*)}, \qquad
\tau_{\alpha\beta}^* = - \frac{Q_{\alpha\beta}(\tau^*)+Q_{\alpha\tau}(\tau^*) \tau^*_\beta + (Q_{\beta\tau}(\tau^*) + Q_{\tau\tau}(\tau^*) \tau^*_\beta) \tau^*_\alpha}{Q_\tau(\tau^*)}\,,
\]
where the subscripts denote, as usual, partial differentiation, and $\alpha$ and $\beta$ can be chosen to be $x$ or $y$.
For the case of the ABC system, we have
{\allowdisplaybreaks
\begin{align*}
Q_x (\tau) = {} & -M_1^0 \cos(x-\omega_1 \tau) \omega_1-M_3^0\sin(x-\omega_1 \tau) \omega_1\,, \\
Q_y (\tau)= {} & -M_2^0 \cos(y-\omega_2 \tau) \omega_2-M_4^0\sin(y-\omega_2 \tau) \omega_2\,, \\
Q_\tau(\tau)= {} & M_1^0\cos(x-\omega_1 \tau) \omega_1^2 + M_3^0 \sin(x-\omega_1 \tau) \omega_1^2 + M_2^0 \cos(y-\omega_2 \tau) \omega_2^2 + M_4^0 \sin(y-\omega_2 \tau) \omega_2^2\,, \\
Q_{xx} (\tau)= {} & M_1^0 \sin(x-\omega_1 \tau) \omega_1 - M_3^0 \cos(x-\omega_1 \tau) \omega_1\,, \\
Q_{yy} (\tau)= {} & M_2^0 \sin(y-\omega_2 \tau) \omega_2 - M_4^0 \cos(y-\omega_2 \tau) \omega_2\,, \\
Q_{x\tau}(\tau)={} & -M_1^0 \sin(x-\omega_1 \tau) \omega_1^2 - M_3^0 \cos(x-\omega_1 \tau) \omega_1^2\,, \\
Q_{y\tau}(\tau)={} & -M_2^0 \sin(y-\omega_2 \tau) \omega_2^2 - M_4^0 \cos(y-\omega_2 \tau) \omega_2^2\,, \\
Q_{\tau\tau}(\tau)={} & M_1^0 \sin(x-\omega_1 \tau) \omega_1^3 -M_3^0 \cos(x-\omega_1 \tau) \omega_1^3 + M_2^0 \sin(y-\omega_2 \tau) \omega_2^3- M_4^0 \sin(y-\omega_2 \tau) \omega_2^3\,.
\end{align*}}

In Table~\ref{tab:critical:points} we illustrate these computations taking $\hat B = \hat C=1$ and considering the values of $p_x$ and $p_y$
used in Table~\ref{tab:Melnikov}. In all the cases, we fix the angles as $x=y=0$, and we
compute the critical point $\tau^*$ using $N=100$.

\begin{table}[!h]
\centering
{\scriptsize
\begin{tabular}{|c| c c c|}
\hline
$p_y$ & $\tau^*$ & $\tau^*_x$ & $\tau^*_y$ \\
\hline \hline
$[0.1, 0.1001]$ & $[2.34, 2.43]$ & $[0.74112, 0.91778]$ & $[-0.41917, -0.3401]$ \\ 
$[0.2, 0.2001]$ & $[2.71, 2.81]$ & $[0.47369, 0.73294]$ & $[0.29594, 0.45414]$ \\ 
$[0.3, 0.3001]$ & $[2.38, 2.45]$ & $[0.43611, 0.54436]$ & $[0.41137, 0.51753]$ \\ 
$[0.4, 0.4001]$ & $[2.1, 2.16]$ & $[0.40377, 0.50552]$ & $[0.40346, 0.5051]$ \\ 
$[0.5, 0.5001]$ & $[1.9, 1.96]$ & $[0.37234, 0.4717]$ & $[0.39116, 0.50717]$ \\ 
$[0.6, 0.6001]$ & $[1.75, 1.81]$ & $[0.33403, 0.42893]$ & $[0.38246, 0.51976]$ \\ 
$[0.7, 0.7001]$ & $[1.61, 1.67]$ & $[0.27007, 0.36171]$ & $[0.38473, 0.54755]$ \\ 
$[0.8, 0.8001]$ & $[1.48, 1.55]$ & $[0.1825, 0.2837]$ & $[0.37799, 0.60038]$ \\ 
$[0.9, 0.9001]$ & $[1.36, 1.44]$ & $[0.092554, 0.18709]$ & $[0.37117, 0.65404]$ \\ 
\hline \hline
$p_y$ & $\tau^*_{xx}$ & $\tau^*_{xy}$ & $\tau^*_{yy}$ \\
\hline \hline
$[0.1, 0.1001]$ & $[0.45478, 1.4543]$ & $[-0.25048, 0.00020579]$ & $[-0.40322, -0.24827]$ \\ 
$[0.2, 0.2001]$ & $[0.48773, 1.5132]$ & $[0.22382, 0.81026]$ & $[-0.363, 0.057794]$ \\ 
$[0.3, 0.3001]$ & $[0.41841, 0.76705]$ & $[0.32713, 0.5756]$ & $[0.18447, 0.45382]$ \\ 
$[0.4, 0.4001]$ & $[0.24271, 0.51779]$ & $[0.34875, 0.58267]$ & $[0.40865, 0.70313]$ \\ 
$[0.5, 0.5001]$ & $[0.066261, 0.29543]$ & $[0.36568, 0.62845]$ & $[0.62318, 1.0275]$ \\ 
$[0.6, 0.6001]$ & $[-0.11457, 0.079559]$ & $[0.38581, 0.69138]$ & $[0.85943, 1.4404]$ \\ 
$[0.7, 0.7001]$ & $[-0.32399, -0.12284]$ & $[0.37369, 0.72931]$ & $[1.0946, 1.9373]$ \\ 
$[0.8, 0.8001]$ & $[-0.53479, -0.27236]$ & $[0.29473, 0.75013]$ & $[1.2426, 2.6407]$ \\ 
$[0.9, 0.9001]$ & $[-0.68739, -0.39066]$ & $[0.18037, 0.65021]$ & $[1.3104, 3.4064]$ \\ 
\hline
\end{tabular}
\caption{{\footnotesize
We show rigorous enclosures of the critical point $\tau^*$
and its derivatives with respect to $(x,y)$
for the values $B=C=1$, $p_x \in [0.4, 0.4001]$, and
$x=y=0$.}}
\label{tab:critical:points}
}
\end{table}

\subsection{On the verification of the transversality conditions}\label{ssec:cap:delta}

With the rigorous estimates obtained in Sections~\ref{ssec:cap:Mi} and~\ref{ssec:cap:tau},
let us now explain how to use interval arithmetics to check the transversality condition in Hypothesis $\mathbf{A}_3$
of Theorem~\ref{teo:diffusion:ABC}. This consists in enclosing the functions
$\{\Delta_i\}_{i=1,2,3}$ and $\{\hat \Delta_i\}_{i=1,2,3,4}$
given by Eqs.~\eqref{eq:Delta:1}--~\eqref{eq:Delta:7}.
In order to check the condition in the resonant region, we observe that $\hat \Delta_3$
and $\hat \Delta_4$ are proportional to $p_x-p_y$ so that they tend to zero when we
approach the resonance. For this reason, we eliminate the factor $p_x-p_y$ in the computation
of the expression  $\hat \Delta_1 \hat \Delta_4 -\hat \Delta_2 \hat \Delta_3$. This
allows us to check that the condition holds in any tubular neighborhood that is close enough to the resonance.

An important observation it that we have the freedom of choosing the angles $(x,y)$
to evaluate the critical point~$\tau^*$. In fact, there is no
optimal way to choose the angles $(x,y)$ in order to verify the transversality conditions.
The reason is that optimal values selected numerically may fail to fulfill such conditions when rigorous interval operations are used. This is because the enclosed value of $Q_\tau(\tau^*)$
may be very close to zero (or even contain this point), thus producing a large enclosure
in the evaluation of the conditions. Our experience in this problem is that choosing random values of $(x,y)$,
until we reach a suitable pair,
is the simplest and fast strategy.

For example, in Table~\ref{tab:transv} we present some rigorous enclosures of the
transversality conditions in Hypothesis $\mathbf{A}_3$
of Theorem~\ref{teo:diffusion:ABC},
corresponding to $\hat B= \hat C=1$, $p_x \in [0.4, 0.40001]$, and different
interval values of $p_y$. To control the tails
we use Lemma~\ref{lem:control:M}
with $a=20$ and to enclose the finite integrals we
use Simpson's rule with $300$ subintervals. To obtain the
critical point and its derivatives, we use the approach described
in Section~\ref{ssec:cap:tau} with $N=300$.
This implementation parameters are enough to guarantee 
that the functions $\Delta_1 \Delta_3 - \Delta_2^2$
and $\hat \Delta_1 \hat \Delta_4 -\hat \Delta_2 \hat \Delta_3$
do not vanish for a non-empty set $\cI$ of momenta. By computing
simultaneously the condition in the non-resonant region (4th column
of Table~\ref{tab:transv}) and
in the resonant region (5th column
of Table~\ref{tab:transv}), it is clear that we can select a number $L>0$
that allows us to obtain diffusing orbits crossing the resonance.

\begin{table}[!h]
\centering
{\scriptsize
\begin{tabular}{|c| c c c c|}
\hline
$p_y$ & $x$ & $y$ & $\Delta_1 \Delta_3 - \Delta_2^2$ & $\frac{\hat \Delta_1 \hat \Delta_4 -\hat \Delta_2 \hat \Delta_3}{p_x-p_y}$ \\
\hline \hline
$[0.1, 0.10001]$ & $5.2923$ & $0.93117$ & $[-26.899, -3.5905]$ & $[2.3401, 101.72]$\\
$[0.2, 0.20001]$ & $2.7665$ & $0.55732$ & $[8.3643, 23.588]$   & $[-68.148, -6.1705]$\\
$[0.3, 0.30001]$ & $1.1969$ & $0.37322$ & $[26.796, 48.326]$   & $[-135.08, -61.886]$\\
$[0.4, 0.40001]$ & $3.7869$ & $4.19530$ & $[11.818, 28.517]$   & $[-73.026, -30.082]$\\
$[0.5, 0.50001]$ & $4.9160$ & $0.61701$ & $[-13.819, -6.4592]$ & $[18.387, 34.8]$\\
$[0.6, 0.60001]$ & $1.7342$ & $2.53840$ & $[-17.72, -3.2241]$  & $[8.7055, 41.964]$\\
$[0.7, 0.70001]$ & $4.7928$ & $5.74470$ & $[-16.15, -6.0621]$  & $[13.441, 36.899]$\\
$[0.8, 0.80001]$ & $5.0542$ & $6.19650$ & $[-20.256, -6.1962]$ & $[12.115, 45.704]$\\
$[0.9, 0.90001]$ & $1.5249$ & $4.18960$ & $[-6.1939, -0.93573]$& $[0.38193, 12.601]$\\
\hline
\end{tabular}
\caption{{\footnotesize
We show rigorous enclosures of the transversality conditions of Theorem~\ref{teo:diffusion:ABC}
for the values $\hat B= \hat C=1$, $p_x \in [0.4, 0.40001]$, for different interval values of $p_y$. The critical point $\tau^*$ and its derivatives are evaluated at the points $(x,y)$
shown in the 2nd and the 3rd columns.}}
\label{tab:transv}
}
\end{table}

Finally, we describe the implementation parameters of the CAP that lead to
the result stated in Corollary~\ref{cor:diffusion:ABC:informal}. We
take $\hat B=10$ and $\hat C=0.1$ and we
divide the set $\cI=[0.1,0.9] \times [0.5,0.9]$ in subsets
of size $10^{-4} \times 10^{-4}$. For every subset, we
use Lemma~\ref{lem:control:M}
with $a=20$ and to enclose the finite integrals we
use Simpson's rule with $130$ subintervals. To obtain the
critical point and its derivatives, we use the approach described
in Section~\ref{ssec:cap:tau} with $N=100$. For all these sets
we obtain that the function $\Delta_1 \Delta_3 - \Delta_2^2$
does not vanish in $\cI$, and the function
$\hat \Delta_1 \hat \Delta_4 -\hat \Delta_2 \hat \Delta_3$
only vanishes on the resonant line $p_x=p_y$.

\section*{Acknowledgements}

The authors are very grateful to A. Delshams, M. Guardia, A. Haro, G. Huguet, R. de la Llave, and T.M. Seara for useful discussions and suggestions. We especially want to thank T.M. Seara for her patience and kindness answering several questions on the papers~\cite{DLS06,DLS08,DLS13}. The authors are supported by the ERC Starting Grant~335079. This work is supported in part by the ICMAT--Severo Ochoa grant SEV-2011-0087 and the grants
MTM2012-3254 (A.L.) and 2014SGR1145 (A.L.).

\end{document}